\documentclass[11pt]{article}
\usepackage[utf8]{inputenc}
\usepackage{graphicx}
\usepackage{subcaption}
\usepackage{amsfonts}
\usepackage{amsmath}
\usepackage{amssymb}
\usepackage{bm}
\usepackage{url}
\usepackage{fancyhdr}
\usepackage{float}
\usepackage{indentfirst}
\usepackage{enumerate}
\usepackage{dsfont,footmisc}
\usepackage{comment}
\usepackage{caption}
\usepackage{multirow}
\usepackage{tabularx}
\usepackage{booktabs}
\usepackage{threeparttable}
\usepackage{soul}

\usepackage{setspace}

\usepackage[misc]{ifsym}

\usepackage{subcaption}
\usepackage{caption}
\usepackage[font=footnotesize]{caption}

\usepackage{float}
\usepackage{amsthm}
\usepackage{color}
\usepackage{natbib}
\usepackage[british]{babel}
\usepackage[colorlinks=true,citecolor=blue]{hyperref}
\usepackage[right,pagewise,mathlines]{lineno}
\usepackage{tikz}

\newcommand{\E}{\mathbb{E}}
\newcommand{\R}{\mathbb{R}}

\renewcommand{\ge}{\geqslant}
\renewcommand{\le}{\leqslant}
\renewcommand{\geq}{\geqslant}
\renewcommand{\leq}{\leqslant}
\renewcommand{\epsilon}{\varepsilon}

\theoremstyle{plain}
\newtheorem{theorem}{Theorem}[section]

\newtheorem{lemma}{Lemma}[section]
\newtheorem{proposition}{Proposition}[section]

\theoremstyle{definition}
\newtheorem{definition}{Definition}[section]

\theoremstyle{remark}

\theoremstyle{definition}

\numberwithin{equation}{section} \numberwithin{theorem}{section}
\renewcommand{\cite}{\citet}

\DeclareMathOperator*{\argmin}{arg\,min}
\newcommand{\be}{\begin{eqnarray}}
\newcommand{\ee}{\end{eqnarray}}
\newcommand{\by}{\begin{eqnarray*}}
\newcommand{\ey}{\end{eqnarray*}}
\newcommand{\bn}{\begin{enumerate}}
\newcommand{\en}{\end{enumerate}}
\newcommand{\bi}{\begin{itemize}}
\newcommand{\ei}{\end{itemize}}

\title{Robust  enhanced    index tracking portfolio selection under distributional  uncertainty}
\author{Jun Cai\thanks{Department of Statistics and Actuarial Science, University of Waterloo, 200 University Ave. West, Waterloo, N2L 3G1, ON, Canada. \Letter~{\url{jcai@uwaterloo.ca}}}, \  \  \ \  Zhiqiao Song\thanks{Department of Statistics and Actuarial Science, University of Waterloo, 200 University Ave. West, Waterloo, N2L 3G1, ON, Canada. \Letter~{\url{z22song@uwaterloo.ca}}},  \ \ \  \
}
\date{\today}

\begin{document}

\maketitle

	\begin{abstract}

The enhanced index tracking (EIT) portfolio selection problem aims to construct a portfolio that is expected to 
outperform a benchmark index. In practice, investors face uncertainty in the joint distribution of asset and index 
losses, as the true distribution is typically unknown and only partial information is available. Moreover, portfolio selection 
requires balancing the trade-off between mean loss and downside risk while targeting higher returns. Effectively modeling and 
managing these factors in EIT portfolio selection is therefore of both practical and research interest.

To address these challenges, this paper proposes robust EIT portfolio selection models under distributional uncertainty, guided 
by Pareto-optimality theory. The models minimize a convex combination of worst-case mean loss and downside risk, subject to 
a constraint on worst-case mean return. Two uncertainty sets are considered: one assuming a known mean loss vector 
and covariance matrix, and a more general set with unknown moments. Closed-form solutions are derived for the former, 
while the latter reduces to tractable convex optimization problems. Empirical results based on real market data show that the proposed 
models outperform not only the tracked index but also the equally weighted portfolio, robust mean-variance models, 
and robust variance models in terms of out-of-sample cumulative wealth, as well as Sharpe, Sortino, and Omega ratios. 
Moreover, portfolios that minimize a convex combination of worst-case mean loss and downside risk consistently 
outperform those that minimize downside risk alone. We also provide practical guidance for selecting optimal 
weighting coefficients to effectively balance risk and return.
\\

  {\bf Keywords}: Enhanced index tracking, distributional uncertainty, robust portfolio selection, downside risk. 
  	\end{abstract}

\maketitle

\newpage

\section{Introduction}

 Portfolio selection has been a central problem in finance since the seminal work of \citet{markowitz1952}. 
 Markowitz was the first to formulate the portfolio selection problem as an optimization problem by minimizing 
 the variance of portfolio returns (or, equivalently, portfolio losses), subject to a constraint on the mean return
  (or mean loss). This framework captures the fundamental trade-off between risk and return. Since then, 
  numerous portfolio optimization models have been developed from various perspectives, employing 
  different objective functions and constraints.

In financial markets, an index, composed of a representative set of assets,  is widely used as a benchmark to 
evaluate portfolio performance. Investors, including individuals, mutual fund managers, and pension fund managers, 
typically seek to construct portfolios that outperform a benchmark index.

Enhanced index tracking (EIT) aims to construct portfolios that outperform a given benchmark. A common modelling 
approach is to minimize the downside (right-tail) risk of portfolio loss relative to the index loss. Intuitively, investors 
are less concerned with outcomes where the portfolio outperforms the index (left tail), while greater downside risk 
indicates poorer performance. Downside risk is often measured using upper partial moments, defined as moments 
of portfolio loss exceeding the index loss or a specified threshold. These measures have been widely used in portfolio 
optimization, insurance, and economics (see, e.g., \cite{roy1952safety, bawa1975optimal, MTXY93, chen2011tight, 
KLMY17, LCLX19, LSW20, YHLH21, LMYZ22}).

EIT models have been extensively studied in the literature. For example, \citet{beasley2003evolutionary} consider 
index tracking with transaction costs, while \citet{canakgoz2009mixed} propose mixed-integer linear programming 
formulations. \citet{guastaroba2016linear} develop models based on the Omega ratio, and \citet{HLY18} study 
an index tracking model with a conditional value-at-risk (CVaR) constraint. \citet{guastaroba2020enhanced} propose 
bi-criteria models using weighted CVaR measures, and \citet{beraldi2022enhanced} analyze chance-constrained 
formulations. More recently, \citet{HLY22} develop models that balance upper and lower partial moments of 
portfolio loss. \citet{CP2025} propose models that not only stochastically dominate a given benchmark but also 
generate excess returns relative to a benchmark index.

Two key challenges arise in EIT modeling. The first concerns the computation of objective functions and constraints,  
such as partial moments, value-at-risk (VaR), and CVaR,  which typically require full knowledge of the joint 
distribution of asset returns and the benchmark index. In practice, this distribution is unknown, and only partial 
information is available. Existing approaches include parametric assumptions (e.g., multivariate normal or $t$ distributions; 
 see \cite{KLMY17, LMYZ22}) and empirical approximations based on historical data (e.g., \citet{LSW20, YHLH21, HLY18, HLY22}). 
  An increasingly important alternative is distributionally robust optimization (DRO), which minimizes worst-case objective 
  values over a set of plausible distributions (an uncertainty set). This framework has attracted significant
   attention in operations research, insurance, and finance. See, for example, 
    \cite{07PI, chen2011tight, lejeune2012game, ling2014robust, YYY16, xu2018sparse, 
    kang2022robust, CJM2025a}; and the references therein.
      The second challenge concerns the trade-off among multiple objectives. In EIT models, objectives 
such as mean loss, variance, downside risk, and VaR/CVaR often conflict, making joint optimization difficult. 
 Traditional approaches either minimize one objective subject to constraints on others (as in the Markowitz framework) 
 or minimize a convex combination of multiple objectives. This raises important questions regarding their relative 
 effectiveness and practical relevance:  
   including whether one approach leads to better portfolio performance than the other, which approach is more appropriate in practice, and what their respective advantages and limitations are.

Motivated by these challenges, this paper studies a robust EIT problem under distributional uncertainty. 
We show that minimizing a convex combination of mean loss and downside risk has a solid theoretical foundation, 
as the resulting portfolios are Pareto-optimal. Empirically, we demonstrate that this approach outperforms models that minimize 
downside risk alone (subject to a return constraint) in terms of cumulative wealth and Sharpe, Sortino, and Omega ratios.

When the uncertainty set specifies a given mean vector and covariance matrix, we derive explicit closed-form 
solutions for the optimal portfolio. For a more general mean-covariance-based uncertainty set (as in \cite{kang2019data}), 
we show that the problem can be reformulated as a tractable convex optimization problem. Empirical results further confirm 
that the proposed models outperform the equally weighted portfolio, as well as robust mean-variance and robust variance models, 
in out-of-sample cumulative wealth and risk-adjusted performance. We also provide practical guidance for selecting optimal 
weighting coefficients to balance risk and return.

The remainder of this paper is organized as follows. Section \ref{sec-2} introduces the robust EIT framework. 
Section \ref{sec-3} studies the case with known mean vector and covariance matrix and derives closed-form solutions. 
Section \ref{sec-4} considers a generalized uncertainty set and reformulates the models as convex optimization problems.
  Section \ref{sec-5} presents empirical results based on real market data. Section \ref{sec-6} concludes the paper.  The proofs of the main results derived in this paper are presented in the Appendix.

      \section{Robust EIT portfolio selection framework}
      \label{sec-2}

Suppose an investor aims to allocate wealth through a portfolio of $n$ risky assets. Let $\bm{X}=(X_1, ..., X_n)^{\top}\in\mathbb{R}^n$,
  where $X_i$ denotes the loss of the  $i$-th asset  at the end of the investment period with  $\mu_i=\mathbb{E}[X_i]$ and $\sigma_i^2={\rm Var}(X_i)$. Here $\R=(-\infty, \infty)$. Let $\bm{\omega}=(\omega_1,..., \omega_n)^{\top}\in\mathbb{R}^n$ be the vector of decision variables, which represents the weights/proportions of wealth allocated to each of the $n$ assets. We call $\bm{\omega}$ a portfolio or an allocation.   Without loss of generality, we assume the total wealth invested at the 
beginning  of 
the investment period  is 1. It holds that $\bm{\omega}^{\top}\bm{e}=\sum_{i=1}^n\omega_i=1$, with $\bm{e}^{\top}=(1,...,1)\in\mathbb{R}^n$ being an $n$-dimensional unit vector. 
 
Let $X_0$ be  the  loss of an index at the end of the investment period. 
Denote the mean  and the variance of $X_0$ by $\mu_0=\mathbb{E}[X_0]$ and $\sigma_0^2={\rm Var}(X_0)$.
 Further,  define  
 random  vector $\boldsymbol{\xi}$ by
 $\boldsymbol{\xi}=
    (\boldsymbol{X}, X_0)^\top, $  and denote the mean (loss) vector and  covariance matrix of $\boldsymbol{\xi}$ by
$\bm{\mu}=\mathbb{E}[\bm{\xi}]=(\boldsymbol{\mu}_n, \mu_0)^\top 
$
and
\begin{equation}
\boldsymbol{\Sigma}=\E[(\boldsymbol{\xi} -  \mathbb{E}[\bm{\xi}]) (\boldsymbol{\xi} -  \mathbb{E}[\bm{\xi}])^\top] 
   =\begin{bmatrix}
        \boldsymbol{\Sigma}_n & \boldsymbol{b}\\
        \boldsymbol{b}^{\top} & \sigma_0^2
    \end{bmatrix}\in\mathcal{S}^{+}_{n+1},
\end{equation}
where  $ \boldsymbol{\mu}_n=\E[\bm{X}]$ and  $\boldsymbol{\Sigma}_n =\E[(\boldsymbol{X} -  \mathbb{E}[\bm{X}]) (\boldsymbol{X} -  \mathbb{E}[\bm{X}])^\top]  $ are the $n$-dimensional mean vector and 
the
$n\times n$ covariance    matrix  of 
$\bm{X}$, 
and $\boldsymbol{b} = \E[(\boldsymbol{X} -  \mathbb{E}[\bm{X}]) (X_0 -  \E[X_0])] $ 
is  an $n$-dimensional column vector representing the covariance between $\bm{X}$ and  $X_0$, and   $\mathcal{S}^{+}_{n+1}$ denotes the set of $(n+1)$-dimensional 
symmetric positive definite matrices throughout this paper.
 
In an EIT problem, the investor aims to construct a portfolio that is expected to  achieve a 
higher    return than the index, that is, to outperform the benchmark index return $-X_0$ 
 by a pre-specified excess return  $-\alpha$;   see, for example,  \citet{guastaroba2016linear}. 
In other words, the investor seeks to select a portfolio that beats the (random)   target  return $-X_{\alpha} \triangleq  -X_0- \alpha$. 

We point out that,  in our paper, positive values of loss random variables represent  losses,  whereas  
negative values represent   returns or profits.  
Hence,  $(-\boldsymbol{\omega}^{\top}\bm{X} -(-X_\alpha))_+ = (\boldsymbol{\omega}^{\top}\bm{X}-X_\alpha)_-$ 
represents the excess return over the  target return $-X_\alpha$.  The larger the excess  return 
$(\boldsymbol{\omega}^{\top}\bm{X} -X_\alpha)_-$, the higher  the portfolio return.  Here, 
  $(x)_+=\max\{x, 0\}$ and 
$(x)_- =\max\{-x, 0\}$. It holds that $(x)_+=(-x)_-$ for any $x \in \R$. 
    On the other hand,  $(\boldsymbol{\omega}^{\top}\bm{X} -X_{\alpha})_+$ represents 
the excess  loss relative to the loss threshold $X_\alpha$. The larger 
 the excess loss  $(\boldsymbol{\omega}^{\top}\bm{X} -X_\alpha)_+$, the worse   the portfolio performance. 
Therefore, it is of interest to minimize a   
 quantification  of $(\boldsymbol{\omega}^{\top}\bm{X} -X_\alpha)_+$, which is  often referred to  as 
  the downside risk of the 
portfolio.

  A common measure of $(\boldsymbol{\omega}^{\top}\bm{X} -X_\alpha)_+$ is
   $\E[(\boldsymbol{\omega}^{\top}\bm{X} -X_\alpha)_+^m]$, which corresponds to 
    the 
$m$th  upper  partial  moment  of the portfolio loss 
 $\boldsymbol{\omega}^{\top}\bm{X}$ above  the loss threshold   $X_\alpha$. In particular,  the 
first-order upper  partial   moment  $\E[(\boldsymbol{\omega}^{\top}\bm{X} -X_\alpha)_+]$, referred to as the expected regret (ER), and the  
 second-order   upper partial moment  $\E[(\boldsymbol{\omega}^{\top}\bm{X} -X_\alpha)_+^2]$,  
referred to as the target semi-variance (TSV), have been widely  
employed in optimal portfolio selection problems. See, for example, 
 \citet{fishburn1977mean},    \citet{testuri2004relation},  \citet{siegmann2005discrete}, 
\citet{farinelli2008sharpe}, \cite{chen2011tight}, 
 \cite{KLMY17}, 
     \citet{LCLX19},       \citet{LSW20},  
  \citet{YHLH21}, \cite{LMYZ22}, \cite{CJM2025a}, and references therein. Specifically, 
 the following optimal  EIT portfolio selection model has appeared in the literature: 
 \begin{equation}
    \begin{aligned}
    \label{min-upper-moments}
        &
    \min_{\boldsymbol{\omega} \in \mathbb{R}^n}  \,  \ \mathbb{E}[(\boldsymbol{\omega}^{\top}\bm{X}-X_\alpha)_+^m] \\
    & \ \text{s.t.} \quad \boldsymbol{\omega}^{\top}\boldsymbol{e}=1.
    \end{aligned}
\end{equation}

 Model \eqref{min-upper-moments} focuses on  minimizing the downside risk 
  of the portfolio relative to the index. Minimizing  the downside 
risk  alone does not explicitly  account for the optimization of  portfolio returns. Conversely, if the objective is solely to maximize the portfolio return, which is equivalent to minimizing the portfolio loss,  
 then  the downside risk   will be ignored.  
  
  Indeed, there exists a trade-off between
 minimizing the portfolio loss  and minimizing  the downside risk. To balance this trade-off, a common 
 approach is to minimize a convex  combination of the portfolio loss and the downside risk. This  leads to the
  following optimal EIT  portfolio selection  model:
   \begin{equation}
   \label{Pareto-model-upper-W}
    \begin{aligned}
        &
    \min_{\boldsymbol{\omega} \in \cal W}  \,  (1-\lambda) \ \mathbb{E}[\bm{\omega}^{\top}\bm{X}]+\lambda \ 
    \mathbb{E}[(\boldsymbol{\omega}^{\top}\bm{X}-X_\alpha)_+^m],
  \end{aligned}
\end{equation}
 where $\lambda \in (0, 1)$ and 
 ${\cal W}$
  denotes the set of feasible portfolios under consideration by   the investor. 
  
  Moreover, model \eqref{min-upper-moments} is primarily motivated by heuristic considerations. In contrast, model 
   \eqref{Pareto-model-upper-W} can be derived from the theory of Pareto optimality. In particular, an optimal solution 
   $\boldsymbol{\omega}^*_\lambda$ to problem  \eqref{Pareto-model-upper-W} is a Pareto-optimal portfolio, in the sense that the portfolio loss cannot be further minimized without worsening  the downside risk. 
 A formal definition of Pareto-optimal portfolios for optimal EIT portfolio selection problems is provided below.  In particular, an optimal solution $\bm{\omega}^*_\lambda$ to problem \eqref{Pareto-model-upper-W} is Pareto-optimal in the sense that the portfolio loss cannot be further reduced without increasing the downside risk. A formal definition of Pareto optimality in the context of EIT portfolio selection is provided below.

  Let $\rho=\rho(\boldsymbol{\omega}, \bm{X}, X_0)$ and 
  $\varphi=\varphi(\boldsymbol{\omega}, \bm{X}, X_0)$ denote two objective functions of interest 
  to  an investor. For example, $\rho$ and $\varphi$ may represent, respectively, upper partial moments relative to a loss threshold,  
   the expected portfolio loss, the variance of portfolio loss, or 
   quantile-based risk measures such as value-at-risk (VaR) and  conditional value-at-risk (CVaR).

   Let $\cal W$ denote the set of feasible portfolios under consideration by   the investor.    Suppose that
     under the objective function $\rho$, the investor prefers 
     $\boldsymbol{\omega}_1 \in \cal W $ over $\boldsymbol{\omega}_2 \in \cal W$  if 
   $ \rho(\boldsymbol{\omega}_1, \bm{X}, X_0) \le \rho(\boldsymbol{\omega}_2, \bm{X}, X_0)$, 
    and under the objective 
   function $\varphi$, the investor prefers $\boldsymbol{\omega}_1 \in \cal W$ over $\boldsymbol{\omega}_2 \in \cal W$  if 
   $ \varphi(\boldsymbol{\omega}_1, \bm{X}, X_0) \le \varphi(\boldsymbol{\omega}_2, \bm{X}, X_0)$. 
   That is, the investor prefers smaller values of both objective functions and seeks an optimal portfolio $\boldsymbol{\omega}^*$ that minimizes 
    both objectives.
    In many applications, however, the two objective functions may exhibit a trade-off,  typically between risk and return. For example, between mean and variance, mean and VaR, mean and CVaR, or mean and upper partial moments. Consequently, it is generally not possible to minimize both objectives simultaneously. 
    
    To address this trade-off, a common heuristic is to minimize one objective function subject to a constraint on the other, leading to the following optimization problem:
      \be
   \label{non-Pareto-rho}
    \min_{\bm{\omega} \in {\cal W}_{\nu}} \ \rho(\boldsymbol{\omega}, \bm{X}, X_0) 
   \ee
   where
   $
   {\cal W}_\nu =\{ \bm{\omega} \in \mathbb{R}^n :  \varphi(\boldsymbol{\omega}, \bm{X}, X_0) \le \nu, \  
   \boldsymbol{\omega}^{\top}\boldsymbol{e}=1\}.  
     $
     A classical example of model \eqref{non-Pareto-rho} is the following 
      Markowitz model:     
    \begin{equation}
   \label{V-model-0}
\begin{aligned}    
    \min_{\bm{\omega}\in\mathbb{R}^n} \quad & 
    {\rm Var}(\bm{\omega}^{\top} \bm{X}) \\
    \, s.t.    \quad \ &  \E[- \bm{\omega}^{\top} \bm{X}]
     \ge r_0, \ \ \boldsymbol{\omega}^{\top}\boldsymbol{e}=1, 
            \end{aligned}
\end{equation}
where $r_0 \in \mathbb{R}$ represents the minimum target expected return of the portfolio. Models of the form \eqref{non-Pareto-rho}, which minimize one objective while imposing a constraint on another, have been widely used in optimal portfolio selection.

Another approach, based on the theory of Pareto optimality, can also be used to address this trade-off. In this framework, the problem is formulated by  minimizing a convex  combination of the two objective functions. This leads to the following model:  
\be
\label{Pareto-model-lambda}
  \min_{\bm{\omega} \in {\cal W}} \ (1-\lambda) 
  \rho(\boldsymbol{\omega}, \bm{X}, X_0) +\lambda \varphi(\boldsymbol{\omega}, \bm{X}, X_0)
   \ee        
  for some $\lambda \in (0, 1)$, or equivalently, 
   \be
   \label{Pareto-model-kappa}
    \min_{\bm{\omega} \in {\cal W}} \ 
  \rho(\boldsymbol{\omega}, \bm{X}, X_0) +\kappa \,  \varphi(\boldsymbol{\omega}, \bm{X}, X_0)
  \ee
  for some $\kappa >0$. 
  A well-known example of model \eqref{Pareto-model-kappa} is the following variant   of the Markowitz  model:
    \begin{equation}
\label{M-V-model}
\begin{aligned}    
    \min_{\bm{\omega}\in\mathbb{R}^n} &  \ \  \E[\bm{\omega}^{\top} \bm{X}]     + \kappa \,  {\rm Var} (\bm{\omega}^{\top} \bm{X})   \\
         s.t.  &  \ \   \E[- \bm{\omega}^{\top} \bm{X}]
      \ge r_0, \ \ \  \boldsymbol{\omega}^{\top}\boldsymbol{e}=1.
      \end{aligned}
\end{equation}
Here,  $\kappa > 0$ is usually expressed as $\kappa =\frac{\gamma}{2}$, where  $\gamma>0$  denotes  
the risk-aversion coefficient.

Model \eqref{M-V-model} balances portfolio loss and variance by minimizing a convex combination of the two, subject to the constraint that the expected return of the portfolio is at least $r_0$. Equivalently, the model minimizes portfolio loss while penalizing variance, subject to a minimum expected return requirement.
  Models of the form \eqref{Pareto-model-kappa}, which minimize a convex combination of two objective functions while imposing constraints on portfolio return or other quantities of interest to the investor, 
such as VaR or CVaR, have also been widely used in optimal portfolio selection.

     \begin{definition}   Consider a portfolio selection problem characterized by $(\bm{X}, X_0, \rho, \varphi, \boldsymbol{\omega}, \cal W)$, where  $\rho$ and 
    $ \varphi$ are two objective functions. A portfolio $\bm{\omega}^* \in \cal W$ is said to be Pareto-optimal under 
    this    setting if there does not exist  $ \bm{\omega} \in {\cal W}$  such that 
    $ \rho(\boldsymbol{\omega}, \bm{X}, X_0) \le \rho(\boldsymbol{\omega}^*, \bm{X}, X_0)$ and 
     $\varphi(\boldsymbol{\omega}, \bm{X}, X_0) \le \varphi(\boldsymbol{\omega}^*, \bm{X}, X_0),$
    with  at least one of the inequalities  being  strict.  \hfill $\Box$  
        \end{definition}

 \begin{proposition} 
 \label{pro-Pareto}   
 Under a  portfolio selection setting
  $(\bm{X}, X_0, \rho, \varphi, \boldsymbol{\omega}, \cal W)$, if  
  $$
  \bm{\omega}^*_\lambda \in  \argmin_{\bm{\omega} \in {\cal W}} \ (1-\lambda) 
  \rho(\boldsymbol{\omega}, \bm{X}, X_0) +\lambda \varphi(\boldsymbol{\omega}, \bm{X}, X_0)
   $$      
  for some $\lambda \in (0, 1)$, or equivalently, if 
   $$
  \bm{\omega}^*_\kappa \in  \argmin_{\bm{\omega} \in {\cal W}} \ 
  \rho(\boldsymbol{\omega}, \bm{X}, X_0) +\kappa \,  \varphi(\boldsymbol{\omega}, \bm{X}, X_0)
  $$
  for some $\kappa >0$,  
   then  $\bm{\omega}^*_\lambda$ or  $\bm{\omega}^*_\kappa$ is a Pareto-optimal portfolio
   under this  setting.  
    \end{proposition}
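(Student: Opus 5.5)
The plan is a direct proof by contradiction, using only the definition of Pareto optimality and the strict positivity of both weights. First I will show that the two formulations in Proposition \ref{pro-Pareto} are equivalent, so that only the $\lambda$-version needs to be handled. For $\lambda\in(0,1)$ the objective $(1-\lambda)\rho+\lambda\varphi$ equals $(1-\lambda)\bigl(\rho+\kappa\varphi\bigr)$ with $\kappa=\lambda/(1-\lambda)>0$. Since the factor $1-\lambda>0$ does not change the set of minimizers, the two argmin sets coincide. Conversely, any $\kappa>0$ arises from $\lambda=\kappa/(1+\kappa)\in(0,1)$.

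For the main step, let $\bm{\omega}^*_\lambda$ be a minimizer over ${\cal W}$ and suppose it is not Pareto-optimal. Then there is some $\bm{\omega}\in{\cal W}$ with
\[
\rho(\bm{\omega},\bm{X},X_0)\le\rho(\bm{\omega}^*_\lambda,\bm{X},X_0) \quad\text{and}\quad \varphi(\bm{\omega},\bm{X},X_0)\le\varphi(\bm{\omega}^*_\lambda,\bm{X},X_0),
\]
and at least one of the two inequalities is strict. I will multiply the first inequality by $1-\lambda>0$ and the second by $\lambda>0$, then add them. Because both weights are strictly positive, the strict inequality survives the multiplication, so the sum is strict:
\[
(1-\lambda)\rho(\bm{\omega},\bm{X},X_0)+\lambda\varphi(\bm{\omega},\bm{X},X_0)<(1-\lambda)\rho(\bm{\omega}^*_\lambda,\bm{X},X_0)+\lambda\varphi(\bm{\omega}^*_\lambda,\bm{X},X_0).
\]
This contradicts the minimality of $\bm{\omega}^*_\lambda$ over ${\cal W}$, and the same conclusion then holds for $\bm{\omega}^*_\kappa$.

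The only delicate point is making sure the strict inequality carries through. This is exactly where $\lambda\in(0,1)$ is needed, rather than $\lambda\in[0,1]$: at an endpoint, one weight is zero and a strict improvement in the corresponding objective would be lost. I will also assume implicitly that $\rho$ and $\varphi$ take finite real values on ${\cal W}$, so that the additions and comparisons are well defined. If infinite values were allowed, I would restrict attention to points of ${\cal W}$ where both objectives are finite.
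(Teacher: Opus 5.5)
Your proposal is correct and follows essentially the paper's proof: assume a Pareto-dominating $\bm{\omega}\in{\cal W}$ exists, combine the two inequalities with strictly positive weights to obtain a strict decrease of the weighted objective, and contradict minimality. The only difference is bookkeeping. The paper argues directly with the $\kappa$-form, while you handle the $\lambda$-form and make the equivalence $\kappa=\lambda/(1-\lambda)$ explicit, which the paper only asserts.
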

      
 According to Proposition \ref{pro-Pareto}, to find a Pareto-optimal portfolio, it suffices to solve the following optimization problem:  
   \be
   \label{Pareto-kappa}
   \min_{\bm{\omega} \in {\cal W}} \ 
  \rho(\boldsymbol{\omega}, \bm{X}, X_0) +\kappa \, \varphi(\boldsymbol{\omega}, \bm{X}, X_0)
   \ee  
   for $\kappa >0$. 
   
   Note that solving problem \eqref{Pareto-kappa} usually requires knowledge of the joint distribution of $(\bm{X}, X_0)$. In this context, it is often assumed that $(\bm{X}, X_0)$ follows a known joint distribution, such as a multivariate normal or a multivariate $t$ distribution. 
 However, in practice, this joint distribution is generally unknown or uncertain, and only partial information is available. One approach to address this issue is to compute $\rho(\boldsymbol{\omega}, \bm{X}, X_0)$ and $\varphi(\boldsymbol{\omega}, \bm{X}, X_0)$ in problem \eqref{Pareto-kappa} using empirical estimates. Alternatively, one may consider the worst-case values of these objective functions to address distributional uncertainty. This leads to the following robust Pareto-optimal EIT portfolio selection model:   
   \be
   \label{Pareto-kappa-rb}
   \min_{\bm{\omega} \in {\cal W}} \ 
  \sup_{(G,F)\in \mathcal{D}_{n+1}} \,  \rho(\boldsymbol{\omega}, \bm{X}^G, X^F_0) +\kappa \,  \sup_{(G,F)\in \mathcal{D}_{n+1}} \,  
   \varphi(\boldsymbol{\omega}, \bm{X}^G, X^F_0).
   \ee
   Here, $\kappa > 0$, and $(G, F)$ denotes the joint distribution of $(\bm{X}, X_0)$, 
   while $\mathcal{D}_{n+1}$ is an $(n+1)$-dimensional uncertainty set of distributions. For any random vector $\bm{X}$ and random variable $X_0$, the notation $\bm{X}^G$ and $X_0^F$ indicates that $\bm{X}$ and $X_0$ have distributions $G$ and $F$, respectively.
    This model aims to minimize a convex  combination of the worst-case values of the two objective functions. It can be expressed as follows:  
   \be
   \label{Pareto-kappa-rb}
   \min_{\bm{\omega} \in {\cal W}} \   \rho_{wc}(\boldsymbol{\omega}, \mathcal{D}_{n+1}) +\kappa \,     \varphi_{wc}(\boldsymbol{\omega}, \mathcal{D}_{n+1}),
   \ee
   where
   $
   \rho_{wc}(\boldsymbol{\omega}, \mathcal{D}_{n+1}) =  \sup_{(G,F)\in \mathcal{D}_{n+1}} \,  \rho(\boldsymbol{\omega}, \bm{X}^G, X^F_0)
     $
   and $
      \varphi_{wc}(\boldsymbol{\omega}, \mathcal{D}_{n+1}) =  \sup_{(G,F)\in \mathcal{D}_{n+1}} \,  \varphi(\boldsymbol{\omega}, \bm{X}^G, X^F_0)
      $
  represent   the worst-case values of the objective functions $\rho$ and $\varphi$, respectively,  over the uncertainty set  $\mathcal{D}_{n+1}$. Similar to Proposition \ref{pro-Pareto}, we have the following result:         
         
 \begin{proposition} 
 \label{pro-Pareto-rb}   
 Under a  portfolio selection setting
  $(\bm{X}, X_0, \rho, \varphi, \boldsymbol{\omega}, {\cal W}, \mathcal{D}_{n+1})$, if 
  $$
  \bm{\omega}^*_\kappa \in  \argmin_{\bm{\omega} \in {\cal W}} \  
  \rho_{wc}(\boldsymbol{\omega}, \mathcal{D}_{n+1}) +\kappa \, \varphi_{wc}(\boldsymbol{\omega}, \mathcal{D}_{n+1})
   $$      
  for some $\kappa > 0$,   then  $\bm{\omega}^*_\kappa$ is a Pareto-optimal portfolio
   under this   setting.   
    \end{proposition}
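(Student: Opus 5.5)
The plan is to argue by contradiction, exactly as one would for Proposition \ref{pro-Pareto}. The only change is that the two objective functions are now the worst-case functionals $\rho_{wc}(\cdot,\mathcal{D}_{n+1})$ and $\varphi_{wc}(\cdot,\mathcal{D}_{n+1})$.

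First, make the setting precise. Pareto optimality in $(\bm{X}, X_0, \rho, \varphi, \boldsymbol{\omega}, {\cal W}, \mathcal{D}_{n+1})$ means the following: there is no $\bm{\omega}\in{\cal W}$ with
\[
\rho_{wc}(\bm{\omega},\mathcal{D}_{n+1})\le\rho_{wc}(\bm{\omega}^*_\kappa,\mathcal{D}_{n+1})
\quad\text{and}\quad
\varphi_{wc}(\bm{\omega},\mathcal{D}_{n+1})\le\varphi_{wc}(\bm{\omega}^*_\kappa,\mathcal{D}_{n+1}),
\]
with at least one of these inequalities strict. We assume, as is implicit in the statement, that both worst-case values are finite at $\bm{\omega}^*_\kappa$. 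Otherwise the objective is $+\infty$ there, and an argmin that is $+\infty$ only makes sense if every feasible value is infinite, a degenerate case that I would simply exclude.

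Next, suppose such a dominating $\bm{\omega}\in{\cal W}$ exists. Multiply the $\varphi_{wc}$ inequality by $\kappa>0$, which preserves the inequality and its strictness, and add it to the $\rho_{wc}$ inequality. Since one of the two inequalities is strict and both right-hand sides are finite, this gives
\[
\rho_{wc}(\bm{\omega},\mathcal{D}_{n+1})+\kappa\,\varphi_{wc}(\bm{\omega},\mathcal{D}_{n+1})
<\rho_{wc}(\bm{\omega}^*_\kappa,\mathcal{D}_{n+1})+\kappa\,\varphi_{wc}(\bm{\omega}^*_\kappa,\mathcal{D}_{n+1}).
\]
This contradicts $\bm{\omega}^*_\kappa$ being a minimizer over ${\cal W}$, so $\bm{\omega}^*_\kappa$ is Pareto-optimal.

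The only real care point, the "main obstacle," is the finiteness issue just mentioned, which is what lets us add the inequalities and keep strictness. Positivity of $\kappa$ is also essential: with $\kappa=0$, strictness in the $\varphi_{wc}$ component would be lost. If one wants the $\lambda$-form as in Proposition \ref{pro-Pareto}, note that $\lambda\in(0,1)$ corresponds to $\kappa=\lambda/(1-\lambda)>0$ after dividing by $1-\lambda>0$, so the argmin sets coincide.
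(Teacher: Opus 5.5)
Your proof is correct and matches the paper's approach. The paper gives no separate proof of Proposition \ref{pro-Pareto-rb}: it proves Proposition \ref{pro-Pareto} by exactly this contradiction (a dominating portfolio would strictly lower the weighted objective) and notes that the robust version follows in the same way with $\rho_{wc},\varphi_{wc}$ in place of $\rho,\varphi$. Your explicit finiteness caveat and your precise robust definition of Pareto optimality are small rigor additions that the paper leaves implicit.
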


 By this proposition, we see that the optimal portfolios derived from \eqref{M-V-model} are Pareto-optimal 
 under  the  setting
  $\big (\bm{X}, \, X_0, \,  {\rm Var}(\bm{\omega}^\top X), \, \E[\bm{\omega}^\top X],  \, 
   \boldsymbol{\omega}, \, {\cal W},  \, {\cal M}_{n+1}(\bm{\mu},\bm{\Sigma}) \big )$, 
  where ${\cal M}_{n+1}(\bm{\mu},\bm{\Sigma})$, defined in   \eqref{set-M_n+1},  is the 
  uncertainty  set of joint 
  distributions  with a given mean vector $ \bm{\mu}$   and covariance  matrix $\bm{\Sigma}$.

Portfolio return is one of the primary considerations in investment. Imposing a constraint on the worst-case portfolio return in robust 
Pareto-optimal portfolio models, when the return distributions are uncertain, is of particular interest, as it has the potential to yield an enhanced portfolio  performance compared to models without this constraint.   In this paper, motivated by the above discussion, we investigate the following optimal robust  EIT portfolio selection model:
     \begin{equation}
    \begin{aligned}
    \label{mean-upper-m-rb-k}
        &
    \min_{\boldsymbol{\omega} \in {\cal W}_{r}}  \, 
    \sup_{(G,F)\in \mathcal{D}_{n+1}} \, \mathbb{E}[\bm{\omega}^{\top}\bm{X}^G]+ \kappa \,  \    \sup_{(G,F)\in \mathcal{D}_{n+1}} \,  \mathbb{E}[(\boldsymbol{\omega}^{\top}\bm{X}^G-X_0^F-\alpha)_+^m],\\
  \end{aligned}
\end{equation} 
 where $\kappa>0$ and 
 \by
 {\cal W}_{r} = \{ \boldsymbol{\omega}: \boldsymbol{\omega} \in {\cal W} \subset \R^n, \   \boldsymbol{\omega}^{\top}\boldsymbol{e}=1, \ \   \inf_{(G, F) \in {\cal D}_{n+1}} \mathbb{E}[-\bm{\omega}^{\top}\bm{X}^G]   \ge r\}. 
 \ey
 In particular, if ${\cal W}  = \R^n$, the feasible set ${\cal W}_{r}$ for $\boldsymbol{\omega}$ allows short selling, whereas if 
${\cal W} =A^n$ with $A= [0, \infty)$, the feasible set ${\cal W}_{r}$ restricts $\boldsymbol{\omega}$ to non-negative values, prohibiting short selling.  
In addition, additional constraints can be imposed on $\boldsymbol{\omega}$ within the feasible set ${\cal W}_{r}$ or $\cal W$.

This  constraint $\inf_{(G, F) \in {\cal D}_{n+1}} \mathbb{E}[-\boldsymbol{\omega}^{\top}\boldsymbol{X}^G]\geq r$ 
  ensures that the worst-case expected  portfolio return is at least  $r$. Such a constraint has been 
  widely used in optimal  portfolio selection.  See, for example,  \cite{kang2019data}, \cite{blanchet2022distributionally}, 
   \cite{CJM2025a}, and references therein.

 To explore the impact of EIT portfolio  selection 
modeling on portfolio  performance, we also consider the following robust EIT portfolio selection model, 
which   minimizes 
the worst-case downside risk subject to a constraint on the worst-case portfolio return. Specifically, 
we consider 
    \begin{equation}
    \begin{aligned}
    \label{upper-m-rb-k}
        &
    \min_{\boldsymbol{\omega} \in {\cal W}_{r}}  \,   \sup_{(G,F)\in \mathcal{D}_{n+1}} \,  \mathbb{E}[(\boldsymbol{\omega}^{\top}\bm{X}^G-X_0^F-\alpha)_+^m]. 
  \end{aligned}
\end{equation}

It is also worthwhile to consider the robust versions of the Markowitz model \eqref{V-model-0} and its variant \eqref{M-V-model}, which can be formulated as follows:
  \begin{equation}
\label{V-model-D_n+1}
    \min_{\boldsymbol{\omega} \in {\cal W}_r}  \, \  \sup_{(G,F)\in {\cal D}_{n+1}} \, 
    {\rm Var}(\bm{\omega}^{\top}\bm{X}^G) ,
    \end{equation}
and
 \begin{equation}
\label{MV-model-D_n+1}
    \min_{\boldsymbol{\omega} \in {\cal W}_r}  \, \  \sup_{(G,F)\in {\cal D}_{n+1}} \, \mathbb{E}[\bm{\omega}^{\top}\bm{X}^G]+\kappa \,  \sup_{(G,F)\in {\cal D}_{n+1}} \,   
    {\rm Var}(\boldsymbol{\omega}^{\top}\bm{X}^G) .
    \end{equation}

We use the robust portfolio selection models \eqref{V-model-D_n+1} and \eqref{MV-model-D_n+1}, which do not include index tracking, as benchmark models when evaluating the performance of portfolios derived from the robust EIT portfolio selection models \eqref{mean-upper-m-rb-k} and \eqref{upper-m-rb-k}. These models also provide a useful framework for investigating the effects of distributional uncertainty on the classical Markowitz model \eqref{V-model-0} and its variant \eqref{M-V-model}.
     The robust Markowitz model \eqref{V-model-D_n+1} has been extensively studied in the literature. For example, \cite{blanchet2022distributionally} considers this model, where the uncertainty set for the joint distribution of asset returns is defined as a Wasserstein ball centered at the empirical distribution.
     
Our empirical study shows that portfolios derived from models minimizing a convex combination of the worst-case portfolio loss and downside risk 
 outperform those derived from models that minimize 
  the worst-case downside risk alone, in terms of the out-of-sample  cumulative wealth,  
  Sharpe ratio, Sortino ratio,  and Omega ratio. We also discuss the choice of model parameters and empirically illustrate their impact on portfolio performance.   
 
In this paper, we investigate the optimal  robust EIT portfolio selection problems \eqref{mean-upper-m-rb-k} and \eqref{upper-m-rb-k}, as well as the robust mean-variance problems without index tracking \eqref{V-model-D_n+1} and \eqref{MV-model-D_n+1}. We focus on two uncertainty sets: one is the classical set of distributions with a given mean vector and covariance matrix, and the other is a generalized mean-covariance-based set. 
 
  \section{Robust optimal EIT portfolio  under a given mean  vector and covariance matrix}
  \label{sec-3}

In this section, we assume that the joint distribution of the assets and index returns is unknown, but its mean vector and covariance matrix are given. That is, the possible joint distributions of $(\bm{X}, X_0)$ belong to the following uncertainty set:
  \begin{equation}
\label{set-M_n+1}
   \begin{aligned}
    \mathcal{M}_{n+1}(\bm{\mu},\bm{\Sigma})=\big\{
    &(G, F)\in\mathcal{F}(\mathbb{R}^{n+1}): \ \mathbb{E}[\boldsymbol{X}^G]=\boldsymbol{\mu}_n, 
    \,  \mathbb{E}[X_0^F]=\mu_0,  \, 
     \E[(\boldsymbol{\xi} -  \mathbb{E}[\bm{\xi}]) (\boldsymbol{\xi} -  \mathbb{E}[\bm{\xi}])^\top] = 
    \boldsymbol{\Sigma} \succ 0   
   \big\},
    \end{aligned}   
\end{equation}
where 
 $\mathcal{F}(\mathbb{R}^{n+1})$ is the set of $(n+1)$-dimensional distributions defined on $\mathbb{R}^{n+1}$.  
 
 It is a classical uncertainty set that has been widely used in robust portfolio selection problems, as well as in various applications in 
 finance and operations research  (see, for example, \cite{07PI}, \cite{chen2011tight}, and references therein).
 
In this section, we solve problems \eqref{mean-upper-m-rb-k} and \eqref{upper-m-rb-k} 
for    ${\cal D}_{n+1} = \mathcal{M}_{n+1}(\boldsymbol{\mu}, \boldsymbol{\Sigma})$ and ${\cal W}=\R^n$.  In this case, for  any $(G,F)\in\mathcal{M}_{n+1}(\bm{\mu},\bm{\Sigma})$, we have that $\mathbb{E}[\bm{\omega}^{\top}\bm{X}^G]=\bm{\omega}^{\top}\bm{\mu}_n$.
 Thus,  when ${\cal D}_{n+1} =   \mathcal{M}_{n+1}(\bm{\mu},\bm{\Sigma})$ and   ${\cal W} =\R^n$,
  problems  \eqref{mean-upper-m-rb-k} and \eqref{upper-m-rb-k} 
    reduce to 
  the following problems: 
      \begin{equation}
    \begin{aligned}
    \label{mean-upm-M}
        &
    \min_{\boldsymbol{\omega} \in \mathbb{R}^n}  \, \bm{\omega}^{\top}\bm{\mu}_n + \kappa 
     \sup_{(G,F)\in \mathcal{M}_{n+1}(\bm{\mu},\bm{\Sigma})} \,  \mathbb{E}[(\boldsymbol{\omega}^{\top}\bm{X}^G-X_0^F-\alpha)_+^m] \\
    & \ \text{s.t.}  \quad  \ -\bm{\omega}^{\top}\bm{\mu}_n  \ge r_0, \ \ \ \ \boldsymbol{\omega}^{\top}\boldsymbol{e}=1,
     \end{aligned}
\end{equation}
and 
    \begin{equation}
    \begin{aligned}
    \label{upm-M}
        &
    \min_{\bm{\omega} \in \mathbb{R}^n}  \,  
     \sup_{(G,F)\in \mathcal{M}_{n+1}(\bm{\mu},\bm{\Sigma})} \,  \mathbb{E}[(\bm{\omega}^{\top}\bm{X}^G-X_0^F-\alpha)_+^m] \\
    & \ \text{s.t.}  \quad  \ -\bm{\omega}^{\top}\bm{\mu}_n  \ge r_0, \ \ \ \ \bm{\omega}^{\top}\bm{e}=1.   \end{aligned}
\end{equation}   
   
   We derive closed-form solutions to these two robust EIT portfolio selection problems. To this end, we 
   introduce an uncertainty set of distributions with a given mean and variance and present two lemmas that 
   will be used throughout the paper.

  Let $X$ be a random variable. Suppose that only the first two moments of $X$ are known,  
  while   the complete distribution of $X$ is uncertain. Define 
 \begin{equation}
    \begin{aligned}\label{MV uncertainty set}
    \mathcal{L}(\mu,\sigma)
    &=\left\{F\in\mathcal{F}(\mathbb{R}): \ \mathbb{E}[X^F]=\mu, \ \mathbb{E}[(X^F)^2]=\mu^2+\sigma^2\right\},
    \end{aligned}
\end{equation}
where $\mathcal{F}(\mathbb{R})$ denotes  
 the set of all distributions defined on $\mathbb{R}$. That is, $\mathcal{L}(\mu,\sigma)$ consists of all  distributions with mean $\mu$ and variance $\sigma^2$.

The uncertainty set $\mathcal{L}(\mu,\sigma)$, along with other moment-based uncertainty sets, has been widely used in distributionally robust optimization (DRO) problems in insurance, finance, and operations research; 
 see, for example, \cite{jagannathan1977minimax},   \cite{07PI}, \cite{chen2011tight},   \cite{ZY25},  and references therein.

\begin{lemma}
\label{lemma-jan-1977}
    For any $\mu, d\in\mathbb{R}$ and $\sigma\in\mathbb{R}^+=(0,\infty)$, 
    it holds that  
    \begin{align}
        \sup_{F\in\mathcal{L}(\mu,\sigma)}\mathbb{E}[(X^F-d)_+] &= \frac{1}{2}\big(\mu-d+\sqrt{\sigma^2+(\mu-d)^2} \, \big), \label{lemma-m1} \\
     \sup_{F\in\mathcal{L}(\mu,\sigma)}\mathbb{E}[(X^F-d)_+^2] &=(\mu-d)_{+}^2+\sigma^2. \label{lemma-m2}
            \end{align}
\end{lemma}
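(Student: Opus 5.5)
The plan is to prove each identity in two parts: an upper bound valid for every $F\in\mathcal{L}(\mu,\sigma)$, and a feasible distribution (or sequence of distributions) that attains or approaches it. Both parts reduce to the centred variable $Y=X^F-d$, which has mean $\delta:=\mu-d$ and second moment $\E[Y^2]=\delta^2+\sigma^2$.

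For \eqref{lemma-m1} I would use $(y)_+=\tfrac12(y+|y|)$. This gives $\E[Y_+]=\tfrac12(\delta+\E|Y|)$. Jensen's (or Cauchy--Schwarz) inequality gives $\E|Y|\le\sqrt{\E[Y^2]}=\sqrt{\sigma^2+\delta^2}$, which is exactly the upper bound. For attainment I would use a two-point distribution for $Y$ on $\{-s,s\}$ with $s=\sqrt{\sigma^2+\delta^2}$. The probabilities are $p=\tfrac12(1+\delta/s)$ and $1-p$; these lie in $[0,1]$ because $|\delta|\le s$, and the mean is $\delta$. Then $|Y|\equiv s$, so the second moment is $s^2=\sigma^2+\delta^2$ and $\E|Y|=s$, which gives equality. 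This is the classical Jagannathan/Scarf argument.

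For \eqref{lemma-m2} I would split on the sign of $\delta$. If $\delta\ge0$, then $\E[Y_+^2]\le\E[Y^2]=\delta^2+\sigma^2$. Equality would need $Y\ge0$ a.s., which may not be achievable, so I will get the value as a supremum. Take $Y$ equal to $\delta\pm\sigma$ with probability $\tfrac12$ each. If $\delta\ge\sigma$ this is nonnegative and equality holds. If $0\le\delta<\sigma$, I would instead use a skewed two-point law: mass $1-\epsilon$ at a point slightly below $\delta$ but still $\ge0$ is not possible in general. So I would take $Y$ with mass $\epsilon$ at $-a_\epsilon$ and mass $1-\epsilon$ at $b_\epsilon$, with the mean and variance matched. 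As $\epsilon\to0$ we get $a_\epsilon\sim\sigma/\sqrt{\epsilon}$ and the negative part contributes $\epsilon a_\epsilon^2\to\sigma^2$. That is the wrong direction, so I flip it: put the small mass $\epsilon$ at a large positive point $b_\epsilon\approx\sigma/\sqrt\epsilon$, and the bulk at $\delta-\sigma\sqrt{\epsilon/(1-\epsilon)}$. For small $\epsilon$ the bulk point is near $\delta\ge0$, so it is nonnegative when $\delta>0$. For $\delta=0$ this alone does not suffice, and I would use the $\delta<0$ construction below.

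If $\delta<0$, the upper bound is $\sigma^2$. I would prove it via $\E[Y_+^2]\le\E[(Y-\delta)^2]=\sigma^2$, which holds because $Y_+\le(Y-\delta)_+\le|Y-\delta|$ when $\delta\le0$. The same inequality covers $\delta=0$. For the lower bound I would use the skewed law just described: mass $1-\epsilon$ at $\delta-\sigma\sqrt{\epsilon/(1-\epsilon)}<0$ and mass $\epsilon$ at $\delta+\sigma\sqrt{(1-\epsilon)/\epsilon}$. This has mean $\delta$ and variance $\sigma^2$. Only the large atom contributes to $\E[Y_+^2]$, giving $\epsilon(\delta+\sigma\sqrt{(1-\epsilon)/\epsilon})^2\to\sigma^2$. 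The same family handles $\delta>0$: there the bulk atom is positive for small $\epsilon$, so $Y\ge0$ a.s. and $\E[Y_+^2]=\E[Y^2]=\delta^2+\sigma^2$ exactly.

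Combining the cases gives $\sup=(\delta)_+^2+\sigma^2$. The main obstacle is the second identity. One has to notice that the supremum need not be attained when $\delta\le 0$, so a limiting sequence of skewed two-point laws is required, and the upper bound for $\delta\le0$ needs the comparison $Y_+\le|Y-\delta|$ rather than the crude $Y_+^2\le Y^2$.
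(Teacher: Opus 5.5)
Your proof is correct. It differs from the paper mainly in that the paper gives no argument at all: it cites \citet{jagannathan1977minimax} for \eqref{lemma-m1} and \citet{chen2011tight} for \eqref{lemma-m2}.

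You instead give a self-contained elementary proof. For \eqref{lemma-m1} you use $y_+=\tfrac12(y+|y|)$ and Cauchy--Schwarz, and the two-point law on $\{-s,s\}$ attains the bound. For \eqref{lemma-m2} you use the bounds $Y_+^2\le Y^2$ when $\delta\ge0$ and $Y_+\le|Y-\delta|$ when $\delta\le0$. These are matched by the skewed two-point family with mass $\epsilon$ at $\delta+\sigma\sqrt{(1-\epsilon)/\epsilon}$ and mass $1-\epsilon$ at $\delta-\sigma\sqrt{\epsilon/(1-\epsilon)}$.

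The citation route is shorter and rests on established moment-bound results. Your route makes the extremal distributions explicit, and it shows something the paper leaves implicit: the supremum in \eqref{lemma-m2} is attained when $\delta>0$, but only approached when $\delta\le0$. For $\delta>0$, taking $\epsilon=\delta^2/(\delta^2+\sigma^2)$ puts the bulk atom exactly at $0$, so $Y\ge0$ a.s. and equality holds. For $\delta\le0$ the supremum is never attained, since equality in $Y_+^2\le(Y-\delta)^2$ a.s. would force $Y$ to be degenerate, contradicting $\sigma>0$. This is harmless for the paper, which only uses the value of the supremum, but it is the reason the statement must be written with $\sup$ rather than $\max$.

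When you write it up, remove the false starts in the $0\le\delta<\sigma$ paragraph. Your remark that a nonnegative bulk atom slightly below $\delta$ ``is not possible in general'' is contradicted two sentences later, and the first construction with mass $\epsilon$ at $-a_\epsilon$ should be dropped. The symmetric law $\delta\pm\sigma$ for $\delta\ge\sigma$ is also redundant, since the skewed family already gives exact attainment for every $\delta>0$ and the limiting value $\sigma^2$ for every $\delta\le0$.
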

\begin{proof}
The proofs  of equations \eqref{lemma-m1} and \eqref{lemma-m2}  are given in  \cite{jagannathan1977minimax} 
   and  \citet{chen2011tight}, respectively. 
\end{proof}


To simplify the presentation of  the results,  
we  introduce the following parameters throughout the paper, namely $\alpha^*$, $u$, $c_i$ for $i=0,1,\dots,5$, and $b_i$ for $i=0,1,2$:
\begin{equation*}
   \alpha^*:=\alpha+\mu_0, \ \ \ \      u := (\bm{e}^{\top}\bm{\Sigma}_n^{-1}\bm{e})(\bm{\mu}_n^{\top}\bm{\Sigma}_n^{-1}\bm{\mu}_n)-(\bm{\bm{e}^{\top}\bm{\Sigma}}_n^{-1}\bm{\mu}_n)^2, 
\end{equation*}
\begin{equation*}
\begin{alignedat}{3}
c_0 &:= \bm{e}^{\top}\bm{\Sigma}_n^{-1}\bm{e}, \qquad &
c_1 &:= \bm{e}^{\top}\bm{\Sigma}_n^{-1}\bm{\mu}_n, \qquad &
c_2 &:= \bm{\mu}_n^{\top}\bm{\Sigma}_n^{-1}\bm{\mu}_n,  \\
c_3 &:= \bm{b}^{\top}\bm{\Sigma}_n^{-1}\bm{\mu}_n, \qquad &
c_4 &:= \bm{b}^{\top}\bm{\Sigma}_n^{-1}\bm{e}, \qquad &
c_5 &:= \bm{b}^{\top}\bm{\Sigma}_n^{-1}\bm{b}, \\
    b_0 &:= \frac{\bm{e}^{\top}\bm{\Sigma}_n^{-1}\bm{e}}{u}, &  \ \  b_1&:= \frac{\bm{\bm{e}^{\top}\bm{\Sigma}}_n^{-1}\bm{\mu}_n}{u},&  
    b_2 &:= \frac{\bm{\mu}_n^{\top}\bm{\Sigma}_n^{-1}\bm{\mu}_n}{u}.
\end{alignedat}
\end{equation*}
We further assume that the mean vector $\bm{\mu}_n$ and the  vector $\bm{e}$ are linearly independent. 
This  assumption ensures  that $u>0$.

\subsection{Minimization of a convex combination of portfolio loss and downside risk}

In this subsection,   we solve problem \eqref{mean-upm-M}. 
To this end,  we first prove the following lemma. 
\begin{lemma}
    \label{theo-w*} For any $s \in \R$, 
    the following optimization problem     \begin{equation}
        \begin{aligned}
           & \min_{\omega\in\mathbb{R}^n} \quad \bm{\omega}^{\top}\bm{\Sigma}_n\bm{\omega}-2\bm{b}^{\top}\bm{\omega}+\sigma_0^2 \\
           & \ \mbox{s.t.}    \quad   \boldsymbol{\omega}^{\top}\boldsymbol{\mu}_n= s, \ \   \   \boldsymbol{\omega}^{\top}\boldsymbol{e}=1, 
        \end{aligned}
    \end{equation} 
    has a  unique solution $\bm{\omega}_s$,   given  by
    \begin{equation}
    \begin{aligned}
    \label{w_s} 
        \boldsymbol{\omega}_s
        &=\boldsymbol{\Sigma}_n^{-1}\boldsymbol{b}-\begin{pmatrix}
            \boldsymbol{\Sigma}_n^{-1}\boldsymbol{\mu}_n & \boldsymbol{\Sigma}_n^{-1}\boldsymbol{e}
        \end{pmatrix}\begin{pmatrix}
            b_0 & -b_1\\-b_1 & b_2
        \end{pmatrix}\begin{pmatrix}
            c_3-s\\ c_4-1
        \end{pmatrix}. 
    \end{aligned}
\end{equation}
\end{lemma}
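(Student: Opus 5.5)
The problem is a strictly convex quadratic program with two linear equality constraints. The plan is to solve it with Lagrange multipliers and then verify that the resulting expression matches \eqref{w_s}.

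\textbf{Existence and uniqueness.} Since $\bm{\Sigma}\succ 0$, its principal submatrix $\bm{\Sigma}_n$ is also positive definite. Hence the objective $f(\bm{\omega})=\bm{\omega}^{\top}\bm{\Sigma}_n\bm{\omega}-2\bm{b}^{\top}\bm{\omega}+\sigma_0^2$ is strictly convex and coercive on $\R^n$.

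The feasible set $\{\bm{\omega}:\bm{\omega}^{\top}\bm{\mu}_n=s,\ \bm{\omega}^{\top}\bm{e}=1\}$ is an affine subspace. It is nonempty for every $s$ because $\bm{\mu}_n$ and $\bm{e}$ are linearly independent, which is the standing assumption guaranteeing $u>0$. Consequently a minimizer exists and is unique, and the KKT (Lagrange) conditions are both necessary and sufficient for it.

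\textbf{Stationarity.} Introduce multipliers $\eta_1,\eta_2$ and write the stationarity condition
$$2\bm{\Sigma}_n\bm{\omega}-2\bm{b}=\eta_1\bm{\mu}_n+\eta_2\bm{e}.$$
After rescaling the multipliers, this gives
$$\bm{\omega}=\bm{\Sigma}_n^{-1}\bm{b}+\beta_1\bm{\Sigma}_n^{-1}\bm{\mu}_n+\beta_2\bm{\Sigma}_n^{-1}\bm{e}.$$

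\textbf{Solving for the multipliers.} Substitute this form into the two constraints. Using the constants $c_0,\dots,c_4$ introduced above, the constraints become
$$c_3+\beta_1 c_2+\beta_2 c_1=s,\qquad c_4+\beta_1 c_1+\beta_2 c_0=1.$$
In matrix form this reads
$$\begin{pmatrix} c_2 & c_1\\ c_1 & c_0\end{pmatrix}\begin{pmatrix}\beta_1\\ \beta_2\end{pmatrix}=-\begin{pmatrix} c_3-s\\ c_4-1\end{pmatrix}.$$
The determinant of the coefficient matrix is $c_0c_2-c_1^2=u>0$. This is exactly where the linear independence assumption enters: by Cauchy--Schwarz in the $\bm{\Sigma}_n^{-1}$ inner product, $u>0$ holds precisely because $\bm{\mu}_n$ and $\bm{e}$ are not collinear.

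Inverting the matrix gives
$$\frac1u\begin{pmatrix} c_0 & -c_1\\ -c_1 & c_2\end{pmatrix}=\begin{pmatrix} b_0 & -b_1\\ -b_1 & b_2\end{pmatrix}.$$
Therefore
$$(\beta_1,\beta_2)^{\top}=-\begin{pmatrix} b_0 & -b_1\\ -b_1 & b_2\end{pmatrix}\begin{pmatrix} c_3-s\\ c_4-1\end{pmatrix}.$$
Plugging this back into the stationarity form yields exactly \eqref{w_s}.

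\textbf{Main obstacle.} The only real care needed is bookkeeping: keeping track of the ordering of the columns $(\bm{\Sigma}_n^{-1}\bm{\mu}_n,\ \bm{\Sigma}_n^{-1}\bm{e})$ against the inverse matrix. In particular, one must confirm that $b_0=c_0/u$ multiplies the $\bm{\mu}_n$-component $(c_3-s)$, which is what the $2\times2$ inverse formula delivers.

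A final remark on sufficiency: uniqueness follows from strict convexity, so the stationary point obtained above is the unique solution $\bm{\omega}_s$.
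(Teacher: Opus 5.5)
Your proposal is correct and is exactly the standard Lagrangian argument that the paper invokes but does not write out. Stationarity places $\bm{\omega}$ in $\bm{\Sigma}_n^{-1}\bm{b}+\operatorname{span}\{\bm{\Sigma}_n^{-1}\bm{\mu}_n,\bm{\Sigma}_n^{-1}\bm{e}\}$, and inverting the $2\times 2$ system with determinant $u=c_0c_2-c_1^2>0$ gives \eqref{w_s}; your existence and uniqueness argument (from $\bm{\Sigma}_n\succ 0$ and the linear independence of $\bm{\mu}_n$ and $\bm{e}$) supplies details the paper leaves implicit.
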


\begin{proof} The proof of Lemma \ref{theo-w*} follows from a standard Lagrangian 
approach, so the details are omitted.
  \end{proof}

Further, to solve problem \eqref{mean-upm-M}, we define, 
throughout this paper, the  functions $f_m(\mu, \sigma)$ and  $h_m(\mu, \sigma)$ for any $\mu \in \R$ and $\sigma \in \R^+$ as
\be
       f_m(\mu,\sigma)
        &=&
        \sup_{F\in \mathcal{L}(\mu, \sigma)} \mathbb{E}[(X^F-\alpha)_{+}^m],      \label{f-mu-sigma}  \\
 h_{m,\kappa}(\mu,\sigma)
&= &\mu+\kappa\  f_m(\mu,\sigma), \label{h_m_k}
\ee
where $\kappa >0$.  Closed-form expressions for   the function $f_m(\mu,\sigma)$ are provided in \eqref{lemma-m1} and \eqref{lemma-m2} for $m=1$ and $m=2$. 

\begin{proposition} 
\label{pro-f-h}
The functions $  f_m(\mu,\sigma)$ and  $ h_{m,\kappa}(\mu,\sigma)$ for $m=1$ and $m=2$  have the following properties. 
  \bn[(i)] 
\item  Both   $f_m(\mu,\sigma)$ and  $ h_{m,\kappa}(\mu,\sigma)$  are increasing and continuous  in $\mu \in \R$ for fixed $\sigma \in \R^+$, 
 and  increasing and continuous in $\sigma \in \R^+$ for fixed $\mu \in \R$.
    
 \item  Both $f_m(\mu,\sigma)$  and  $h_{m,k}(\mu,\sigma)$  are 
  convex  on $(\mu, \sigma) \in  \R \times \R^+$. 
     \en
        \end{proposition}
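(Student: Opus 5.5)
We will work directly with the closed forms from Lemma \ref{lemma-jan-1977}, applied with $d=\alpha$:
$$f_1(\mu,\sigma)=\tfrac12\big(\mu-\alpha+\sqrt{\sigma^2+(\mu-\alpha)^2}\,\big), \qquad f_2(\mu,\sigma)=(\mu-\alpha)_+^2+\sigma^2 .$$
Since $h_{m,\kappa}=\mu+\kappa f_m$ with $\kappa>0$, both properties of $h_{m,\kappa}$ follow immediately from those of $f_m$. The identity $\mu\mapsto\mu$ is increasing, continuous and linear, hence convex, and these properties survive nonnegative combinations. It therefore suffices to treat $f_1$ and $f_2$.

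\textbf{Part (i).} Continuity is clear, since each closed form is a composition of continuous maps on $\R\times\R^+$.

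\emph{Monotonicity of $f_1$.} Put $t=\mu-\alpha$. The map $t\mapsto t+\sqrt{\sigma^2+t^2}$ has derivative $1+t/\sqrt{\sigma^2+t^2}>0$, because $|t|<\sqrt{\sigma^2+t^2}$ when $\sigma>0$. Moreover $\sqrt{\sigma^2+t^2}$ is increasing in $\sigma>0$.

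\emph{Monotonicity of $f_2$.} The term $(t)_+^2$ is nondecreasing in $t$, and $\sigma^2$ is increasing on $\R^+$.

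A cleaner alternative avoids calculus entirely. For each fixed $F$, the map $\mu\mapsto\E[(X^F+\mu-\alpha)_+^m]$ is nondecreasing, and shifting by a constant maps $\mathcal{L}(0,\sigma)$ onto $\mathcal{L}(\mu,\sigma)$. We will nevertheless use the closed forms, to stay concrete. Note that "increasing" in $\mu$ is understood in the weak sense for $f_2$, which is constant for $\mu\le\alpha$. For $h_{2,\kappa}$, by contrast, it is strict because of the added $\mu$ term.

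\textbf{Part (ii).} The key observation is that $\sqrt{\sigma^2+(\mu-\alpha)^2}=\|(\mu-\alpha,\sigma)\|_2$ is the Euclidean norm composed with an affine map, hence jointly convex on $\R\times\R^+$. Adding the affine term $\mu-\alpha$ and scaling by $\tfrac12$ preserves convexity, so $f_1$ is convex.

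For $f_2$, the map $\mu\mapsto(\mu-\alpha)_+^2$ is convex: it is the composition of the convex, nondecreasing, nonnegative function $x\mapsto x^2$ on $[0,\infty)$ with the convex function $(\mu-\alpha)_+$. Also $\sigma\mapsto\sigma^2$ is convex. A sum of convex functions of separate variables is jointly convex, so $f_2$ is convex.

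\textbf{Main obstacle.} Part (ii) is the only step that needs any thought, and specifically the joint convexity of $f_1$. A Hessian computation would be messy. Recognising the square root as a norm, namely the perspective-type function $\sqrt{\sigma^2+t^2}$, settles it in one line, and that is the route we would take.
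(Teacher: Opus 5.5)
Your proof is correct and takes the same approach as the paper, which omits the details and says only that the proposition "follows directly from" the closed forms \eqref{lemma-m1}--\eqref{lemma-m2}. Your write-up fills in what the paper leaves implicit: the norm argument for the joint convexity of $f_1$, and the remark that $f_2$ is only weakly increasing in $\mu$ while $h_{2,\kappa}$ is strictly increasing.
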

    \begin{proof}
The proof of this proposition follows directly from the expressions \eqref{lemma-m1} and \eqref{lemma-m2}, so the details are omitted.
      \end{proof} 
  
\begin{theorem}
\label{theo-mean-upm}  
     \begin{enumerate}[(i)]
   \item    Consider the robust  EIT  portfolio selection  problem \eqref{mean-upm-M} for  $m=1$.
    \begin{enumerate}[(a)]
        \item 
    If $\kappa>\kappa_0 = \frac{2+2\sqrt{b_0+1}}{b_0}$, then problem \eqref{mean-upm-M}   admits a unique  solution $\bm{\omega}_{1,\kappa}^*$,   given by
    \begin{equation}
    \begin{aligned}\label{w1*}
        \bm{\omega}_{1, \kappa}^*
        &=\bm{\Sigma}_n^{-1}\bm{b}-\begin{pmatrix}
            \bm{\Sigma}_n^{-1}\bm{\mu}_n & \bm{\Sigma}_n^{-1}\bm{e}
        \end{pmatrix}\begin{pmatrix}
            b_0 & -b_1\\-b_1 & b_2
        \end{pmatrix}\begin{pmatrix}
       c_3-\min\{-r_0, s_{1,\kappa}\}  \\ c_4-1
        \end{pmatrix},
    \end{aligned}
\end{equation}
where 
\by
  s_{1,\kappa} & = &  \frac{c_3b_0-b_1(c_4-1)+\alpha^*}{b_0+1}-\frac{\kappa+2}{b_0+1}\sqrt{\frac{q}{b_0\kappa^2-4(\kappa+1)}}, \\
q &=& (b_0b_2-b_1^2+b_2)(c_4-1)^2+(c_3-\alpha^*)\big(b_0(c_3-\alpha^*)-2b_1(c_4-1)\big)
-(b_0+1)(c_5-\sigma_0^2).
\ey

\item If $\kappa\leq \frac{2+2\sqrt{b_0+1}}{b_0}$, then   problem \eqref{mean-upm-M}  has no  solution.
\end{enumerate}

 \item  Consider the robust  EIT  portfolio selection  problem \eqref{mean-upm-M} for   $m=2$. Then, 
for any $\kappa>0$,    problem \eqref{mean-upm-M}  has a unique solution    $\bm{\omega}_{2,\kappa}^*$,   given by
\begin{equation}
    \begin{aligned} \label{w2*} 
        \bm{\omega}_{2,\kappa}^*
        &=\bm{\Sigma}_n^{-1}\bm{b}-\begin{pmatrix}
            \bm{\Sigma}_n^{-1}\bm{\mu}_n & \bm{\Sigma}_n^{-1}\bm{e}
        \end{pmatrix}\begin{pmatrix}
            b_0 & -b_1\\-b_1 & b_2
        \end{pmatrix}\begin{pmatrix}
         c_3-\min\{-r_0,  s_{2, \kappa}\}  \\ c_4-1
        \end{pmatrix},
    \end{aligned}
\end{equation}
where 
\by
s_{2,\kappa}=  
s_\kappa -\frac{(s_\kappa-\alpha^*)_+}{b_0+1}
\ \ \ 
\mbox{and} \ \ \ 
s_\kappa = \frac{2b_0c_3\kappa-2(c_4-1)b_1\kappa-1}{2b_0\kappa}.
 \ey
\end{enumerate}
\end{theorem}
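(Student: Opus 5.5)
The proof reduces the robust problem to a one-dimensional problem in the portfolio mean $s=\bm{\omega}^{\top}\bm{\mu}_n$. Fix $\bm{\omega}$ with $\bm{\omega}^{\top}\bm{e}=1$. For any $(G,F)\in\mathcal{M}_{n+1}(\bm{\mu},\bm{\Sigma})$, the scalar $Y=\bm{\omega}^{\top}\bm{X}^G-X_0^F$ has mean $\bm{\omega}^{\top}\bm{\mu}_n-\mu_0$ and variance
\[
\sigma^2(\bm{\omega})=\bm{\omega}^{\top}\bm{\Sigma}_n\bm{\omega}-2\bm{b}^{\top}\bm{\omega}+\sigma_0^2 .
\]
Conversely, every distribution in $\mathcal{L}(\bm{\omega}^{\top}\bm{\mu}_n-\mu_0,\sigma(\bm{\omega}))$ is attained as the law of such a $Y$. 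To see this, take $Y$ with the prescribed law, let $\bm{Z}$ be an independent Gaussian vector, and build $\bm{\xi}$ as an affine function of $(Y,\bm{Z})$ that matches $\bm{\mu}$ and $\bm{\Sigma}$. This uses $\bm{\Sigma}\succ 0$: the direction $(\bm{\omega},-1)$ is nonzero, so one can decompose $\bm{\xi}$ along it and along an orthogonal complement.

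Hence the worst-case term equals $f_m(\bm{\omega}^{\top}\bm{\mu}_n-\mu_0,\sigma(\bm{\omega}))$, with threshold $\alpha$. Equivalently, after shifting, it is $\sup_{F\in\mathcal{L}(s,\sigma)}\E[(X-\alpha^*)_+^m]$ with $s=\bm{\omega}^{\top}\bm{\mu}_n$. The objective of \eqref{mean-upm-M} is then $s+\kappa f_m(\cdot)$. By Proposition~\ref{pro-f-h}(i) it is increasing in $\sigma$ for fixed $s$, so for fixed $s$ we minimize $\sigma(\bm{\omega})$ subject to $\bm{\omega}^{\top}\bm{\mu}_n=s$ and $\bm{\omega}^{\top}\bm{e}=1$. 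Lemma~\ref{theo-w*} gives the unique minimizer $\bm{\omega}_s$. Substituting \eqref{w_s}, the minimal variance is a quadratic in $s$:
\[
\sigma^2(\bm{\omega}_s)=\frac{(s-\alpha^*)^2}{b_0}\cdot\frac{1}{1}\text{-type leading term}+\cdots,
\]
more precisely $g(s)=\frac{1}{b_0}\,\bigl(\text{explicit quadratic in } s\bigr)$ with leading coefficient $1/b_0$. The constants $c_3,c_4,c_5,b_1,b_2$ collect into the quantity $q$. The problem is thus reduced to minimizing
\[
\phi(s)=s+\kappa\,\sup_{F\in\mathcal{L}(s,\sqrt{g(s)})}\E[(X-\alpha^*)_+^m]
\]
over $s\le -r_0$. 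This $\phi$ is convex, since it is $h_{m,\kappa}$ composed with an affine map and the square root of a convex quadratic.

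For $m=1$, \eqref{lemma-m1} gives
\[
\phi(s)=s+\tfrac{\kappa}{2}\Bigl(s-\alpha^*+\sqrt{g(s)+(s-\alpha^*)^2}\Bigr).
\]
As $s\to-\infty$, the square root behaves like $|s|\sqrt{1+1/b_0}$, so $\phi(s)\sim s\bigl(1+\tfrac{\kappa}{2}-\tfrac{\kappa}{2}\sqrt{1+1/b_0}\bigr)$. Solutions exist only if this coefficient is negative, i.e.\ $\phi\to+\infty$. That condition is equivalent to $b_0\kappa^2-4(\kappa+1)>0$, i.e.\ $\kappa>\kappa_0$. In the boundary case $\kappa=\kappa_0$, I expect $\phi$ to be strictly decreasing toward a finite infimum that is not attained, giving part (i)(b). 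For $\kappa>\kappa_0$, strict convexity gives a unique stationary point. Setting $\phi'(s)=0$ and squaring yields a quadratic whose correct root is $s_{1,\kappa}$. The factor $\sqrt{q/(b_0\kappa^2-4(\kappa+1))}$ comes from that quadratic's discriminant, and I must also verify $q>0$, which follows from positive definiteness of $\bm{\Sigma}$ via a Schur complement. Convexity then gives the constrained optimum $\min\{-r_0,s_{1,\kappa}\}$, and plugging into \eqref{w_s} yields \eqref{w1*}.

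For $m=2$, \eqref{lemma-m2} gives $\phi(s)=s+\kappa\bigl((s-\alpha^*)_+^2+g(s)\bigr)$, a convex piecewise-quadratic function that tends to $+\infty$ in both directions, so a minimizer exists for every $\kappa>0$. On $s\le\alpha^*$ the stationary point is $s_\kappa$. If $s_\kappa>\alpha^*$, the stationary point on the other piece shrinks the excess by the factor $1/(b_0+1)$, giving $s_{2,\kappa}$. Truncating at $-r_0$ gives \eqref{w2*}.

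I expect the main obstacles to be the algebra identifying $g(s)$ and the first-order condition for $m=1$, in particular choosing the correct root, and the borderline nonexistence case at $\kappa=\kappa_0$.
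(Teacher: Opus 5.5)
Your reduction is the same as the paper's: project from $\mathcal{M}_{n+1}(\bm{\mu},\bm{\Sigma})$ to $\mathcal{L}$, minimize the tracking variance on each slice $\bm{\omega}^{\top}\bm{\mu}_n=s$ via Lemma~\ref{theo-w*}, then solve a one-dimensional convex problem over $s\le -r_0$. Your $m=2$ analysis coincides with the paper's, and your explicit Gaussian construction is a valid substitute for citing the projection property.

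There is, however, a genuine error in the computation that drives part (i). The minimal variance $t^2(s)=\bm{\omega}_s^{\top}\bm{\Sigma}_n\bm{\omega}_s-2\bm{b}^{\top}\bm{\omega}_s+\sigma_0^2$ does not involve $\alpha^*$, and its leading coefficient is $b_0$, not $1/b_0$. The $2\times 2$ matrix in \eqref{w_s} is the inverse of $\begin{pmatrix} c_2 & c_1\\ c_1 & c_0\end{pmatrix}$, which gives $t^2(s)=b_0(c_3-s)^2-2b_1(c_3-s)(c_4-1)+b_2(c_4-1)^2+\sigma_0^2-c_5$. Hence $\sqrt{(s-\alpha^*)^2+t^2(s)}\sim|s|\sqrt{1+b_0}$ as $s\to-\infty$. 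With your $\sqrt{1+1/b_0}$, the condition $1+\frac{\kappa}{2}<\frac{\kappa}{2}\sqrt{1+1/b_0}$ becomes $\kappa^2>4b_0(\kappa+1)$, which agrees with $\kappa_0$ only when $b_0=1$. So the equivalence with $b_0\kappa^2-4(\kappa+1)>0$ that you assert does not follow from what you wrote. With the correct coefficient it does: $(2+\kappa)^2<(1+b_0)\kappa^2$ iff $b_0\kappa^2>4(\kappa+1)$ iff $\kappa>\kappa_0$.

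Once this is fixed, the borderline case no longer needs to be left as an expectation. Write $(s-\alpha^*)^2+t^2(s)=(b_0+1)(s-\bar{s})^2+q/(b_0+1)$ with $\bar{s}=\bigl(c_3b_0-b_1(c_4-1)+\alpha^*\bigr)/(b_0+1)$. The left side is positive for all $s$, so this identity also gives $q>0$ directly. Then
\[
\phi'(s)=1+\frac{\kappa}{2}\Bigl(1+\frac{(b_0+1)(s-\bar{s})}{\sqrt{(b_0+1)(s-\bar{s})^2+q/(b_0+1)}}\Bigr)
\]
is strictly increasing for every $\kappa>0$, with range $\bigl(1+\frac{\kappa}{2}(1-\sqrt{b_0+1}),\,1+\frac{\kappa}{2}(1+\sqrt{b_0+1})\bigr)$.

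If $\kappa\le\kappa_0$, the lower endpoint is nonnegative, so $\phi'>0$ everywhere and $\phi$ has no minimizer on $(-\infty,-r_0]$. This includes $\kappa=\kappa_0$, where the infimum is finite but not attained. If $\kappa>\kappa_0$, $\phi'$ has exactly one zero. Because the fraction must equal $-(\kappa+2)/\kappa<0$, the relevant root of the squared equation is the one below $\bar{s}$, namely $s_{1,\kappa}$.

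This is precisely the paper's sign analysis of $g_{1,\kappa}'$. Your slope-at-$-\infty$ argument rests on the same fact, the value of $\lim_{s\to-\infty}\phi'(s)$, so it is correct once the coefficient is $b_0$.
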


\subsection{Minimization of downside risk alone}

In this subsection, we solve problem \eqref{upm-M} to compare models that minimize the trade-off with models that focus exclusively on downside risk, subject to a portfolio return constraint.

\begin{proposition}
\label{pro-upm-M-sol}
    \begin{enumerate}[(i)] 
        \item   The  robust   EIT  portfolio selection  problem \eqref{upm-M} for   $m=1$ has a unique solution  
        $\boldsymbol{\omega}_{RP_1}^*$,   given by 
            \begin{equation}
            \label{w*pr1}
    \begin{aligned}
        \boldsymbol{\omega}_{RP_1}^*
        &=\boldsymbol{\Sigma}_n^{-1}\boldsymbol{b}-\begin{pmatrix}
            \boldsymbol{\Sigma}_n^{-1}\boldsymbol{\mu}_n & \boldsymbol{\Sigma}_n^{-1}\boldsymbol{e}
        \end{pmatrix}\begin{pmatrix}
            b_0 & -b_1\\-b_1 & b_2
        \end{pmatrix}\begin{pmatrix}
            c_3-\min\left\{-r_0, \,  s_1 \right\}\\ c_4-1
        \end{pmatrix},
    \end{aligned}
\end{equation}
where 
\by
s_1 &=& \frac{c_3b_0-b_1(c_4-1)+\alpha^*}{b_0+1}-\frac{1}{\sqrt{b_0}}\frac{\sqrt{q}}{b_0+1}, \\
q &=& (b_0b_2-b_1^2+b_2)(c_4-1)^2+(c_3-\alpha^*)\big(b_0(c_3-\alpha^*)-2b_1(c_4-1)\big)-(b_0+1)(c_5-\sigma_0^2).
\ey

\item  The robust   EIT portfolio  selection problem \eqref{upm-M} for   $m=2$ has a unique solution   $\boldsymbol{\omega}_{RP_2}^*$,   given by 
\begin{equation}
\label{w*pr2}
    \begin{aligned}
        \boldsymbol{\omega}_{RP_2}^*
        &=\boldsymbol{\Sigma}_n^{-1}\boldsymbol{b}-\begin{pmatrix}
            \boldsymbol{\Sigma}_n^{-1}\boldsymbol{\mu}_n & \boldsymbol{\Sigma}_n^{-1}\boldsymbol{e}
        \end{pmatrix}\begin{pmatrix}
            b_0 & -b_1\\-b_1 & b_2
        \end{pmatrix}\begin{pmatrix}
            c_3-\min \{-r_0, \,  s_2 \} \\ c_4-1
        \end{pmatrix},
    \end{aligned}
\end{equation}
where 
\by
s_2 = \frac{b_0c_3-b_1(c_4-1)}{b_0} - \frac{(b_0c_3-b_1c_4+b_1-b_0\alpha^*)_+}{b_0(b_0+1)}. 
\ey
\end{enumerate}
\end{proposition}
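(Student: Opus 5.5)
The plan is to reduce problem \eqref{upm-M} to a one-dimensional problem in $s=\bm{\omega}^{\top}\bm{\mu}_n$, following the same structure as Theorem \ref{theo-mean-upm}.

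\emph{Step 1: reduce the inner supremum.} For fixed $\bm{\omega}$, the random variable $Y=\bm{\omega}^{\top}\bm{X}-X_0$ has mean $\bm{\omega}^{\top}\bm{\mu}_n-\mu_0$ and variance
\[
\sigma^2(\bm{\omega})=\bm{\omega}^{\top}\bm{\Sigma}_n\bm{\omega}-2\bm{b}^{\top}\bm{\omega}+\sigma_0^2 ,
\]
which is positive since $\bm{\Sigma}\succ0$. Conversely, every univariate distribution with these two moments arises as the law of $Y$ under some $(G,F)\in\mathcal{M}_{n+1}(\bm{\mu},\bm{\Sigma})$. To see this, take $(\bm{X},X_0)=\bm{\mu}+\bm{\Sigma}^{1/2}\bm{Z}$ and choose $\bm{Z}$ so that its component along $\bm{\Sigma}^{1/2}(\bm{\omega},-1)^{\top}$ has the prescribed standardized law, while the orthogonal components are independent with mean zero and unit variance. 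Hence
\[
\sup_{\mathcal{M}_{n+1}}\mathbb{E}[(Y-\alpha)_+^m]=f_m\big(s-\mu_0,\sigma(\bm{\omega})\big)=\tilde f_m(s,\sigma),
\]
where $\tilde f_m(s,\sigma)$ is given by Lemma \ref{lemma-jan-1977} with $\mu-d=s-\alpha^*$.

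\emph{Step 2: inner minimization at fixed $s$.} For each $s$, Proposition \ref{pro-f-h}(i) says $f_m$ is increasing in $\sigma$. So over $\{\bm{\omega}^{\top}\bm{\mu}_n=s,\ \bm{\omega}^{\top}\bm{e}=1\}$ it is optimal to minimize $\sigma^2(\bm{\omega})$, and Lemma \ref{theo-w*} gives the unique minimizer $\bm{\omega}_s$. Substituting $\bm{\omega}_s$ yields an explicit quadratic
\[
g(s):=\sigma^2(\bm{\omega}_s)=\frac{(s-s_0)^2}{b_0}+K ,
\]
with $s_0=\frac{b_0c_3-b_1(c_4-1)}{b_0}$ (the unconstrained tracking-variance minimizer) and some $K>0$. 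The coefficient $1/b_0$ is obtained by expanding the bilinear form in \eqref{w_s}, and I expect $K$ to be expressible through $q$. The problem then becomes
\[
\min_{s\le -r_0}\ \phi_m(s),\qquad \phi_m(s)=\tilde f_m\big(s,\sqrt{g(s)}\big).
\]

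\emph{Step 3: one-dimensional minimization.} $\phi_m$ is convex: $\sqrt{g(s)}$ is convex in $s$ (it is a norm of an affine map), and $f_m$ is convex and increasing in $\sigma$, so the composition is convex. If the unconstrained minimizer $\hat s$ is unique, the constrained minimizer is $\min\{-r_0,\hat s\}$, which gives the stated form with $\bm{\omega}_{\min\{-r_0,\hat s\}}$.

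For $m=2$, $\phi_2(s)=(s-\alpha^*)_+^2+\frac{(s-s_0)^2}{b_0}+K$, which is strictly convex. If $s_0\le\alpha^*$ the minimizer is $s_0$. Otherwise, setting $2(s-\alpha^*)+2(s-s_0)/b_0=0$ gives
\[
s=\frac{s_0+b_0\alpha^*}{b_0+1}=s_0-\frac{b_0(s_0-\alpha^*)}{b_0(b_0+1)} .
\]
Both cases combine into the stated $s_2$.

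For $m=1$, $\phi_1(s)=\tfrac12\big(s-\alpha^*+\sqrt{g(s)+(s-\alpha^*)^2}\big)$. The expression under the root is a quadratic in $s$ with leading coefficient $(b_0+1)/b_0$, so the derivative condition $\sqrt{\cdot}=-\tfrac12(\text{quadratic})'$ reduces to a quadratic equation. Taking the root with $s$ below the vertex (where the derivative of the root term is negative) gives $s_1$, with $q$ appearing as $b_0(b_0+1)\times$ the minimum of that quadratic. Strict convexity and coercivity hold because the slope of $\sqrt{\cdot}$ tends to $\pm\sqrt{(b_0+1)/b_0}>1$ as $s\to\pm\infty$, which dominates the linear term $\tfrac12 s$. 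This also gives uniqueness of the minimizer.

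Uniqueness of the optimal $\bm{\omega}$ follows from uniqueness of $\hat s$ together with the uniqueness in Lemma \ref{theo-w*}. The main obstacle is the algebra in Step 2, namely verifying that $g(s)$ has the claimed form with constant matching $q$, and in the $m=1$ case checking the sign choice and that $q>0$. I would first compute $g$ symbolically via the block-inverse identities for $c_i$ and $b_i$, then match against the formula for $s_1$.
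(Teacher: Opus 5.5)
Your overall route matches the paper's. The paper proves this proposition by pointing to the proof of Theorem \ref{theo-mean-upm}: the projection property, then minimizing the tracking variance on each slice $\bm{\omega}^{\top}\bm{\mu}_n=s$ via Lemma \ref{theo-w*}, then a strictly convex one-dimensional problem truncated at $-r_0$. Equivalently, it is the $\kappa\to\infty$ limit of that theorem.

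The problem is the explicit quadratic in Step 2, on which all your matching to $s_1,s_2$ depends. Write $\bm{\omega}_s=\bm{\Sigma}_n^{-1}(\bm{b}+M\bm{\eta})$ with $M=(\bm{\mu}_n,\bm{e})$ and $\bm{\eta}$ read off from \eqref{w_s}. Then $\sigma^2(\bm{\omega}_s)=\bm{\eta}^{\top}M^{\top}\bm{\Sigma}_n^{-1}M\bm{\eta}+\sigma_0^2-c_5$, which gives (compare \eqref{ts})
\[ t^2(s)=b_0(s-c_3)^2+2b_1(s-c_3)(c_4-1)+b_2(c_4-1)^2+\sigma_0^2-c_5=b_0(s-s_0)^2+K, \]
with $K=(c_4-1)^2/c_0+\sigma_0^2-c_5\ge\sigma_0^2-c_5>0$ by the Schur complement of $\bm{\Sigma}\succ0$. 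So the leading coefficient is $b_0$, not $1/b_0$.

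With your coefficient, the $m=2$ stationarity equation gives $\frac{s_0+b_0\alpha^*}{b_0+1}$, which is not the stated $s_2$. You land on the stated formula only because your next identity, $\frac{s_0+b_0\alpha^*}{b_0+1}=s_0-\frac{b_0(s_0-\alpha^*)}{b_0(b_0+1)}$, is false: its right side equals $\frac{b_0s_0+\alpha^*}{b_0+1}$. With the correct coefficient, $(s-\alpha^*)+b_0(s-s_0)=0$ gives $\frac{b_0s_0+\alpha^*}{b_0+1}=s_0-\frac{s_0-\alpha^*}{b_0+1}$. Since $b_0c_3-b_1c_4+b_1-b_0\alpha^*=b_0(s_0-\alpha^*)$, this is exactly $s_2$ when $s_0>\alpha^*$.

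For $m=1$ the same correction is needed.
\begin{itemize}
\item $Q(s)=(s-\alpha^*)^2+t^2(s)$ has leading coefficient $b_0+1$, not $(b_0+1)/b_0$.
\item Its vertex is $s_c=\frac{b_0s_0+\alpha^*}{b_0+1}$, which is the first term of $s_1$.
\item The condition $(b_0+1)(s-s_c)=-\sqrt{Q(s)}$ forces $s<s_c$ and $(s-s_c)^2=\frac{\min Q}{b_0(b_0+1)}$.
\item Expanding shows $q=(b_0+1)\min Q$, not $b_0(b_0+1)\min Q$. So the root is $s_c-\frac{\sqrt{q}}{\sqrt{b_0}(b_0+1)}=s_1$.
\item $q>0$ because $\min Q\ge K>0$.
\item The asymptotic slope of $\sqrt{Q}$ is $\pm\sqrt{b_0+1}>1$, so your coercivity argument still holds.
\end{itemize}

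Finally, for uniqueness of $\bm{\omega}$ you need $f_1$ and $f_2$ to be strictly increasing in $\sigma$, which both are. Then any optimal $\bm{\omega}$ must minimize the tracking variance on its own slice and hence equals $\bm{\omega}_{s^*}$.
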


\section{Robust  EIT portfolio  under a generalized  
uncertainty set} \label{sec-4} 

In this section, we study problems \eqref{mean-upper-m-rb-k} and \eqref{upper-m-rb-k}, as well as \eqref{MV-model-D_n+1} and \eqref{V-model-D_n+1}, under a more general uncertainty set ${\cal D}_{n+1}$, compared with $\mathcal{M}_{n+1}(\bm{\mu}, \bm{\Sigma})$.
 In practice, the true mean loss vector and covariance matrix of $({\bm X}, X_0)$ are not observable, and practitioners typically rely on their empirical estimates, $\hat{\bm{\mu}}$ and $\hat{\bm{\Sigma}}$, obtained from historical data. However, these estimates often contain significant errors. As noted by \citet{delage2010distributionally}, one can rarely be fully confident in such estimates. This motivates the use of more general uncertainty sets. One of the  common approachs is to consider mean-covariance-based uncertainty sets, in which the true mean vector and covariance matrix are assumed to lie within neighborhoods centered at $\hat{\bm{\mu}}$ and $\hat{\bm{\Sigma}}$. Various forms of such uncertainty sets have been studied in the literature (see, e.g., \citet{delage2010distributionally}, \citet{garlappi2007portfolio}, \cite{XLS18b}, \cite{GSG2024}, and references therein).

In this section, we incorporate both distributional and parameter uncertainty by adopting the uncertainty set proposed in \cite{kang2019data}.
  \begin{equation}\label{set-D}
\mathcal{D}_{n+1}(\gamma_1, \gamma_2)=
\left\{
(G, F)\in\mathcal{F}(\mathbb{R}^{n+1})\;:\;
\begin{array}{l}
\big(\mathbb{E}(\bm{X}^G) - \hat{\bm{\mu}}_n\big)^{\top}
\bm{\Sigma}_n^{-1}
\big(\mathbb{E}(\bm{X}^G) - \hat{\bm{\mu}}_n\big)
\le \gamma_1, \ \mathbb{E}[X_0^F]=\hat{\mu}_0, \\
  \E[(\boldsymbol{\xi} -  \mathbb{E}[\bm{\xi}]) (\boldsymbol{\xi} -  \mathbb{E}[\bm{\xi}])^\top] = 
    \boldsymbol{\Sigma} \succ 0, \   \   \|\boldsymbol{\Sigma}-\hat{\boldsymbol{\Sigma}}\|_{F}\leq \gamma_2. 
\end{array}
\right\},
\end{equation}
where $\gamma_1 \ge 0$ and $\gamma_2 \ge 0$ are constants, $\hat{\bm{\mu}}_n \in \mathbb{R}^n$,  $\hat{\mu}_0 \in \R$, 
and $\hat{\bm{\Sigma}} \in S_{n+1}^+$.  Here, the Frobenius norm $\|\cdot \|_F$ is 
defined as 
 $\|\bm{A}\|_F=\sqrt{\sum_{i=1}^n\sum_{j=1}^n a_{ij}^2}=\sqrt{\operatorname{tr}(\bm{A}^{\top}\bm{A})}$ for an $n\times n$ matrix $\bm{A}=(a_{ij})_{n\times n}$.

In the uncertainty set $\mathcal{D}_{n+1}(\gamma_1, \gamma_2)$,  the condition $ \big(\mathbb{E}(\bm{X}^G) - \hat{\bm{\mu}}_n\big)^{\top}
\bm{\Sigma}_n^{-1}
\big(\mathbb{E}(\bm{X}^G) - \hat{\bm{\mu}}_n\big)
\le \gamma_1$ 
captures the uncertainty in the expected losses of the risky assets, reflecting the fact that these expected losses are more sensitive to sample data. The condition $\mathbb{E}[X_0^F]=\hat{\mu}_0$ imposes a fixed mean on the index loss $X_0$, consistent with the observation that the expected loss of the index is generally more stable, so its sample mean can serve as a reasonable   estimate.
 Finally,  the condition  
$\|\boldsymbol{\Sigma}-\hat{\boldsymbol{\Sigma}}\|_{F}\leq \gamma_2$ describes uncertainty in both the variances of the risky assets and the index, as well as the covariances between the risky assets and the index.

Under this uncertainty set, i.e.,  when    
${\cal D}_{n+1} =\mathcal{D}_{n+1}(\gamma_1, \gamma_2)$,
  problems  \eqref{mean-upper-m-rb-k} and 
 \eqref{upper-m-rb-k} reduce to the following    problems: 
 \begin{equation} 
\label{mean-rho-D}
    \min_{\boldsymbol{\omega} \in {\cal W}_r}  \, \  \sup_{(G,F)\in {\cal D}_{n+1}(\gamma_1,\gamma_2)} \, \mathbb{E}[\bm{\omega}^{\top}\bm{X}^G]+\kappa \,  \sup_{(G,F)\in {\cal D}_{n+1}(\gamma_1,\gamma_2)} \,   \mathbb{E}[(\boldsymbol{\omega}^{\top}\bm{X}^G-X_0^F-\alpha)_+^m] \\
   \end{equation}
and 
\begin{equation}
\label{rho-D}    \min_{\boldsymbol{\omega} \in {\cal W}_r}  \,   \sup_{(G,F)\in {\cal D}_{n+1}(\gamma_1,\gamma_2)} \,   \mathbb{E}[(\boldsymbol{\omega}^{\top}\bm{X}^G-X_0^F-\alpha)_+^m].
\end{equation}  
Similarly,  when    
${\cal D}_{n+1} =\mathcal{D}_{n+1}(\gamma_1, \gamma_2)$,
   problems   \eqref{MV-model-D_n+1} and  \eqref{V-model-D_n+1} reduce 
to  following problems: 
 \begin{equation}
\label{MV-model-D-g1-g2}
    \min_{\boldsymbol{\omega} \in {\cal W}_r}  \, \  \sup_{(G,F)\in {\cal D}_{n+1}(\gamma_1,\gamma_2)} \, \mathbb{E}[\bm{\omega}^{\top}\bm{X}^G]+\kappa \, 
     \sup_{(G,F)\in {\cal D}_{n+1}(\gamma_1,\gamma_2)} \,   {\rm Var}(\boldsymbol{\omega}^{\top}\bm{X}^G) 
\end{equation}
and 
\begin{equation}
\label{V-model-D-g1-g2}
    \min_{\boldsymbol{\omega} \in {\cal W}_r}  \,   \sup_{(G,F)\in {\cal D}_{n+1}(\gamma_1,\gamma_2)} \,   {\rm Var}(\boldsymbol{\omega}^{\top}\bm{X}^G). 
\end{equation}  

To solve problems \eqref{mean-rho-D}-\eqref{V-model-D-g1-g2},  
  for a given $\hat{\bm{\mu}}_n \in \R^n$,
 $\hat{\mu}_0 \in \R$, and $\hat{\bm{\Sigma}}_n \in S_{n}^{+}$,   we define the following sets of vectors  and matrices:
\begin{equation}\label{uncertainty set in short}
\begin{aligned}
      \mathcal{U}_{\hat{\bm{\mu}}} =& \big \{\bm{\mu} = (\bm{\mu}_n, \mu_0)\in \mathbb{R}^{n+1}:\ (\bm{\mu}_n-\hat{\bm{\mu}}_n)^{\top}\hat{\bm{\Sigma}}_n^{-1}(\bm{\mu}_n-\hat{\bm{\mu}}_n)\leq \gamma_{1}, \ \mu_0=\hat{\mu}_0 \big \},\\
      \mathcal{U}_{\hat{\bm{\Sigma}}}  =&  \big \{\bm{\Sigma}\in S^{+}_{n+1}: \  \ \|\bm{\Sigma}-\hat{\bm{\Sigma}}\|_{F}\leq \gamma_2 \big \}, \\
  \mathcal{U}_{(\hat{\bm{\mu}},\hat{\bm{\Sigma}})}  =& \ \, \mathcal{U}_{\hat{\bm{\mu}}}  \times  \mathcal{U}_{\hat{\bm{\Sigma}}}.
  \end{aligned}
 \end{equation}

 
We first show that problem \eqref{mean-rho-D}  is equivalent to 
   a convex optimization problem for $m=1$ and $m=2$.  

\begin{theorem} 
  \label{th-EIT-upm-cop} Assume that ${\cal W}$ is a nonempty convex set of $\mathbb{R}^n$. 
  
    (i) The robust  EIT portfolio selection problem \eqref{mean-rho-D} for  $m=1$ is equivalent 
  to the following  convex optimization problem: 
\begin{equation}
\label{ob-m=1}
\begin{aligned}
        \min_{\bm{\omega} \in {\cal W}, \, s, \, t \in \R^+}  & \ \bm{\omega}^{\top}\hat{\bm{\mu}}_n+\sqrt{\gamma_1} \, t +\frac{\kappa}{2+\kappa} \, 
        \sqrt{(\bm{\omega}^{\top}\hat{\bm{\mu}}_n+\sqrt{\gamma_1} \, t-\alpha^*)^2+s^2} \\
        \textrm{s.t.} & \ \sqrt{\bm{a}^{\top}(\hat{\bm{\Sigma}}+\gamma_2 \boldsymbol{I}_{n+1})\boldsymbol{a}} \leq s,\\
        & \ \sqrt{\bm{\omega}^{\top}\hat{\bm{\Sigma}}_n\bm{\omega}}\leq t,\\
          & \ \sqrt{\gamma_1} \sqrt{\bm{\omega}^{\top}\hat{\bm{\Sigma}}_n\bm{\omega}}\leq -\bm{\omega}^{\top}\hat{\bm{\mu}}_n- r,\ \ \ \boldsymbol{\omega}^{\top}\boldsymbol{e}=1.
    \end{aligned}
\end{equation} 

(ii) The  robust  EIT  portfolio selection problem \eqref{mean-rho-D} for  $m=2$ is equivalent  to the following  convex optimization problem:
 \begin{equation}
 \label{ob-m=2}
\begin{aligned}
        \min_{\boldsymbol{\omega} \in {\cal W}, \, s, \, t  \in \R^+} \ & \ \bm{\omega}^{\top}\hat{\bm{\mu}}_n+\sqrt{\gamma_1} \, t + \kappa \big((\bm{\omega}^{\top}\hat{\bm{\mu}}_n+\sqrt{\gamma_1} \, t-\alpha^*)_+^2+s^2\big) \\
        \textrm{s.t.}     & \ \sqrt{\boldsymbol{a}^{\top}(\hat{\boldsymbol{\Sigma}}+\gamma_2 \boldsymbol{I}_{n+1})\boldsymbol{a}} \leq s,\\
        & \ \sqrt{\bm{\omega}^{\top}\hat{\bm{\Sigma}}_n\bm{\omega}}\leq t,\\
        & \ \sqrt{\gamma_1}\sqrt{\bm{\omega}^{\top}\hat{\bm{\Sigma}}_n\bm{\omega}}\leq-\bm{\omega}^{\top}\hat{\bm{\mu}}_n- r,\  \ \  \boldsymbol{\omega}^{\top}\boldsymbol{e}=1.
    \end{aligned}
\end{equation} 
\end{theorem}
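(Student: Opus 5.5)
The plan is to reduce each worst-case term in \eqref{mean-rho-D} to a closed form in $\bm{\omega}$, using the product structure $\mathcal{U}_{(\hat{\bm{\mu}},\hat{\bm{\Sigma}})}=\mathcal{U}_{\hat{\bm{\mu}}}\times\mathcal{U}_{\hat{\bm{\Sigma}}}$ from \eqref{uncertainty set in short}. I will then replace the two nonsmooth norms by epigraph variables $s,t$. Throughout, write $\bm{a}=(\bm{\omega}^{\top},-1)^{\top}\in\R^{n+1}$, so that $\bm{\omega}^{\top}\bm{X}-X_0=\bm{a}^{\top}\bm{\xi}$. I read the ellipsoid in $\mathcal{D}_{n+1}(\gamma_1,\gamma_2)$ as being taken with respect to $\hat{\bm{\Sigma}}_n$, as in $\mathcal{U}_{\hat{\bm{\mu}}}$. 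With this reading the constraints on the mean and on the covariance decouple, and each sup over $\mathcal{D}_{n+1}(\gamma_1,\gamma_2)$ can be written as a sup over $(\bm{\mu},\bm{\Sigma})\in\mathcal{U}_{(\hat{\bm{\mu}},\hat{\bm{\Sigma}})}$ of a sup over distributions with those fixed moments.

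\textbf{Step 1 (mean term and return constraint).} Since $\mathbb{E}[\bm{\omega}^{\top}\bm{X}^G]=\bm{\omega}^{\top}\bm{\mu}_n$, Cauchy--Schwarz in the $\hat{\bm{\Sigma}}_n^{-1}$ inner product gives
\[
\sup_{\bm{\mu}\in\mathcal{U}_{\hat{\bm{\mu}}}}\bm{\omega}^{\top}\bm{\mu}_n=\bm{\omega}^{\top}\hat{\bm{\mu}}_n+\sqrt{\gamma_1}\sqrt{\bm{\omega}^{\top}\hat{\bm{\Sigma}}_n\bm{\omega}},
\]
with the maximiser $\bm{\mu}_n=\hat{\bm{\mu}}_n+\sqrt{\gamma_1}\,\hat{\bm{\Sigma}}_n\bm{\omega}/\sqrt{\bm{\omega}^{\top}\hat{\bm{\Sigma}}_n\bm{\omega}}$. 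This identity immediately turns the worst-case return constraint in ${\cal W}_r$ into the second-order cone constraint $\sqrt{\gamma_1}\sqrt{\bm{\omega}^{\top}\hat{\bm{\Sigma}}_n\bm{\omega}}\le-\bm{\omega}^{\top}\hat{\bm{\mu}}_n-r$.

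\textbf{Step 2 (downside term).} Fix $(\bm{\mu},\bm{\Sigma})$. The scalar $\bm{a}^{\top}\bm{\xi}$ then has mean $\bm{\omega}^{\top}\bm{\mu}_n-\hat{\mu}_0$ and variance $\bm{a}^{\top}\bm{\Sigma}\bm{a}$. By Lemma \ref{lemma-jan-1977}, the inner sup equals $f_m\big(\bm{\omega}^{\top}\bm{\mu}_n-\hat\mu_0,\sqrt{\bm{a}^{\top}\bm{\Sigma}\bm{a}}\big)$, which is the quantity in \eqref{f-mu-sigma} with threshold $\alpha$. One must check that the extremal two-point law of $\bm{a}^{\top}\bm{\xi}$ lifts to a joint law in $\mathcal{F}(\R^{n+1})$ with the given mean and covariance; I would take $\bm{\xi}=\bm{\mu}+\bm{\Sigma}\bm{a}\,Z/(\bm{a}^{\top}\bm{\Sigma}\bm{a})+\bm{W}$, with $Z$ having the extremal law and $\bm{W}$ independent with the complementary covariance. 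Proposition \ref{pro-f-h}(i) says $f_m$ is increasing in both arguments, so the outer sup separates into two pieces. The mean piece is handled by Step 1. For the covariance piece, $\bm{a}^{\top}\bm{\Sigma}\bm{a}\le\bm{a}^{\top}\hat{\bm{\Sigma}}\bm{a}+\gamma_2\|\bm{a}\|^2$ by Cauchy--Schwarz in the Frobenius inner product. This bound is attained at $\bm{\Sigma}=\hat{\bm{\Sigma}}+\gamma_2\bm{a}\bm{a}^{\top}/\|\bm{a}\|^2$, which stays positive definite and lies in $\mathcal{U}_{\hat{\bm{\Sigma}}}$. The maximal variance is therefore $\bm{a}^{\top}(\hat{\bm{\Sigma}}+\gamma_2\bm{I}_{n+1})\bm{a}$.

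\textbf{Step 3 (assembly and convexity).} Write $M(\bm{\omega})=\bm{\omega}^{\top}\hat{\bm{\mu}}_n+\sqrt{\gamma_1}\sqrt{\bm{\omega}^{\top}\hat{\bm{\Sigma}}_n\bm{\omega}}$ and $S(\bm{\omega})^2=\bm{a}^{\top}(\hat{\bm{\Sigma}}+\gamma_2\bm{I}_{n+1})\bm{a}$.
\begin{itemize}
\item For $m=2$, \eqref{lemma-m2} makes the objective $M+\kappa\big((M-\alpha^*)_+^2+S^2\big)$.
\item For $m=1$, \eqref{lemma-m1} gives $M+\tfrac{\kappa}{2}\big(M-\alpha^*+\sqrt{(M-\alpha^*)^2+S^2}\big)$. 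This equals $(1+\tfrac{\kappa}{2})\big[M+\tfrac{\kappa}{2+\kappa}\sqrt{(M-\alpha^*)^2+S^2}\big]$ minus a constant, so it has the same minimisers as \eqref{ob-m=1}.
\end{itemize}
Finally, introduce $t\ge\sqrt{\bm{\omega}^{\top}\hat{\bm{\Sigma}}_n\bm{\omega}}$ and $s\ge S$. Both objectives are nondecreasing in $s\ge0$ and in $t$: in $t$ because $\tfrac{\kappa}{2+\kappa}<1$ makes the $m=1$ objective increasing in $M$, and by Proposition \ref{pro-f-h}(i) for $m=2$. Hence the relaxed constraints are tight at any optimum, and the epigraph problems are equivalent to the original ones. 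Convexity follows because each objective is a nondecreasing convex function of the affine quantity $\bm{\omega}^{\top}\hat{\bm{\mu}}_n+\sqrt{\gamma_1}t$ and of $s\ge0$, while all constraints are second-order cone constraints or affine.

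The main obstacle is Step 2. I need to justify interchanging the sups and exchanging the sup over distributions with the sup over moments, which requires the attainability and positive-definiteness check on the maximising $\bm{\Sigma}$ and the lifting of the extremal univariate law. The decoupling also depends on the ellipsoid being defined via $\hat{\bm{\Sigma}}_n$ rather than the variable $\bm{\Sigma}_n$. If the literal $\bm{\Sigma}_n^{-1}$ were intended, I would first argue that the worst case is still attained at the stated pair, and that is where I expect the argument to be most delicate.
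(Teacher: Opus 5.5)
Your proposal is correct and follows essentially the paper's route: the projection property plus Lemma \ref{lemma-jan-1977}, then monotonicity of $f_m$ to push the suprema onto $\bm{a}^{\top}\bm{\mu}$ and $\bm{a}^{\top}\bm{\Sigma}\bm{a}$ separately, then the epigraph variables $s,t$ and the rescaling by $1+\kappa/2$ for $m=1$. Your Cauchy--Schwarz computations and explicit lifting prove what the paper simply cites as equations (7)--(8) of \citet{kang2022robust} and the projection property of \citet{07PI}. Your decoupled $\hat{\bm{\Sigma}}_n^{-1}$ reading is exactly what the paper's proof uses through $\mathcal{U}_{\hat{\bm{\mu}}}$, and it is needed: under the literal $\bm{\Sigma}_n^{-1}$ reading the worst-case mean becomes $\bm{\omega}^{\top}\hat{\bm{\mu}}_n+\sqrt{\gamma_1}\sqrt{\bm{\omega}^{\top}(\hat{\bm{\Sigma}}_n+\gamma_2\bm{I}_n)\bm{\omega}}$, so your fallback plan would not recover the stated formulas. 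Also, for $m=2$ with $\bm{a}^{\top}\bm{\mu}\le\alpha$ the supremum in \eqref{lemma-m2} is not attained, so you should lift an approximating sequence rather than an extremal law.
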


\begin{proposition}
 \label{th-EIT-upm-cop-only}
Assume that $\cal W$ is a nonempty convex set of $\mathbb{R}^n$.

(i) The  robust  EIT portfolio selection problem \eqref{rho-D} for  $m=1$ is equivalent 
  to the following  convex optimization problem: 
\begin{equation}
\label{ob-m=1-only}
\begin{aligned}
        \min_{\bm{\omega} \in {\cal W}, \, s, \, t \in \R^+}  & \  \bm{\omega}^{\top}\hat{\bm{\mu}}_n+\sqrt{\gamma_1} \, t  + \sqrt{(\bm{\omega}^{\top}\hat{\bm{\mu}}_n+\sqrt{\gamma_1} \, t-\alpha^*)^2+s^2} \\
        \textrm{s.t.} & \ \sqrt{\bm{a}^{\top}(\hat{\bm{\Sigma}}+\gamma_2 \boldsymbol{I}_{n+1})\boldsymbol{a}} \leq s,\\
        & \ \sqrt{\bm{\omega}^{\top}\hat{\bm{\Sigma}}_n\bm{\omega}}\leq t,\\
          & \ \sqrt{\gamma_1} \sqrt{\bm{\omega}^{\top}\hat{\bm{\Sigma}}_n\bm{\omega}}\leq -\bm{\omega}^{\top}\hat{\bm{\mu}}_n- r,\  \ \boldsymbol{\omega}^{\top}\boldsymbol{e}=1.
    \end{aligned}
\end{equation} 

(ii) The  robust EIT  portfolio selection problem \eqref{rho-D} for  $m=2$ is equivalent  to the following  convex optimization problem:
 \begin{equation}
 \label{ob-m=2-only}
\begin{aligned}
        \min_{\boldsymbol{\omega} \in {\cal W}, \, s, \, t \in \R^+} \ & (\bm{\omega}^{\top}\hat{\bm{\mu}}_n+\sqrt{\gamma_1} \, t-\alpha^*)_+^2+s^2 \\
        \textrm{s.t.}     & \ \sqrt{\boldsymbol{a}^{\top}(\hat{\boldsymbol{\Sigma}}+\gamma_2 \boldsymbol{I}_{n+1})\boldsymbol{a}} \leq s,\\
        & \ \sqrt{\bm{\omega}^{\top}\hat{\bm{\Sigma}}_n\bm{\omega}}\leq t,\\
        & \ \sqrt{\gamma_1}\sqrt{\bm{\omega}^{\top}\hat{\bm{\Sigma}}_n\bm{\omega}}\leq-\bm{\omega}^{\top}\hat{\bm{\mu}}_n- r,\ \ \ \boldsymbol{\omega}^{\top}\boldsymbol{e}=1.
    \end{aligned}
\end{equation} 
\end{proposition}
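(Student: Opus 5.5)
The plan is to fix $\bm{\omega}$, compute the worst-case downside risk over $\mathcal{D}_{n+1}(\gamma_1,\gamma_2)$ in closed form, and then pass to an epigraph reformulation. Throughout, $\bm{a}=(\bm{\omega}^{\top},-1)^{\top}$, so that $\bm{\omega}^{\top}\bm{X}^G-X_0^F=\bm{a}^{\top}\bm{\xi}$. In the worst-case computation I read the ellipsoid constraint on the mean with $\hat{\bm{\Sigma}}_n$, as in $\mathcal{U}_{\hat{\bm{\mu}}}$, and I treat the covariance part of the uncertainty set as exactly $\mathcal{U}_{\hat{\bm{\Sigma}}}$, as the definitions \eqref{uncertainty set in short} suggest. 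This two-step structure is the same one I would use for Theorem \ref{th-EIT-upm-cop}, with the mean term dropped.

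\textbf{Step 1: reduce the inner supremum to the moment sets.} I split the supremum as
$$\sup_{(\bm{\mu},\bm{\Sigma})\in\mathcal{U}_{(\hat{\bm{\mu}},\hat{\bm{\Sigma}})}}\ \sup_{(G,F)\in\mathcal{M}_{n+1}(\bm{\mu},\bm{\Sigma})}\mathbb{E}[(\bm{a}^{\top}\bm{\xi}-\alpha)_+^m].$$
For fixed $(\bm{\mu},\bm{\Sigma})$, the law of $\bm{a}^{\top}\bm{\xi}$ ranges over exactly $\mathcal{L}\big(\bm{a}^{\top}\bm{\mu},\sqrt{\bm{a}^{\top}\bm{\Sigma}\bm{a}}\big)$, where $\bm{a}^{\top}\bm{\mu}=\bm{\omega}^{\top}\bm{\mu}_n-\hat{\mu}_0$.
\begin{itemize}
\item One inclusion is trivial.
\item For the converse, given a univariate $Y$ with the right mean and variance, I build $\bm{\xi}$ with mean $\bm{\mu}$ and covariance $\bm{\Sigma}$ and $\bm{a}^{\top}\bm{\xi}=Y$. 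Take $\bm{\xi}=\bm{\mu}+\bm{\Sigma}\bm{a}\,(Y-\bm{a}^{\top}\bm{\mu})/(\bm{a}^{\top}\bm{\Sigma}\bm{a})+\bm{Z}$, where $\bm{Z}$ is independent with covariance $\bm{\Sigma}-\bm{\Sigma}\bm{a}\bm{a}^{\top}\bm{\Sigma}/(\bm{a}^{\top}\bm{\Sigma}\bm{a})\succeq 0$.
\end{itemize}
This projection step is the one genuinely probabilistic point. With it, Lemma \ref{lemma-jan-1977} gives the inner supremum as $f_m\big(\bm{\omega}^{\top}\bm{\mu}_n-\hat{\mu}_0,\sqrt{\bm{a}^{\top}\bm{\Sigma}\bm{a}}\big)$, with threshold $\alpha$, i.e. in terms of $\bm{\omega}^{\top}\bm{\mu}_n-\alpha^*$.

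\textbf{Step 2: maximize over the parameter sets.} By Proposition \ref{pro-f-h}(i), $f_m$ is increasing in both arguments. Since the mean and covariance constraints decouple, I maximize each separately.
\begin{itemize}
\item By Cauchy--Schwarz in the $\hat{\bm{\Sigma}}_n^{-1}$ geometry, $\sup_{\mathcal{U}_{\hat{\bm{\mu}}}}\bm{\omega}^{\top}\bm{\mu}_n=\bm{\omega}^{\top}\hat{\bm{\mu}}_n+\sqrt{\gamma_1}\sqrt{\bm{\omega}^{\top}\hat{\bm{\Sigma}}_n\bm{\omega}}$.
\item By the same argument, $\inf_{\mathcal{D}_{n+1}}\mathbb{E}[-\bm{\omega}^{\top}\bm{X}^G]=-\bm{\omega}^{\top}\hat{\bm{\mu}}_n-\sqrt{\gamma_1}\sqrt{\bm{\omega}^{\top}\hat{\bm{\Sigma}}_n\bm{\omega}}$, which yields the return constraint in \eqref{ob-m=1-only} and \eqref{ob-m=2-only}.
\item For the covariance, $\bm{a}^{\top}\bm{\Sigma}\bm{a}=\bm{a}^{\top}\hat{\bm{\Sigma}}\bm{a}+\operatorname{tr}\big((\bm{\Sigma}-\hat{\bm{\Sigma}})\bm{a}\bm{a}^{\top}\big)\le\bm{a}^{\top}\hat{\bm{\Sigma}}\bm{a}+\gamma_2\|\bm{a}\|^2$ by Cauchy--Schwarz for the Frobenius inner product.
\item Equality holds at $\bm{\Sigma}=\hat{\bm{\Sigma}}+\gamma_2\bm{a}\bm{a}^{\top}/\|\bm{a}\|^2$, which remains positive definite, so the supremum is attained inside $\mathcal{U}_{\hat{\bm{\Sigma}}}$ and equals $\bm{a}^{\top}(\hat{\bm{\Sigma}}+\gamma_2\bm{I}_{n+1})\bm{a}$.
\end{itemize}
Substituting into \eqref{lemma-m1} and \eqref{lemma-m2}: for $m=1$ the worst case is $\tfrac12\big(M+\sqrt{M^2+S^2}\big)$, and for $m=2$ it is $M_+^2+S^2$. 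Here $M=\bm{\omega}^{\top}\hat{\bm{\mu}}_n+\sqrt{\gamma_1}\sqrt{\bm{\omega}^{\top}\hat{\bm{\Sigma}}_n\bm{\omega}}-\alpha^*$ and $S^2=\bm{a}^{\top}(\hat{\bm{\Sigma}}+\gamma_2\bm{I}_{n+1})\bm{a}$. Doubling the $m=1$ objective and shifting it by the constant $\alpha^*$ gives exactly the objective in \eqref{ob-m=1-only} and does not change minimizers.

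\textbf{Step 3: epigraph variables and convexity.} I replace $S$ and $\sqrt{\bm{\omega}^{\top}\hat{\bm{\Sigma}}_n\bm{\omega}}$ by $s$ and $t$ with the stated inequality constraints.
\begin{itemize}
\item The $m=1$ objective $x+\sqrt{x^2+s^2}$ is nondecreasing in $x$, since its derivative $1+x/\sqrt{x^2+s^2}\ge 0$, and nondecreasing in $s\ge 0$.
\item The $m=2$ objective $(x)_+^2+s^2$ is nondecreasing in $x$ and in $s\ge 0$.
\item Since $x=\bm{\omega}^{\top}\hat{\bm{\mu}}_n+\sqrt{\gamma_1}t-\alpha^*$ increases with $t$, at any optimum we may lower $s$ and $t$ to equality. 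Hence the optimal values and the $\bm{\omega}$-parts of the solutions coincide with those of \eqref{rho-D}.
\end{itemize}
Convexity follows because both objectives are jointly convex (Proposition \ref{pro-f-h}(ii), or directly: a norm plus a linear term, and a square of a convex nonnegative function plus $s^2$). The constraints are second-order cone constraints together with the convex set $\mathcal{W}$ and a linear equality.

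I expect the main obstacle to be the rigorous justification of Step 1. The first part is that the set of laws of $\bm{a}^{\top}\bm{\xi}$ is all of $\mathcal{L}(\cdot,\cdot)$, which needs the explicit construction above and the observation that $\bm{a}\neq\bm{0}$ automatically, so $\bm{a}^{\top}\bm{\Sigma}\bm{a}>0$. The second part is checking that the supremizing covariance is admissible: it must stay in $S^{+}_{n+1}$ and be compatible with the fixed index mean $\hat{\mu}_0$. If attainment fails in edge cases, the supremum is still approached, so the same formulas hold.
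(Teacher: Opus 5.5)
Your proposal is correct and follows essentially the same route as the paper. The paper's proof of this proposition just defers to the argument for Theorem~\ref{th-EIT-upm-cop}: split the supremum over $\mathcal{U}_{(\hat{\bm{\mu}},\hat{\bm{\Sigma}})}\times\mathcal{M}_{n+1}(\bm{\mu},\bm{\Sigma})$, project to $\mathcal{L}(\cdot,\cdot)$ and apply Lemma~\ref{lemma-jan-1977}, evaluate the mean and covariance suprema in closed form, and introduce $s,t$ by monotonicity and convexity. Your only departures are that you prove the projection property and the two closed-form suprema directly instead of citing them from \citet{kang2022robust}, and that you obtain the $m=1$ objective by the exact rescaling $2(\cdot)+\alpha^*$ rather than by the paper's informal $\kappa\to\infty$ remark; this makes your version more self-contained but is not a different approach.
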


Next,  we  show that problems  \eqref{MV-model-D-g1-g2} and   \eqref{V-model-D-g1-g2}   are  also equivalent to 
   convex optimization problems.
\begin{proposition}
 \label{pro-ob-MV-model-D-g1-g2}
 Assume that $\cal W$ is a nonempty convex set of $\mathbb{R}^n$.

 (i) The robust mean-variance  portfolio selection problem \eqref{MV-model-D-g1-g2} 
 is equivalent  
  to the following  convex optimization problem: 
\begin{equation}
\label{ob-MV-D}
\begin{aligned}
        \min_{\bm{\omega} \in {\cal W}, \, s, \, t \in \R^+}  & \ \bm{\omega}^{\top}\hat{\bm{\mu}}_n+\sqrt{\gamma_1} \, t + \kappa s^2 \\
        \textrm{s.t.} \,  & \ \sqrt{\bm{\omega}^{\top}(\hat{\bm{\Sigma}}_n+\gamma_2 \boldsymbol{I}_{n})\boldsymbol{\omega}} \leq s,\\
        & \ \sqrt{\bm{\omega}^{\top}\hat{\bm{\Sigma}}_n\bm{\omega}}\leq t,\\
          & \ \sqrt{\gamma_1} \sqrt{\bm{\omega}^{\top}\hat{\bm{\Sigma}}_n\bm{\omega}}\leq -\bm{\omega}^{\top}\hat{\bm{\mu}}_n- r, \  \ \boldsymbol{\omega}^{\top}\boldsymbol{e}=1.
    \end{aligned}
\end{equation} 

(ii) The  robust  Markowitz's portfolio selection problem \eqref{V-model-D-g1-g2} is equivalent  to the following  convex optimization problem:
 \begin{equation}
 \label{ob-V-D}
\begin{aligned}
        \min_{\boldsymbol{\omega} \in {\cal W}, \, s  \in \R^+}  & \  s^2 \\
        \textrm{s.t.}     & \  \sqrt{\boldsymbol{\omega}^{\top}(\hat{\boldsymbol{\Sigma}}_n+\gamma_2 \boldsymbol{I}_{n})\boldsymbol{\omega}} \leq s,\\
        & \ \sqrt{\gamma_1}\sqrt{\bm{\omega}^{\top}\hat{\bm{\Sigma}}_n\bm{\omega}}\leq-\bm{\omega}^{\top}\hat{\bm{\mu}}_n- r,\ \  \boldsymbol{\omega}^{\top}\boldsymbol{e}=1.
    \end{aligned}
\end{equation} 
       \end{proposition}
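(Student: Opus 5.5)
We reduce both problems to computing two worst-case quantities over $\mathcal{D}_{n+1}(\gamma_1,\gamma_2)$: the worst-case mean $\sup \mathbb{E}[\bm{\omega}^{\top}\bm{X}^G]$ and the worst-case variance $\sup {\rm Var}(\bm{\omega}^{\top}\bm{X}^G)$, together with the worst-case form of the return constraint defining ${\cal W}_r$. The uncertainty set decouples as $\mathcal{U}_{\hat{\bm{\mu}}}\times\mathcal{U}_{\hat{\bm{\Sigma}}}$: the mean of $\bm{X}^G$ ranges over an ellipsoid, and the covariance over a Frobenius ball intersected with $S^+_{n+1}$. Here I take the ellipsoid to be shaped by $\hat{\bm{\Sigma}}_n$, as in \eqref{uncertainty set in short}. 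For fixed mean and covariance, $\mathbb{E}[\bm{\omega}^{\top}\bm{X}^G]$ and ${\rm Var}(\bm{\omega}^{\top}\bm{X}^G)$ are determined. So each supremum splits into a sup over $\bm{\mu}_n$ and a sup over $\bm{\Sigma}$ separately.

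For the mean, I would use Cauchy--Schwarz in the $\hat{\bm{\Sigma}}_n^{-1}$ inner product:
\[
\sup_{\bm{\mu}\in\mathcal{U}_{\hat{\bm{\mu}}}}\bm{\omega}^{\top}\bm{\mu}_n=\bm{\omega}^{\top}\hat{\bm{\mu}}_n+\sqrt{\gamma_1}\sqrt{\bm{\omega}^{\top}\hat{\bm{\Sigma}}_n\bm{\omega}},
\]
attained at $\bm{\mu}_n=\hat{\bm{\mu}}_n+\sqrt{\gamma_1}\,\hat{\bm{\Sigma}}_n\bm{\omega}/\sqrt{\bm{\omega}^{\top}\hat{\bm{\Sigma}}_n\bm{\omega}}$, or at any point of the ellipsoid if $\bm{\omega}^{\top}\hat{\bm{\Sigma}}_n\bm{\omega}=0$. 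Since $\mathbb{E}[-\bm{\omega}^{\top}\bm{X}^G]$ has infimum equal to minus this quantity, the constraint $\inf\mathbb{E}[-\bm{\omega}^{\top}\bm{X}^G]\ge r$ becomes $\sqrt{\gamma_1}\sqrt{\bm{\omega}^{\top}\hat{\bm{\Sigma}}_n\bm{\omega}}\le-\bm{\omega}^{\top}\hat{\bm{\mu}}_n-r$. This is a second-order cone constraint, which gives the last line in both \eqref{ob-MV-D} and \eqref{ob-V-D}.

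For the variance, note that ${\rm Var}(\bm{\omega}^{\top}\bm{X}^G)=\bm{a}^{\top}\bm{\Sigma}\bm{a}$ with $\bm{a}=(\bm{\omega},0)^{\top}$, so it equals $\bm{\omega}^{\top}\bm{\Sigma}_n\bm{\omega}$. We write $\bm{a}^{\top}\bm{\Sigma}\bm{a}=\bm{a}^{\top}\hat{\bm{\Sigma}}\bm{a}+\langle \bm{\Sigma}-\hat{\bm{\Sigma}},\bm{a}\bm{a}^{\top}\rangle_F$. By Cauchy--Schwarz in the Frobenius inner product, this is at most $\bm{a}^{\top}\hat{\bm{\Sigma}}\bm{a}+\gamma_2\|\bm{a}\|^2$, with equality at $\bm{\Sigma}=\hat{\bm{\Sigma}}+\gamma_2\bm{a}\bm{a}^{\top}/\|\bm{a}\|^2$. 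This matrix is positive definite because $\hat{\bm{\Sigma}}\succ0$, so it lies in $\mathcal{U}_{\hat{\bm{\Sigma}}}$. If $\bm{a}=0$ the bound is trivial, and in our case $\bm{\omega}^{\top}\bm{e}=1$ rules this out anyway. Hence
\[
\sup {\rm Var}(\bm{\omega}^{\top}\bm{X}^G)=\bm{\omega}^{\top}(\hat{\bm{\Sigma}}_n+\gamma_2\bm{I}_n)\bm{\omega}.
\]
Both suprema are attained by some distribution, for example the multivariate normal with the maximizing moments, so the sets of distributions involved are nonempty.

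Substituting these into \eqref{MV-model-D-g1-g2} gives the objective $\bm{\omega}^{\top}\hat{\bm{\mu}}_n+\sqrt{\gamma_1}\sqrt{\bm{\omega}^{\top}\hat{\bm{\Sigma}}_n\bm{\omega}}+\kappa\,\bm{\omega}^{\top}(\hat{\bm{\Sigma}}_n+\gamma_2\bm{I}_n)\bm{\omega}$. We then introduce epigraph variables $t\ge\sqrt{\bm{\omega}^{\top}\hat{\bm{\Sigma}}_n\bm{\omega}}$ and $s\ge\sqrt{\bm{\omega}^{\top}(\hat{\bm{\Sigma}}_n+\gamma_2\bm{I}_n)\bm{\omega}}$. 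Equivalence holds because the objective $\bm{\omega}^{\top}\hat{\bm{\mu}}_n+\sqrt{\gamma_1}t+\kappa s^2$ is nondecreasing in $t$ and in $s\ge0$, so at an optimum the constraints can be taken tight; any optimum of one problem then yields an optimum of the other with the same value. Part (ii) is identical, using only $s$, since minimizing $s^2$ over $s\ge$ a nonnegative norm forces tightness. Convexity follows because the norms are convex (the matrices are positive semidefinite), the objective is convex and nondecreasing in $s\ge0$ and linear in $t$, the equality constraint is affine, and ${\cal W}$ is convex.

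I expect the main obstacle to be the variance supremum: one must check that the Frobenius-ball maximizer stays in $S^+_{n+1}$ and is compatible with the joint moment structure. A second point is to justify that the mean and covariance constraints decouple. They do, since the ellipsoid is taken with $\hat{\bm{\Sigma}}_n$ as in \eqref{uncertainty set in short}; I will adopt that reading and note that this choice is what makes the decoupling valid.
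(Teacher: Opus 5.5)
Your proposal is correct and follows essentially the paper's route: split the worst case over $\mathcal{U}_{\hat{\bm{\mu}}}\times\mathcal{U}_{\hat{\bm{\Sigma}}}$ (with the mean ellipsoid shaped by $\hat{\bm{\Sigma}}_n$, as the paper's proof also assumes), evaluate the worst-case mean, worst-case variance and worst-case return constraint in closed form, and then introduce the epigraph variables $s,t$, using that the objective is monotone in them. The only difference is that you derive the two closed forms directly via Cauchy--Schwarz, in the $\hat{\bm{\Sigma}}_n^{-1}$ inner product for the mean and in the Frobenius inner product for the variance, and you check that the maximizer $\hat{\bm{\Sigma}}+\gamma_2\bm{a}\bm{a}^{\top}/\|\bm{a}\|^2$ is positive definite; the paper instead cites equations (7) and (8) of \citet{kang2022robust} for these formulas.
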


\section{Empirical evaluation and comparison of robust EIT portfolio selection models}
\label{sec-5} 

In this section, we conduct  empirical studies to evaluate and compare the robust EIT portfolio selection 
models proposed in this paper. Specifically, we examine how minimizing the trade-off between mean loss 
and downside risk, as opposed to minimizing downside risk alone subject to a portfolio mean-loss constraint, affects 
portfolio performance. Performance is assessed using cumulative wealth, Sharpe ratio, Sortino ratio, and Omega ratio. 
Cumulative wealth is one of the most important performance metrics, while the three ratios are widely used to measure 
portfolio outcomes and the trade-offs between mean return and downside risk (or equivalently, between mean 
loss and downside risk). Additionally, we examine the impact of short selling versus no-short selling 
constraints and discuss the selection of model parameters.

 We conduct  two  empirical studies.    In the first study, we assume that the asset and index returns have a given mean vector and covariance 
matrix; that is, their 
joint distribution belongs to the uncertainty set ${\cal M}_{n+1}(\boldsymbol{\mu}, \boldsymbol{\Sigma})$, 
defined in \eqref{set-M_n+1}. In the second study, we assume that the joint distribution belongs to the generalized uncertainty 
set $\mathbb{D}_{n+1}(\gamma_1,\gamma_2)$, defined in \eqref{set-D}.

For our empirical experiments, we select the five largest companies by market capitalization from each of five sectors: 
Technology (AAPL, NVDA, MSFT, AVGO, ORCL); Healthcare (JNJ, ABBV, LLY, UNH, ABT); Financial 
Services (BRK-B, JPM, V, BAC, MA); Consumer Cyclical (AMZN, TSLA, HD, MCD, BKNG); and 
Consumer Defensive (WMT, COST, KO, PG, PM). This results in a portfolio consisting of 25 stocks.  Selecting top companies 
by market capitalization is a common practice among mutual fund and pension fund managers.

We use monthly loss data for these stocks from Yahoo! Finance over a 10-year period, 
 from January 2015 to December 2024. The loss (rate) of stock $i$ at time $t$ (the end of month $t$), 
 denoted by $x_{i,t}$, is computed as
 $$x_{i, t}=-\frac{P_{i, t+1}-P_{i, t}}{P_{i, t}},$$ 
 where $P_{i,t}$ is the adjusted closing price of stock $i$ at time $t$. Correspondingly, $ -x_{i,t}$ represents the return (rate) 
 of stock $i$ in month $t$. Furthermore, $\bm{x}_t^\top = (x_{1,t}, \ldots, x_{25,t})$ denotes the vector of losses 
 for the 25 stocks at time $t$. 
   
For each model considered in this section, we use the period from January 2015 
to December 2019 as the five-year in-sample (training) period and back-test 
the monthly portfolio returns over the out-of-sample period from January 2020 to December 2024.
 The initial optimal portfolio $\boldsymbol{\omega}_0^*$ for each model is solved using the five-year training data. 
  The corresponding out-of-sample return for month one is then 
   computed as $R_1=-\boldsymbol{\omega}_0^{*\top}\, \bm{x}_1$.
   The five-year window is subsequently rolled forward by one month, and the procedure is repeated. In general, the 
   out-of-sample portfolio return for month $t$ is given by $
R_t=-\boldsymbol{\omega}_{t-1}^{*\top} \, \boldsymbol{x}_t$,  
where $\boldsymbol{\omega}_{t-1}^{*}$ is  obtained by solving the model 
 using the most recent    five-year training data ending in  month $t-1$.
The models considered in this section are as follows: 
        \begin{enumerate}[(a)]    
    \item Mean-Expected Regret (M-ER) model:  This model minimizes a convex combination of the worst-case 
    mean loss and the expected regret relative to a specified portfolio loss threshold, subject to a constraint on 
    the portfolio return. It is formulated in \eqref{mean-upper-m-rb-k}  for    
     $m=1$.

    \item Expected Regret (ER) model: This model minimizes the worst-case expected regret 
     relative to a specified portfolio loss threshold,  subject to a constraint on the portfolio return. 
     It is formulated in 
      \eqref{upper-m-rb-k} for  $m=1$.

    \item Mean-Target Semi-Variance (M-TSV)  model: This  model minimizes a convex combination of the 
    worst-case mean loss and  the target semi-variance 
    relative to  a specified portfolio loss threshold,   subject to a constraint on the portfolio return. 
    It  is formulated in \eqref{mean-upper-m-rb-k} for  $m=2$.

    \item  Target Semi-Variance (TSV) model: This  model minimizes  the 
    worst-case target semi-variance 
    relative to  a specified portfolio loss threshold,   subject to a constraint on the portfolio return. 
    It  is formulated in 
      \eqref{upper-m-rb-k} for  $m=2$.
      
        \item Mean-Variance (M-V) model: This  model 
         minimizes   a convex  combination of the worst-case 
     mean loss and the variance   of a portfolio, subject to a constraint on the portfolio return.    
    It is  formulated in \eqref{MV-model-D_n+1}.

    \item  Variance (V) model:  This model minimizes the worst-case variance of portfolio losses, 
    subject to a constraint on the portfolio return. It is formulated in \eqref{V-model-D_n+1}. 
    We refer to this model as the variance model (denoted by V in the figures and tables presented in this section) 
    to distinguish it from the M-V model in item (e). 
    This notation is consistent with the distinctions 
    between M-ER and ER, and between M-TSV and TSV.

     \item Equal-Weighted Portfolio (EWP) model: This model allocates equal weights to all assets in the portfolio 
     and is often used as a benchmark in portfolio performance comparisons. Many empirical studies have 
     shown that equal-weighted portfolios can achieve superior performance; see, for example, \citet{demiguel2009optimal} for a detailed discussion of the EWP model.
             \end{enumerate}
    
  In our empirical studies,  
  we use models (e), (f), and (g) as benchmark models that do not track any index. Moreover, we use the following metrics to evaluate the portfolio performance: the out-of-sample cumulative wealth (CW), mean return (Mean Return), standard deviation (SD), expected regret (ER), target semi-variance (TSV), Sharpe ratio (Sharpe), Sortino ratio (Sortino), and Omega ratio (Omega). For further details of the Sortino and Omega ratios, we refer to \cite{BBB2021}, \cite{BVY2019}, \cite{guastaroba2016linear}, and the references therein.

 \subsection{Empirical studies  under  a given mean vector and covariance matrix} 
 \label{sec-5.1}

In the first study, we assume that the joint distribution of the stock and index 
losses  belongs to ${\cal M}_{n+1}(\boldsymbol{\mu}, \boldsymbol{\Sigma})$; that is, the losses  have a 
given mean vector $\boldsymbol{\mu}$ and covariance matrix $\boldsymbol{\Sigma}$.
   We first consider the case where short selling is allowed, i.e., $\bm{\omega} \in \mathbb{R}^n$.

Under these settings, model (a)(M-ER) reduces to \eqref{mean-upm-M} for $m=1$, with its optimal solution given in Theorem \ref{theo-mean-upm}(i). Similarly, model (b)(ER) reduces to \eqref{upm-M} for $m=1$, with its solution given in Proposition \ref{pro-upm-M-sol}(i). Model (c)(M-TSV) reduces to \eqref{mean-upm-M} for $m=2$, with its solution given in Theorem \ref{theo-mean-upm}(ii), while model (d)(TSV) reduces to \eqref{upm-M} for $m=2$, with its solution given in Proposition \ref{pro-upm-M-sol}(ii).    
Furthermore, model (e)(M-V) reduces to the following mean-variance model:
    \begin{equation}
\label{MV-model-1}
\begin{aligned}    
    \min_{\bm{\omega}\in\mathbb{R}^n} &  \ \  
    \bm{\omega}^{\top}\bm{\mu}_n      + \kappa \,  \bm{\omega}^{\top} \bm{\Sigma}_n \bm{\omega},   \\
         s.t.  &  \ \  
     -\bm{\omega}^{\top}\bm{\mu}_n     \ge r_0, \ \ \  \boldsymbol{\omega}^{\top}\boldsymbol{e}=1.
      \end{aligned}
\end{equation}    
Using an approach similar to that in the proof for Theorem \ref{theo-mean-upm} (the detailed proof is omitted), we obtain that the optimal solution to model \eqref{MV-model-1} is given by 
 \begin{equation}
    \bm{\omega}^*
        =\begin{pmatrix}
            \bm{\Sigma}_n^{-1}\bm{\mu}_n &  \bm{\Sigma}_n^{-1}\bm{e}
        \end{pmatrix}\begin{pmatrix}
            b_0 & -b_1\\-b_1 & b_2
        \end{pmatrix}\begin{pmatrix}
            \min\left\{-r_0,\frac{b_1}{b_0}-\frac{1}{2b_0\kappa}\right\}\\ 1
            \end{pmatrix}.
\end{equation}
Moreover, model (f) (V)  reduces to the classical   Markowitz model:
           \begin{equation}
   \label{V-model-1}
\begin{aligned}    
    \min_{\bm{\omega}\in\mathbb{R}^n} \quad & 
       \bm{\omega}^{\top} \bm{\Sigma}_n \bm{\omega},\\
    \, s.t.    \quad  &
    -\bm{\omega}^{\top}\bm{\mu}_n 
     \ge r_0, \ \ \boldsymbol{\omega}^{\top}\boldsymbol{e}=1. 
            \end{aligned}
        \end{equation}       
            The optimal solution to model \eqref{V-model-1}  is given by 
  \begin{equation}
  \label{sol-V-1}
    \bm{\omega}^*
        =\begin{pmatrix}
            \bm{\Sigma}_n^{-1}\bm{\mu}_n &  \bm{\Sigma}_n^{-1}\bm{e}
        \end{pmatrix}\begin{pmatrix}
            b_0 & -b_1\\-b_1 & b_2
        \end{pmatrix}\begin{pmatrix}
            \min\left\{-r_0,\frac{b_1}{b_0}\right\}\\ 1
            \end{pmatrix}.
\end{equation}   
Solution \eqref{sol-V-1} is provided in \cite{CJM2025a}. It can also be derived using an approach similar to 
that in the proof for  Theorem \ref{theo-mean-upm} (the detailed proof is omitted).

Note that models (a)-(f) involve several parameters, including $\alpha$,  $r_0$, and $\kappa$. In our empirical studies, these model parameters are chosen as follows:
\bn[(i)] 
\item The parameter $\alpha$ appears in models (a)-(d), where $-\alpha$ represents the excess 
return of the portfolio over the index return $-X_0$, reflecting an enhanced performance objective. 
A natural  and feasible choice of $-\alpha$ is a value  above the average historical excess return 
of the portfolio relative to the index, denoted by $-\bar{\alpha}$. Based on the in-sample data,  we 
 obtain $-\bar{\alpha} = 0.0075$. Accordingly, we set $-\alpha = 0.04$ in our empirical studies.

\item The parameter $r_0$ appears in the constraints of all models (a)-(f) as the minimum expected return requirement. It is natural to set $r_0$ above   the historical average portfolio return,   denoted by  $\bar{r}$,  but below the target return $-\alpha$ to balance an enhanced performance objective with feasibility; in other words, $\bar{r} < r_0 < -\alpha$. Using in-sample data, we find that $\bar{r}= 0.0176$. Accordingly, in our empirical study, we set $r_0 = 0.02$.

\item The parameter $\kappa$ appears in models (a), (c), and (e) as a weight that balances the trade-off between portfolio loss and downside risk. Intuitively, a smaller $\kappa$ places greater emphasis on portfolio loss (or equivalently, portfolio return), potentially leading to more aggressive, higher-return portfolios, whereas a larger $\kappa$ emphasizes downside risk and may result in more conservative, lower-return portfolios.
   For model (a) (M-ER), $\kappa$ must satisfy there $\kappa > \kappa_0$, as shown in Theorem \ref{theo-mean-upm}, to ensure the existence of an optimal solution. For models (c) and (e), $\kappa$  can take any positive value.  
    Based on the in-sample data,  we find  $\kappa_0 = 4.65$, and we therefore set $\kappa = 5$ to guarantee feasibility. For consistency and fair comparison, we use the same value of $\kappa$ across models (a), (c), and (e). The optimal choice of $\kappa$ for models (a), (c), and (e) in Section \ref{sec-optimal-kappa}.
  \en

Based on the above choices of model parameters, the cumulative wealth of each model under comparison, allowing for short selling, is shown in Figure \ref{fig: CW-s-d4v2}(a). We also consider models (a)-(f) when short selling is not allowed, under the same uncertainty set $\mathcal{M}_{n+1}(\boldsymbol{\mu}, \boldsymbol{\Sigma})$. That is, we impose the constraint $\bm{\omega} \ge 0$ in models (a)-(f) and evaluate the resulting portfolio performance. In this case, one of models (a)-(f) corresponds to a special case of models \eqref{mean-rho-D}-\eqref{V-model-D-g1-g2} with $\gamma_1 = 0$, $\gamma_2 = 0$, and ${\cal W} = A^n$, where $A = [0, \infty)$. The optimal solutions are provided in 
Theorem \ref{th-EIT-upm-cop} and Propositions \ref{th-EIT-upm-cop-only} and \ref{pro-ob-MV-model-D-g1-g2}. The cumulative wealth of each model under comparison without short selling is shown in Figure \ref{fig: CW-s-d4v2}(b).

 The other performance metrics,  with and without short selling, 
 are reported in Table \ref{tab:M}. For each performance criterion, the best result among the models 
 under comparison is highlighted in a box. Specifically, we highlight the highest values of the mean return, 
 Sharpe ratio, Sortino ratio, and Omega ratio, as well as the lowest values of the standard deviation, expected regret, and target semi-variance.

 From Figures \ref{fig: CW-s-d4v2}(a) and \ref{fig: CW-s-d4v2}(b), we observe that, under the uncertainty 
set ${\cal M}_{n+1}(\bm{\mu}, \bm{\Sigma})$, models that minimize a convex 
combination of mean loss and downside risk achieve significantly higher out-of-sample cumulative wealth than the corresponding 
models that minimize downside risk alone, regardless of whether short selling is allowed. 
Specifically, the M-ER, M-TSV, and M-V models yield substantially higher cumulative 
wealth than the ER, TSV, and V models, respectively. In addition, when short selling is allowed, the M-TSV and M-V models 
achieve the highest cumulative wealth, whereas when short selling is prohibited, the M-TSV model attains the highest cumulative wealth.

Moreover, Table \ref{tab:M} shows that, with short selling, the out-of-sample mean return and Omega ratio for each model (a)-(f) are 
higher than those of the corresponding models without short selling. However, for each model (a)-(f), the out-of-sample 
standard deviation (SD), expected regret (ER), and target semi-variance (TSV) are lower when short selling is not allowed. 
  Furthermore, for the M-ER and M-TSV models, both the out-of-sample Sharpe ratio and Sortino ratio are higher when short selling is prohibited. In other words, allowing short selling generally leads to higher mean return and Omega ratio but may also result in larger downside risk and lower Sharpe and Sortino ratios. Conversely, prohibiting short selling may limit returns but yields higher Sharpe and Sortino ratios.

Table \ref{tab:M} further indicates that, when short selling is not allowed, models that minimize a convex combination of 
mean loss and downside risk achieve significantly higher out-of-sample mean return, Sharpe, Sortino, and Omega ratios 
than the corresponding models that minimize downside risk alone. Specifically, the M-ER, M-TSV, and M-V models yield 
substantially higher out-of-sample mean return, Sharpe, Sortino, and Omega ratios than the ER, TSV, and V models, respectively. 
 On the other hand, when short selling is allowed, models that minimize a convex combination of mean loss and downside risk also achieve significantly higher out-of-sample mean return and Omega ratio than the corresponding models that minimize 
 downside risk alone, although the Sharpe and Sortino ratios are improved with models that minimize downside risk alone.

Furthermore, Table \ref{tab:M} indicates that models minimizing a convex combination of mean loss and downside risk achieve 
Pareto-optimal performance with respect to these two risk measures. For example, the M-TSV model attains Pareto-optimal 
out-of-sample mean loss and TSV, in the sense that no competing model simultaneously improves both mean loss 
 (equivalently, mean return) and TSV. Similarly, the M-ER model achieves Pareto-optimal out-of-sample mean loss and 
 ER, as no other model simultaneously improves both mean loss and ER. The M-V model also attains Pareto-optimal 
 out-of-sample mean loss and variance, meaning that no competing model simultaneously improves both mean loss and variance.

Overall, under the uncertainty set ${\cal M}_{n+1}(\bm{\mu}, \bm{\Sigma})$, when short selling is allowed, the M-TSV and M-V 
models achieve the highest cumulative wealth, while the EWP, ER, and M-TSV models deliver the best Sharpe ratio, Sortino ratio, 
and Omega ratio, respectively. In contrast, when short selling is prohibited, the M-TSV model achieves not only the highest 
cumulative wealth and mean return but also the best Sharpe ratio, Sortino ratio, and Omega ratio. These results provide 
strong evidence that minimizing a convex combination of mean loss and downside risk is a more effective portfolio selection 
approach than minimizing downside risk alone.

\captionsetup[figure]{font=footnotesize}
\captionsetup[sub]{font=tiny} 

  \begin{figure}[H]
    \centering
    \begin{subfigure}{0.45\textwidth}
        \centering
        \includegraphics[width=\linewidth]{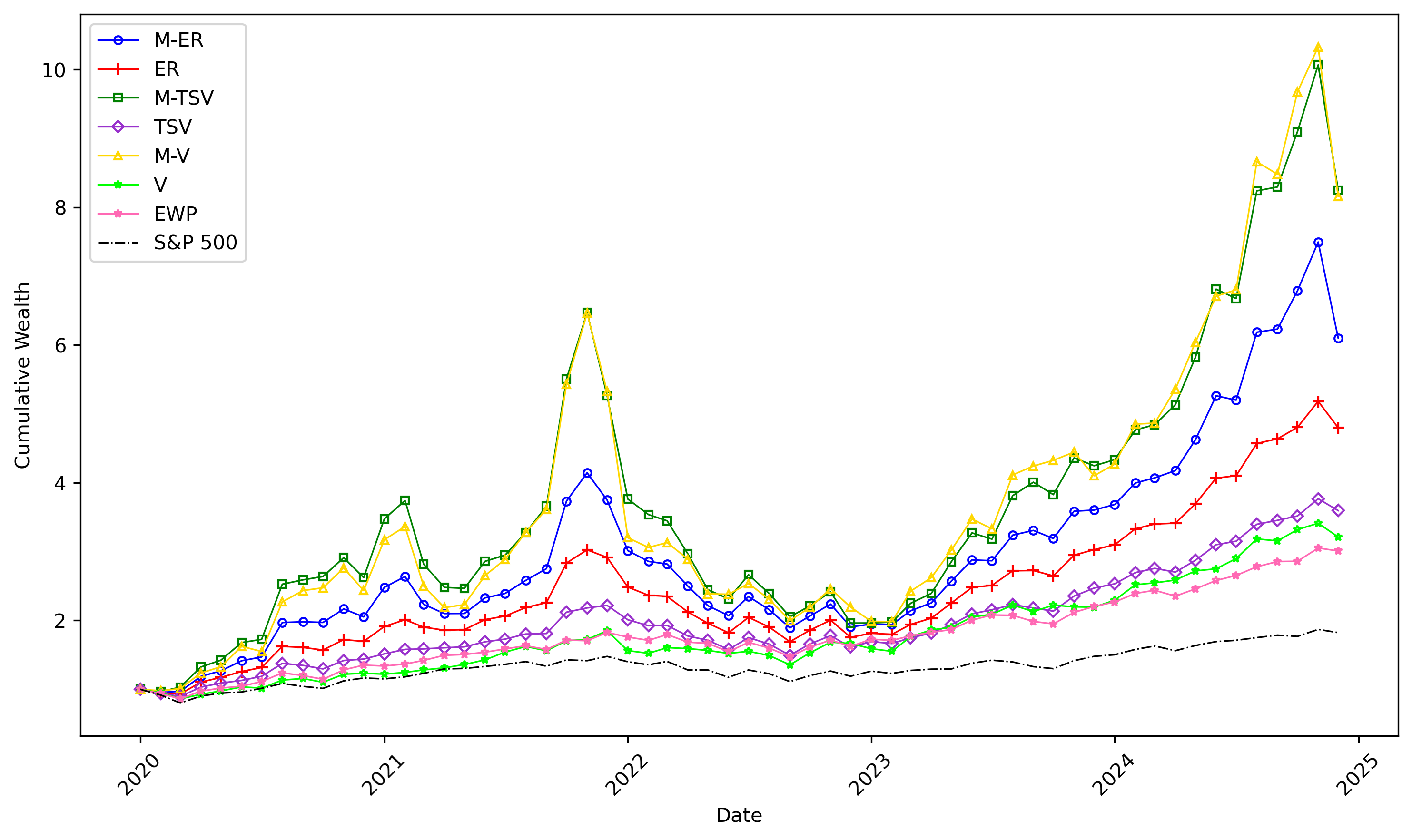}
        \caption{Cumulative wealth of portfolios based on  models (a)-(g)  
        under the uncertainty set 
    ${\cal M}_{n+1}(\bm{\mu}, \bm{\Sigma})$,  with short selling, 
     for $\alpha=-0.04$, $r_0=0.02$,  and $\kappa$=5.}
    \end{subfigure}
    \begin{subfigure}{0.45\textwidth}
        \centering
        \includegraphics[width=\linewidth]{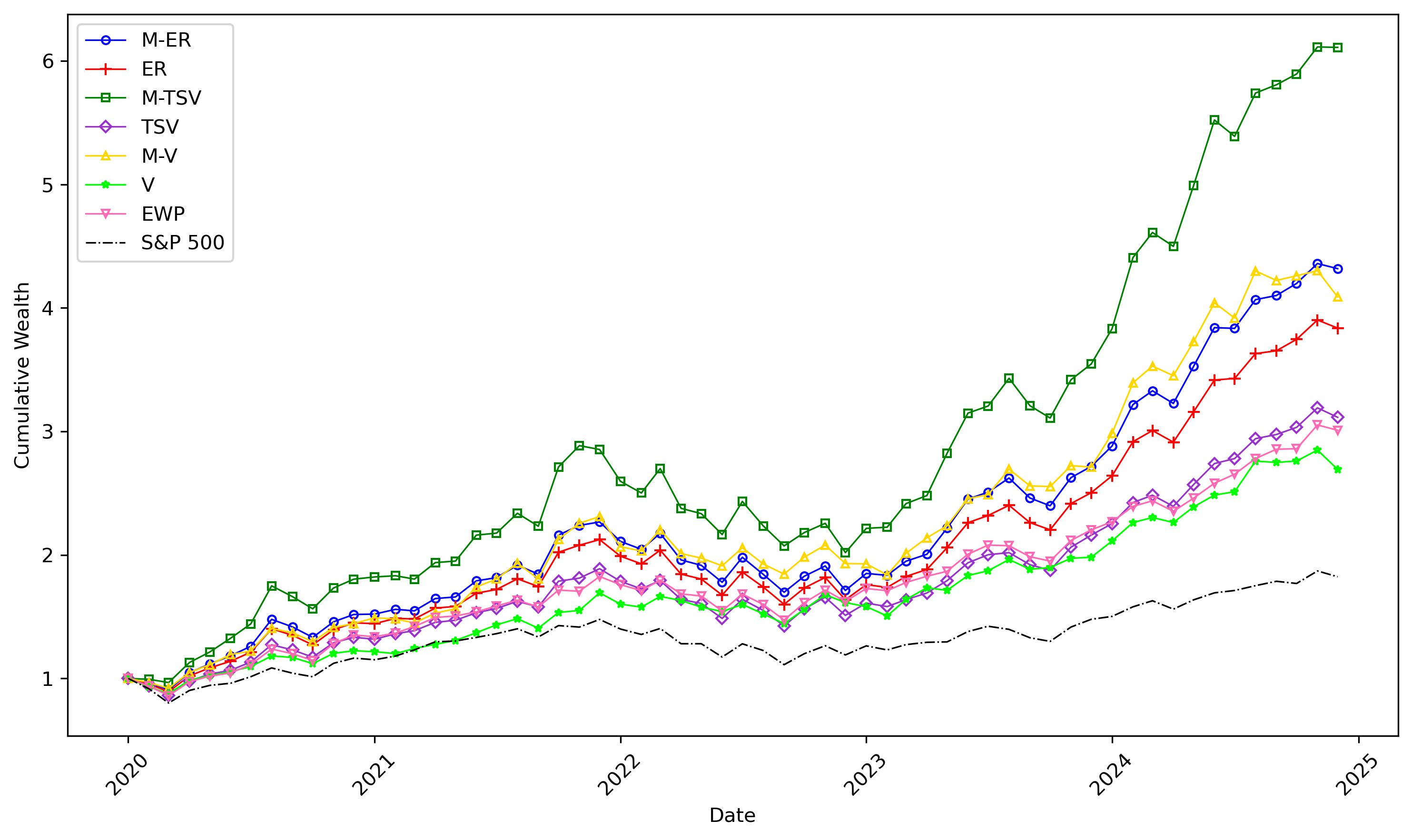}
        \caption{Cumulative wealth  of portfolios based on models (a)-(g) 
        under  the uncertainty set 
    ${\cal M}_{n+1}(\bm{\mu}, \bm{\Sigma})$, without short selling, 
    for 
         $\alpha=-0.04$,  $r_0=0.02$,   and  $\kappa$=5.}
    \end{subfigure}
    
    \begin{subfigure}{0.45\textwidth}
        \centering
        \includegraphics[width=\linewidth]{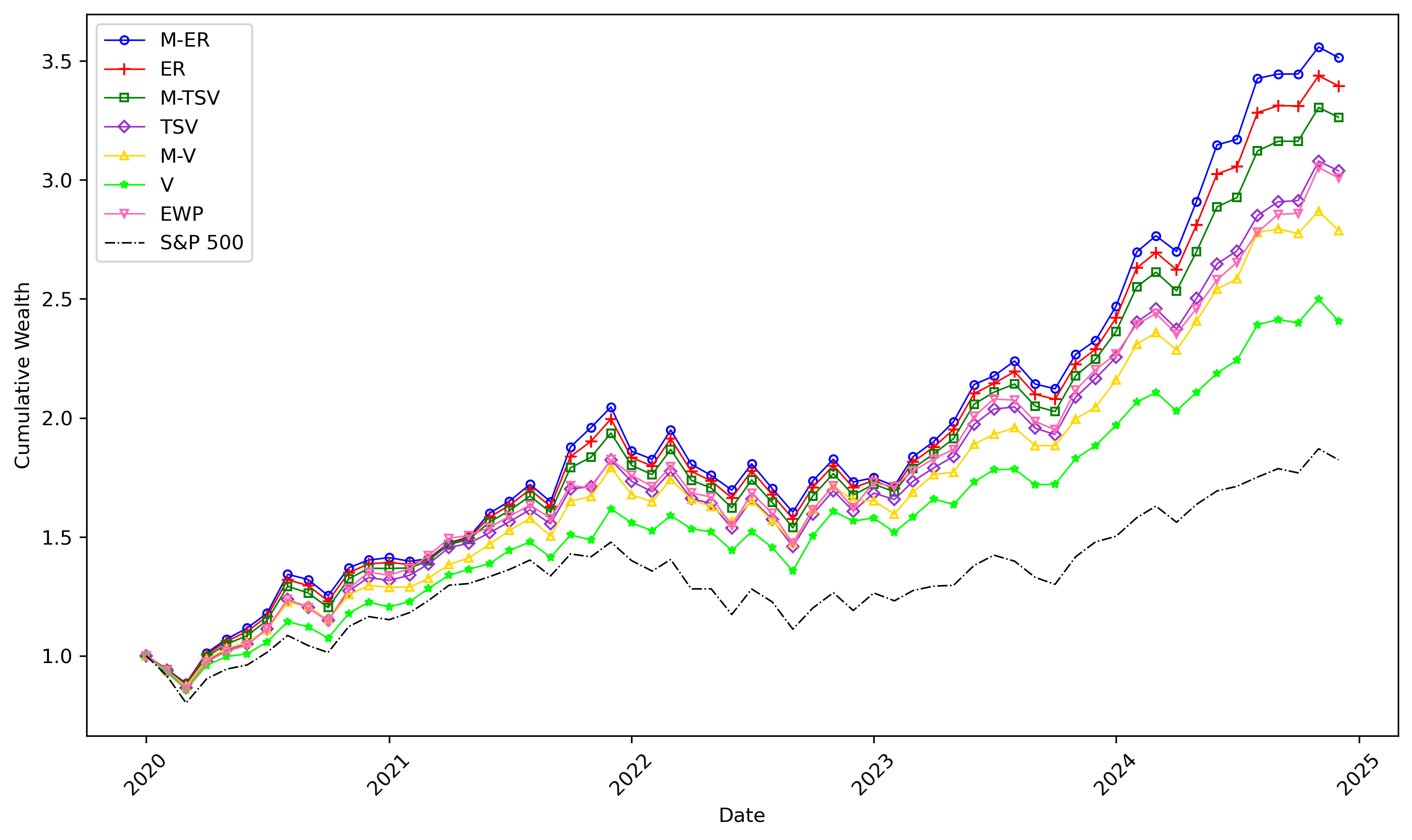}
        \caption{Cumulative wealth of portfolios based on models (a)-(g) 
        under the uncertainty set ${\cal D}_{n+1}(\gamma_1, \gamma_2)$,  with short selling, for  
      $\alpha=-0.04$, $r=-0.01$,  $\gamma_1$ = 0.22, $\gamma_2=0.025$, and $\kappa$=5.}
    \end{subfigure}
    \begin{subfigure}{0.45\textwidth}
        \centering
        \includegraphics[width=\linewidth]{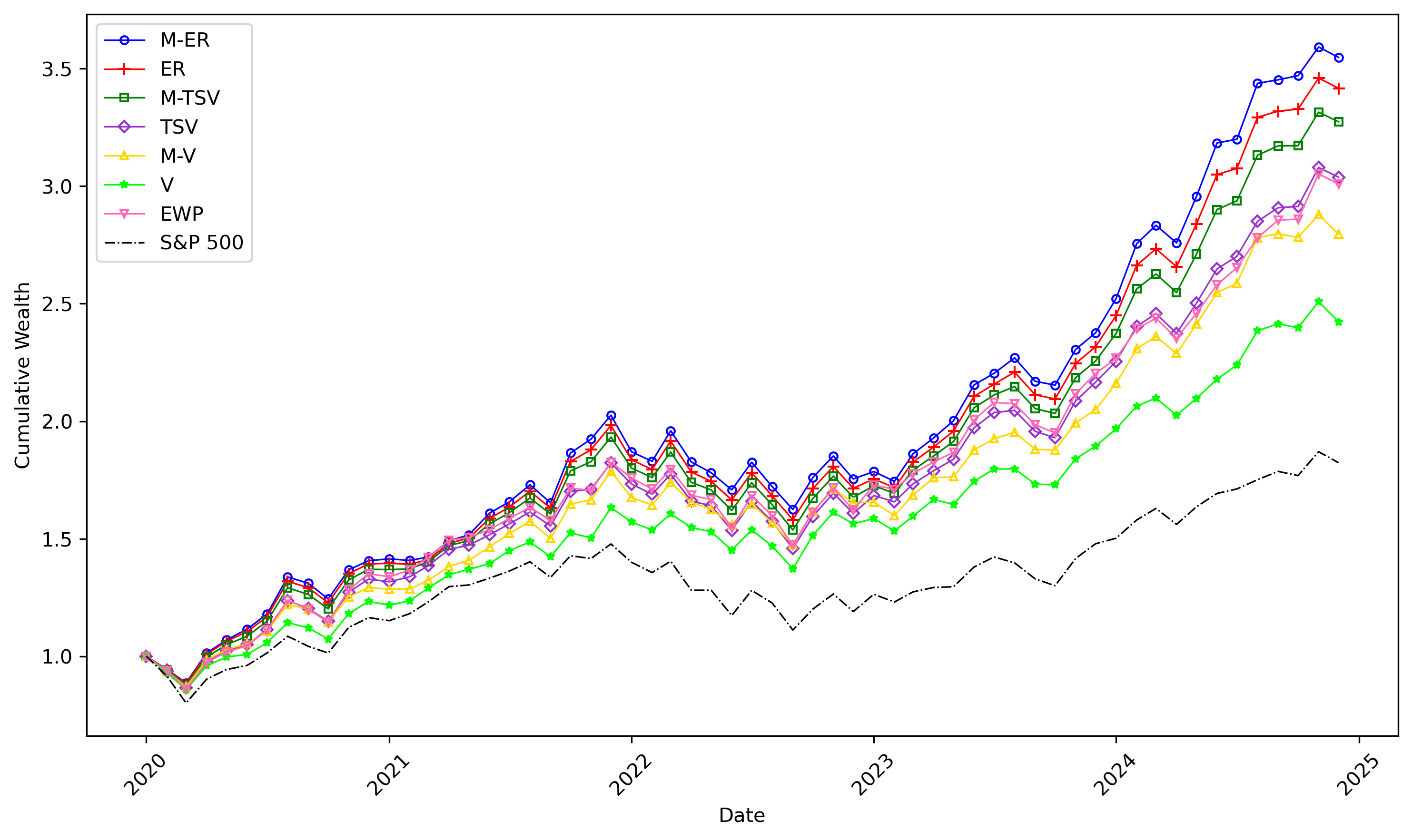}
        \caption{Cumulative wealth  of portfolios based on models (a)-(g) 
        under the uncertainty set ${\cal D}_{n+1}(\gamma_1, \gamma_2)$,  without short selling, for  
      $\alpha=-0.04$, $r=-0.01$,  $\gamma_1$ = 0.22, $\gamma_2=0.025$, and $\kappa$=5.}
    \end{subfigure}
    
    \caption{Cumulative wealth of portfolios based on  models (a)-(g). 
     }   
    \label{fig: CW-s-d4v2}
\end{figure}

\begin{table}[!htbp]
\centering
\footnotesize
\setlength{\tabcolsep}{6pt}
\renewcommand{\arraystretch}{1.15}

\begin{threeparttable}
\begin{tabular}{lccccccc}
\toprule
\textbf{Model} 
& \textbf{Mean Return} 
& \textbf{SD} 
& \textbf{ER} 
& \textbf{TSV} 
& \textbf{Sharpe}
& \textbf{Sortino}
& \textbf{Omega}\\
\midrule

\multicolumn{4}{l}{\textbf{With short selling}} \\
\addlinespace[2pt]

M-ER  & 0.038021 & 0.110156  & 0.043788 & 0.005928  & 0.345155   & 0.493813 & 0.954801\\
ER    & 0.030584 & 0.079341  & 0.036200 & 0.003655   & 0.385482  & \boxed{0.505900}  & 0.739903\\
M-TSV & 0.048939 & 0.150532  & 0.052809 & 0.009546  & 0.325107   & 0.500902 &  \boxed{1.169271} \\
TSV   & 0.023855 & 0.060104  & 0.032059 & 0.002714    & 0.396902   & 0.457940 & 0.496404\\
M-V   & \boxed{0.050084}  & 0.155359 & 0.053195 & 0.010652     & 0.322376  & 0.485271 & 1.189569\\
V     & 0.021873 & 0.056794  & 0.032461 & 0.002751       & 0.385127   & 0.416900 & 0.441582 \\
EWP   & 0.020299 & \boxed{0.050758}   & \boxed{0.031082}  & \boxed{0.002379}    & \boxed{0.399910} & 0.416174 & 0.366154\\
\addlinespace[6pt]

\multicolumn{4}{l}{\textbf{Without  short selling}} \\
\addlinespace[2pt]
M-ER  & 0.027183 & 0.064414  & 0.032721 & 0.002856 & 0.422011   & 0.508637 & 0.608300\\
ER    & 0.024979 & 0.061403  & 0.032544 & 0.002817   & 0.406812   & 0.470651 & 0.538451\\
M-TSV & \boxed{0.033821}  & 0.074642  & 0.033818 & 0.003061  & \boxed{0.453109}   & \boxed{0.611352}  & \boxed{0.817288} \\
TSV   & 0.021018 & 0.055961  & 0.032370 & 0.002723    & 0.375585    & 0.402786 & 0.413597\\
M-V   & 0.026298 & 0.063587   & 0.034140 & 0.002796    & 0.413580    & 0.497338 & 0.598669\\
V     & 0.018312 & \boxed{0.046906}  & 0.031722 &  \boxed{0.002219}      & 0.390402   & 0.388780 & 0.316313\\
EWP   & 0.020299 & 0.050758  & \boxed{0.031082} & 0.002379    & 0.399910   & 0.416174 & 0.366154\\
\bottomrule
\end{tabular}
 \end{threeparttable}
\caption{Out-of-sample performance of portfolios based on models (a)-(g)  under  the uncertainty set 
    ${\cal M}_{n+1}(\bm{\mu}, \bm{\Sigma})$, with and without short selling, for
     $\alpha=-0.04$,  $r_0=0.02$, and  $\kappa=5$.}
     \label{tab:M} 

\vspace{1.5 cm} 
  
\centering
\footnotesize
\setlength{\tabcolsep}{6pt}
\renewcommand{\arraystretch}{1.15}

\begin{threeparttable}
\begin{tabular}{lccccccc}
\toprule
\textbf{Model} 
& \textbf{Mean Return} 
& \textbf{SD} 
& \textbf{ER} 
& \textbf{TSV} 
& \textbf{Sharpe}
& \textbf{Sortino}
& \textbf{Omega}\\
\midrule

\multicolumn{4}{l}{\textbf{With short selling}} \\
\addlinespace[2pt]

M-ER  & \boxed{0.023264}  & 0.053799  & 0.030781 & 0.002346  & \boxed{0.432424}    & \boxed{0.480342} & \boxed{0.456294}\\
ER    & 0.022533 & 0.052189  & 0.030456 & 0.002304    & 0.431760    & 0.469414 & 0.426487\\
M-TSV & 0.021712 & 0.050744   & \boxed{0.030381} & 0.002281  & 0.427870     & 0.454599 & 0.398040\\
TSV   & 0.020297 & 0.049517     & 0.030621 & 0.002316   & 0.409908     & 0.421810 & 0.356556\\
M-V   & 0.018856 & 0.047087  & 0.031181 & \boxed{0.002204}       & 0.400460   & 0.401616 & 0.321904\\
V     & 0.016176 & \boxed{0.045302}  & 0.031771 & 0.002256      & 0.357065    & 0.340532 & 0.250130\\
EWP   & 0.020299 & 0.050758  & 0.031082 & 0.002379     & 0.399910     & 0.416174 & 0.366154\\
\addlinespace[6pt]

\multicolumn{4}{l}{\textbf{Without short selling}} \\
\addlinespace[2pt]
M-ER  & \boxed{0.023280}  & 0.052341  & 0.030257 & 0.002261  & \boxed{0.444768}     & \boxed{0.489637}  & \boxed{0.447392}\\
ER    & 0.022586 & 0.051537  & \boxed{0.030261}  & 0.002262    & 0.438237    & 0.474836 & 0.424523\\
M-TSV & 0.021762 & 0.050701  & 0.030357 & 0.002277   & 0.429233     & 0.456087 & 0.399229\\
TSV   & 0.020297 & 0.049517  & 0.030621 & 0.002316     & 0.409908   & 0.421810 & 0.356556\\
M-V   & 0.018905 & 0.047072  & 0.031067 & \boxed{0.002202}     & 0.401610    & 0.402831 & 0.320983\\
V     & 0.016281 & \boxed{0.045175}   & 0.031731 & 0.002243        & 0.360401   & 0.343776 & 0.252501\\
EWP   & 0.020299 & 0.050758  & 0.031082 & 0.002379    & 0.399910     & 0.416174 & 0.366154\\
\bottomrule
\end{tabular}
\end{threeparttable}
\caption{Out-of-sample performance of portfolios based on models (a)-(g) under the uncertainty set ${\cal D}_{n+1}(\gamma_1, \gamma_2)$, with and without short selling, for   $\alpha=-0.04$, $r=-0.01$, $\gamma_1=0.22$,  $\gamma_2=0.025,$ and $\kappa=5$.
  }
\label{tab:D}
\end{table}

\subsection{Empirical studies   under  a generalized uncertainty set} 

In the second empirical experiment, we assume that the joint distribution of the asset and index losses belongs to the 
 generalized uncertainty set $\mathcal{D}_{n+1}(\gamma_1,\gamma_2)$. 
  Under this uncertainty set, model (a)(M-ER) reduces to model \eqref{mean-rho-D} for $m=1$, with its optimal solution given in Proposition \ref{th-EIT-upm-cop-only}(i) and obtained by solving problem \eqref{ob-m=1}. Similarly, model (b)(ER) reduces to model \eqref{rho-D} for $m=1$, with its optimal solution given in Proposition \ref{pro-ob-MV-model-D-g1-g2}(i) and obtained by solving problem \eqref{ob-m=1-only}.  Model (c)(M-TSV) reduces to model \eqref{mean-rho-D} for $m=2$, with its optimal solution given in Proposition \ref{th-EIT-upm-cop-only}(ii) and obtained by solving problem \eqref{ob-m=2}. Model (d)(TSV) reduces to model \eqref{rho-D} for $m=2$, with its optimal solution given in Proposition \ref{pro-ob-MV-model-D-g1-g2}(ii) and obtained by solving 
  problem \eqref{ob-m=2-only}.
   Furthermore, model (e)(M-V) reduces to model \eqref{MV-model-D-g1-g2}, with its optimal solution given in Proposition \ref{pro-ob-MV-model-D-g1-g2}(i) and obtained by solving problem \eqref{ob-MV-D}. Moreover, 
   model (f)(V) reduces to model \eqref{V-model-D-g1-g2}, with its optimal solution given in Proposition \ref{pro-ob-MV-model-D-g1-g2}(ii) and obtained by solving problem \eqref{ob-V-D}.

Under the uncertainty set $\mathcal{D}_{n+1}(\gamma_1,\gamma_2)$, models (a)-(f) involve the parameters 
$\alpha$, $\kappa$, $r$, $\gamma_1$, and $\gamma_2$. The choices of the model parameters $\alpha$ and 
$\kappa$ have been discussed in Section \ref{sec-5.1}. The remaining model parameters $r$, $\gamma_1$, 
and $\gamma_2$ are chosen as follows:
 
  \bn[(i)] 
 
\item    
 The parameter $r$ appears in the constraints of all models (a)-(f) as the minimum worst-case expected 
 portfolio return over the uncertainty set. Compared to the minimum expected portfolio return $r_0$ discussed in Section \ref{sec-5.1}, 
 it is natural to expect that the minimum worst-case expected portfolio return is less than $r_0$, and therefore we choose $r$ such 
 that $r < r_0$. In Section \ref{sec-5.1}, $r_0 = 0.02$. Under the uncertainty set $\mathcal{D}_{n+1}(\gamma_1,\gamma_2)$, we set $r = -0.01$.
 
\item 
The parameters $\gamma_1$ and $\gamma_2$ represent the size of the uncertainty set $\mathcal{D}_{n+1}(\gamma_1,\gamma_2)$. 
 To select their values, we first estimate $\hat{\bm{\mu}} = (\hat{\bm{\mu}}_n, \hat{\mu}_0)$ and $\hat{\bm{\Sigma}}$ using 
 the sample mean vector and covariance matrix of the in-sample data. To calibrate $\gamma_1$ and $\gamma_2$ in 
 the uncertainty set $\mathbb{D}_n(\gamma_1,\gamma_2)$, we adopt a rolling-window approach to capture the 
 time-varying features of the data.
   Specifically, for each 5-year (60-month) rolling window (window $i$), we compute the sample mean vector $\hat{\bm{\mu}}_i$ 
and sample covariance matrix $\hat{\bm{\Sigma}}_i$. We then measure their deviations from $\hat{\bm{\mu}}$ 
and $\hat{\bm{\Sigma}}$ according to the metrics defining the uncertainty set. For the mean vector, we 
compute $\gamma_{1i}=(\hat{\bm{\mu}}_{n,i}-\hat{\bm{\mu}}_n)^{\top}\hat{\bm{\Sigma}}^{-1}(\hat{\bm{\mu}}_{n,i}-\hat{\bm{\mu}}_n)$ 
and for the covariance matrix, we compute $\gamma_{2i}=\|\hat{\bm{\Sigma}}_i-\hat{\bm{\Sigma}}\|_F$. 
Finally, we set $\gamma_1$ and $\gamma_2$ to be the averages of $\gamma_{1i}$ and $\gamma_{2i}$, which 
represent the typical magnitudes of temporal fluctuations in the mean vector and covariance matrix, respectively. Based on the in-sample data, 
we obtain $\gamma_1=0.22$ and $\gamma_2=0.025$. 
     \en

Based on the above choice of model parameters,   
 the cumulative wealth of each model under comparison with short selling allowed is shown 
  in Figure \ref{fig: CW-s-d4v2}(c). 
   We  also consider  models (a)-(f) when short selling  is not allowed under the same uncertainty set $\mathcal{D}_{n+1}(\gamma_1, \gamma_2)$. 
 That is, we impose the constraint $\bm{\omega} \ge 0$ in models (a)-(f) and evaluate the resulting portfolio performance. In this case, one of models (a)-(f) corresponds to a special case of models \eqref{mean-rho-D}-\eqref{V-model-D-g1-g2} 
 with  ${\cal W} =A^n$, where $A=[0, \infty)$. The optimal solutions are provided in Theorem \ref{th-EIT-upm-cop} and Propositions \ref{th-EIT-upm-cop-only} and \ref{pro-ob-MV-model-D-g1-g2}.
 The cumulative wealth of each model under comparison without short selling is presented 
 in Figure \ref{fig: CW-s-d4v2}(d).  The other performance metrics, both with and without short selling, 
 are reported in Table \ref{tab:D}. 

 From Figures \ref{fig: CW-s-d4v2}(c) and \ref{fig: CW-s-d4v2}(d), we observe that under the uncertainty set 
 ${\cal D}_{n+1}(\gamma_1, \gamma_2)$,  similar to the case under ${\cal M}_{n+1}(\bm{\mu}, \bm{\Sigma})$,  models that minimize a convex combination of mean loss and downside risk achieve significantly higher cumulative wealth than the corresponding models that minimize downside risk alone, regardless of whether short selling is allowed. Specifically, the M-ER, M-TSV, and M-V models yield substantially higher
  cumulative wealth than the ER, TSV, and V models, respectively. Moreover, the M-ER model consistently achieves 
  the highest cumulative wealth in both settings.

Moreover, Table \ref{tab:D} shows that, under the uncertainty set ${\cal D}_{n+1}(\gamma_1, \gamma_2)$, when short selling 
is not allowed, the out-of-sample mean return, standard deviation (SD), ER, TSV, Sharpe ratio, and Sortino ratio for each 
model (a)-(f) are improved relative to those of the corresponding models when short selling is allowed. 
In addition, the out-of-sample 
Omega ratio of the M-TSV model without short selling is higher than that of its counterpart with short selling.  This indicates 
that the M-TSV model effectively balances the trade-off between mean return and downside risk, regardless of whether 
short selling is permitted.

Table \ref{tab:D} also indicates that, regardless of whether short selling is allowed, models that minimize a convex combination 
of mean loss and downside risk achieve significantly higher out-of-sample mean return, Sharpe ratio, Sortino ratio, and 
Omega ratio than models that minimize downside risk alone. In particular, the M-ER, M-TSV, and M-V models consistently 
outperform the ER, TSV, and V models, respectively, across these performance measures.

Furthermore, Table \ref{tab:D} shows that under the uncertainty set ${\cal D}_{n+1}(\gamma_1, \gamma_2)$,  
again consistent with the results under ${\cal M}_{n+1}(\bm{\mu}, \bm{\Sigma})$,  
models that minimize a convex combination of mean loss and downside risk achieve Pareto-optimal performance 
with respect to these two risk measures. Specifically, the M-TSV model is Pareto-optimal in terms of out-of-sample 
mean loss and TSV, meaning that no competing model simultaneously improves both metrics. 
Similarly, the M-ER model is Pareto-optimal with respect to mean loss and ER, and the M-V model is 
Pareto-optimal with respect to mean loss and variance.

Overall, under the uncertainty set ${\cal D}_{n+1}(\gamma_1, \gamma_2)$, regardless of whether short selling is allowed, the M-ER model achieves the highest cumulative wealth, mean return, Sharpe ratio, Sortino ratio, and Omega ratio. Additionally, the  M-ER and M-TSV models also outperform the benchmark models M-V, V, and EWP, in terms of cumulative wealth, mean return, Sharpe ratio, Sortino ratio, and Omega ratio.
 These results further confirm that minimizing a convex combination of mean loss and downside risk is a more effective portfolio selection approach than minimizing downside risk alone.

 \subsection{Optimal weight selection for balancing mean loss and downside risk or variance} 
 \label{sec-optimal-kappa}

The parameter $\kappa$ in models \eqref{mean-upper-m-rb-k} and \eqref{MV-model-D_n+1} represents 
the weight used to balance portfolio mean loss and downside risk (or variance). Assigning a weight of $\kappa$ to 
the downside risk is equivalent to assigning weights $\lambda$ and $1-\lambda$ to the downside risk (or variance) 
and mean loss, respectively, where $\lambda = \frac{\kappa}{1+\kappa}$.

Unlike other model parameters, such as $\alpha$, $r_0$, $r$, $\gamma_1$, and $\gamma_2$, for which 
there is substantial empirical guidance for selection or estimation (as discussed in previous subsections),  
 the parameter $\kappa$ does not have a straightforward empirical approach for its determination.

In this subsection, we conduct a sensitivity analysis to examine how different choices of $\kappa$ affect 
the performance of the M-ER, M-TSV, and M-V models, and whether an optimal choice $\kappa^*$ exists for 
each model. To this end, we plot the out-of-sample cumulative wealth, Sharpe ratio, Sortino ratio, and Omega ratio 
of these models, both with and without short selling. These numerical results allow us to assess the impact of 
$\kappa$ on model performance and to identify the corresponding optimal values of $\kappa$.

 \subsubsection{Optimal weight selection under a given mean vector and covariance matrix} 
 \label{sec-optimal-kappa-M} 
 
As in Subsection \ref{sec-5.1}, we first assume that the joint distribution of stock and index losses belongs to 
${\cal M}_{n+1}(\boldsymbol{\mu}, \boldsymbol{\Sigma})$, and then consider the case where
 it belongs to ${\cal D}_{n+1}(\gamma_1, \gamma_2)$. Under these settings, the out-of-sample cumulative wealth 
 and Sharpe ratios of the M-ER, M-TSV, and M-V models are presented in Figures \ref{fig: CW-Sharpe-diff-k-M-with-SS} and \ref{fig: CW-Sharpe-diff-k-M-no-SS}, corresponding to scenarios with and without short selling, respectively.

From Figures \ref{fig: CW-Sharpe-diff-k-M-with-SS}(a)-(c), under the uncertainty set ${\cal M}_{n+1}(\bm{\mu}, \bm{\Sigma})$ 
with short selling, we observe that the  
cumulative wealth curves generated by the M-ER, M-TSV, and M-V models
decrease as $\kappa$ increases. This indicates that placing greater weight on downside risk, or equivalently less weight on mean loss, 
reduces cumulative wealth. Therefore, the highest cumulative wealth is achieved when $\kappa$ is chosen as small as possible 
within its feasible range. Notably, when $\kappa = 0$, these models do not admit optimal solutions. The feasible range is $(\kappa_0, \infty)$ 
for the M-ER model (where $\kappa_0$ is defined in Theorem \ref{theo-mean-upm}) and $(0, \infty)$ for the M-TSV and M-V models.

In contrast, Figure \ref{fig: CW-Sharpe-diff-k-M-with-SS}(d) shows that the out-of-sample Sharpe ratios of the M-ER and M-TSV 
models increase with $\kappa$, indicating a trade-off between cumulative wealth and the Sharpe ratio. 
For the M-V model, the Sharpe ratio increases first and then decreases as $\kappa$ increases, suggesting the existence of 
an optimal weighting parameter $\kappa^*$ that maximizes the Sharpe ratio. In this dataset, the 
optimal value is $\kappa^* = 20.9$ for the M-V model.

Furthermore,  for small values of $\kappa$ (for our data,  $\kappa < 12.3$), the M-ER model achieves a higher out-of-sample Sharpe 
ratio than the M-TSV and M-V models. In contrast,  for larger values of $\kappa$ (for our data,  $\kappa > 12.3$), the M-TSV model 
outperforms the M-ER and M-V models in terms of the Sharpe ratio within the dataset used for our empirical analysis.

Without short selling, Figures \ref{fig: CW-Sharpe-diff-k-M-no-SS}(a)-(c) shows that the out-of-sample Sharpe ratios of the M-TSV 
and M-V models initially increase with $\kappa$, attain their maximum values, and then decrease or remain below these maxima. 
In contrast, the out-of-sample Sharpe ratio of the M-ER model decreases as $\kappa$ increases. Therefore, under the 
uncertainty set ${\cal M}_{n+1}(\bm{\mu}, \bm{\Sigma})$ without short selling, optimal values of $\kappa$ exist
 for the M-TSV and M-V models that maximize the out-of-sample Sharpe ratios. In our dataset, these optimal values 
 are $\kappa^* = 2.1$ for the M-TSV model and $\kappa^* = 1.7$ for the M-V model. 
 For the M-ER model, smaller values of $\kappa$ are preferred; however, an optimal value is not attainable.

Moreover, when short selling is not allowed,   for small values of $\kappa$ (for our data, $\kappa < 2.2$), the M-V 
model achieves a higher out-of-sample Sharpe ratio than the M-ER and M-TSV models. In contrast, 
for large values of $\kappa$ (for our data,  $\kappa > 13.5$), the M-ER model outperforms the M-TSV and M-V models in terms of the Sharpe ratio within the dataset used for our empirical analysis. For intermediate values of $\kappa$ (for our data,  $2.2 < \kappa < 13.5 $), the M-TSV model achieves the highest out-of-sample Sharpe ratio among the three models.

To examine the effect of the weighting coefficient $\kappa$ on the Sortino and Omega ratios, we plot the out-of-sample 
values of these metrics for the M-ER, M-TSV, and M-V models in Figure \ref{fig: S-Q-diff-k-M}, for both 
scenarios with and without short selling.

With short selling, the Sortino ratios of all three models initially increase with $\kappa$, attain their maxima, and then decrease or stay below their maxima, indicating the existence of optimal values $\kappa^*$. Specifically, the optimal values are $\kappa^*= 7.1$, $7.5$, and $8.1$ 
for the M-ER, M-TSV, and M-V models, respectively. In contrast, the Omega ratios decrease monotonically as $\kappa$ increases.
     Moreover,  for small values of $\kappa$ (for our data,  $\kappa < 5.2$),  
     the M-TSV model achieves higher Sortino ratios than the M-ER and M-V models, 
whereas  for larger $\kappa$ (for our data,  $\kappa > 5.2$),  the M-ER model performs best. A similar pattern is observed for the Omega ratio: the M-V model 
performs best for small $\kappa$, while the M-ER model dominates for larger $\kappa$.

Without short selling, the Sortino and Omega ratios of the M-TSV and M-V models increase with $\kappa$ up to a maximum and then decrease or 
 stay below their maxima,   again indicating the existence of optimal values $\kappa^*$.

Overall, these results provide an empirical basis for selecting the optimal weighting coefficient $\kappa^*$ in the M-ER, M-TSV, and M-V models to balance cumulative wealth and risk-adjusted performance, as measured by the Sharpe,  Sortino, and Omega ratios.

\begin{figure}[h]
    \centering
    \begin{subfigure}{0.45\textwidth}
        \centering
        \includegraphics[width=\linewidth]{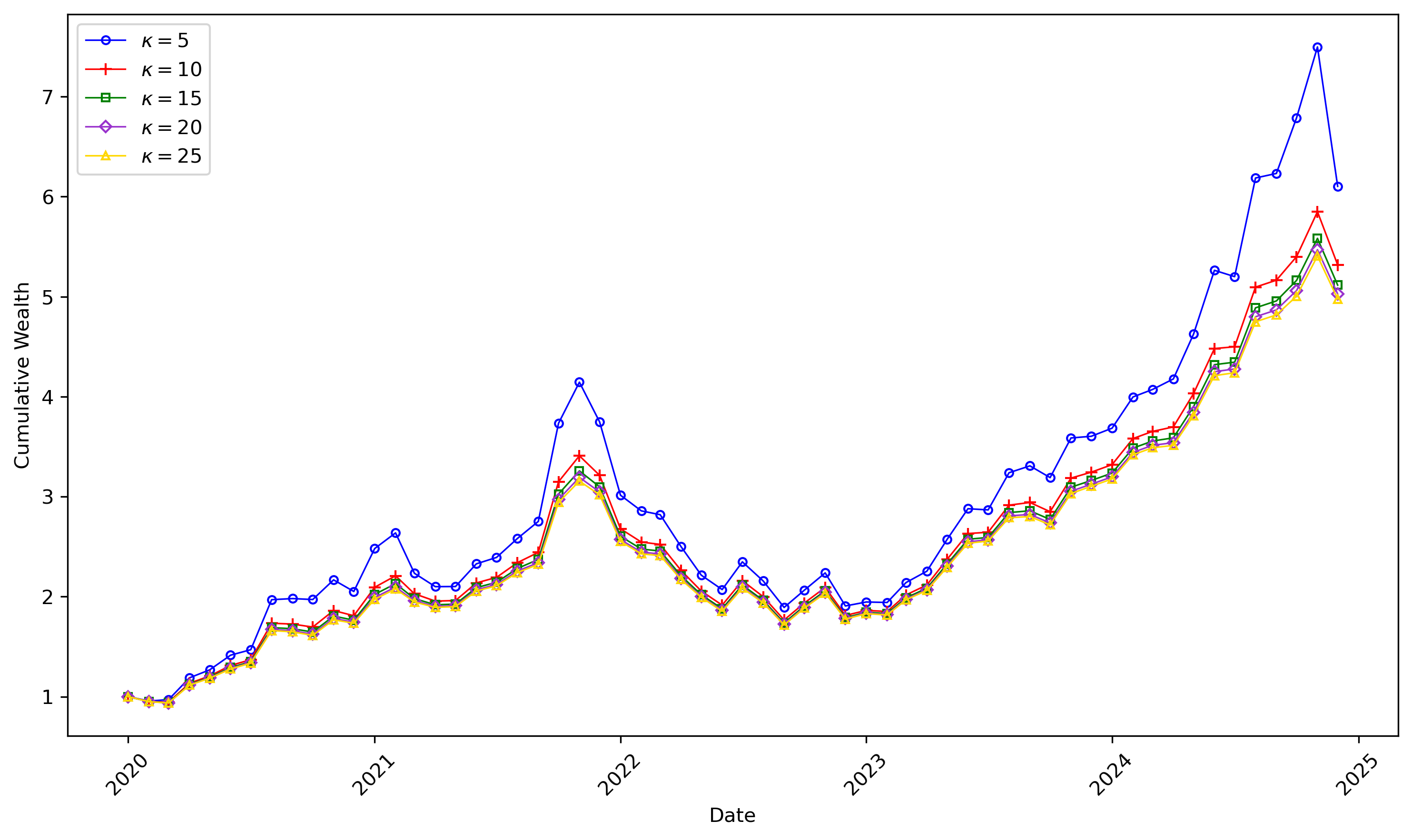}
        \caption{Cumulative wealth of the M-ER model with short selling}
    \end{subfigure}
    \begin{subfigure}{0.45\textwidth}
        \centering
        \includegraphics[width=\linewidth]{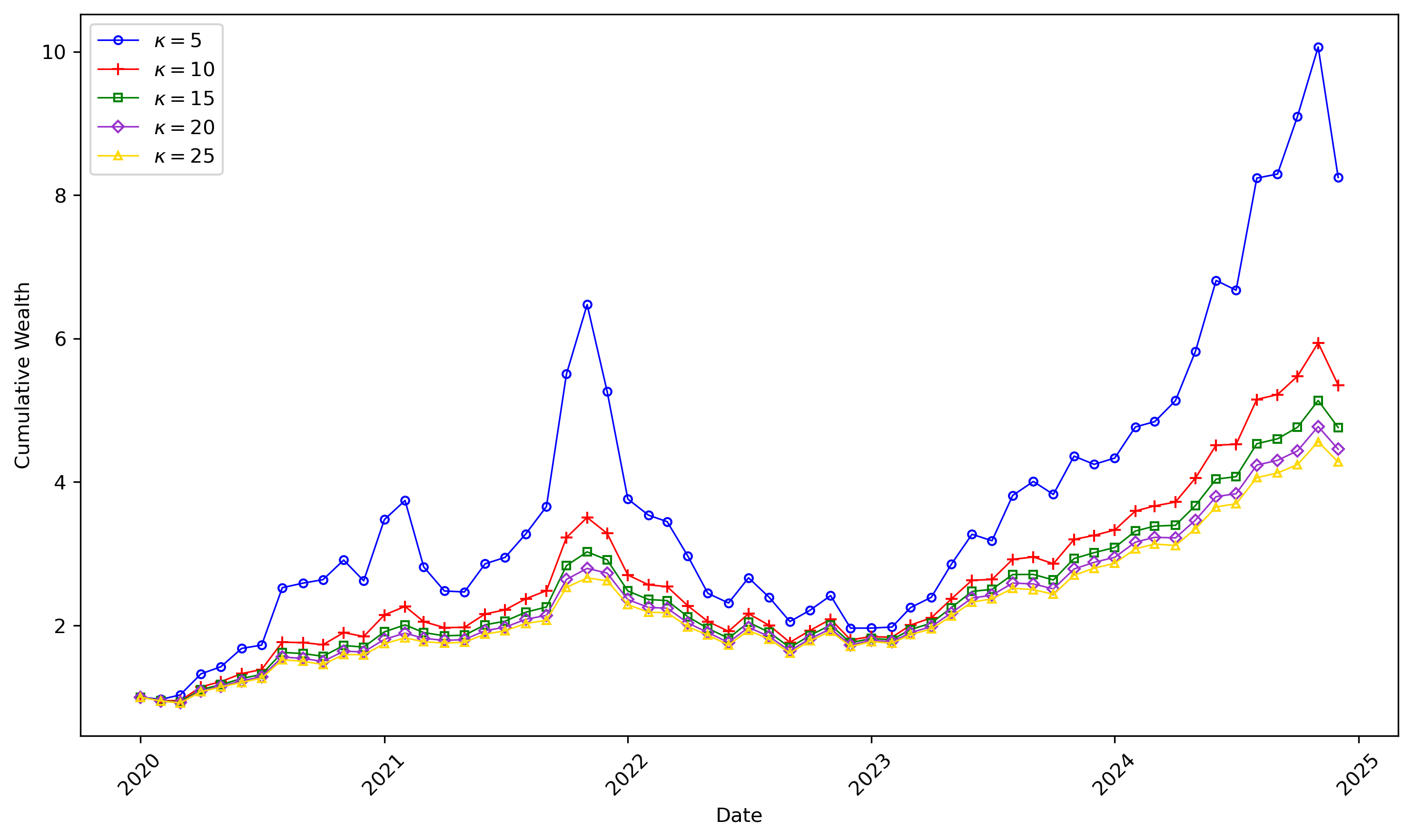}
        \caption{Cumulative wealth of the M-TSV model with short selling}
    \end{subfigure}
    
    \begin{subfigure}{0.45\textwidth}
        \centering
        \includegraphics[width=\linewidth]{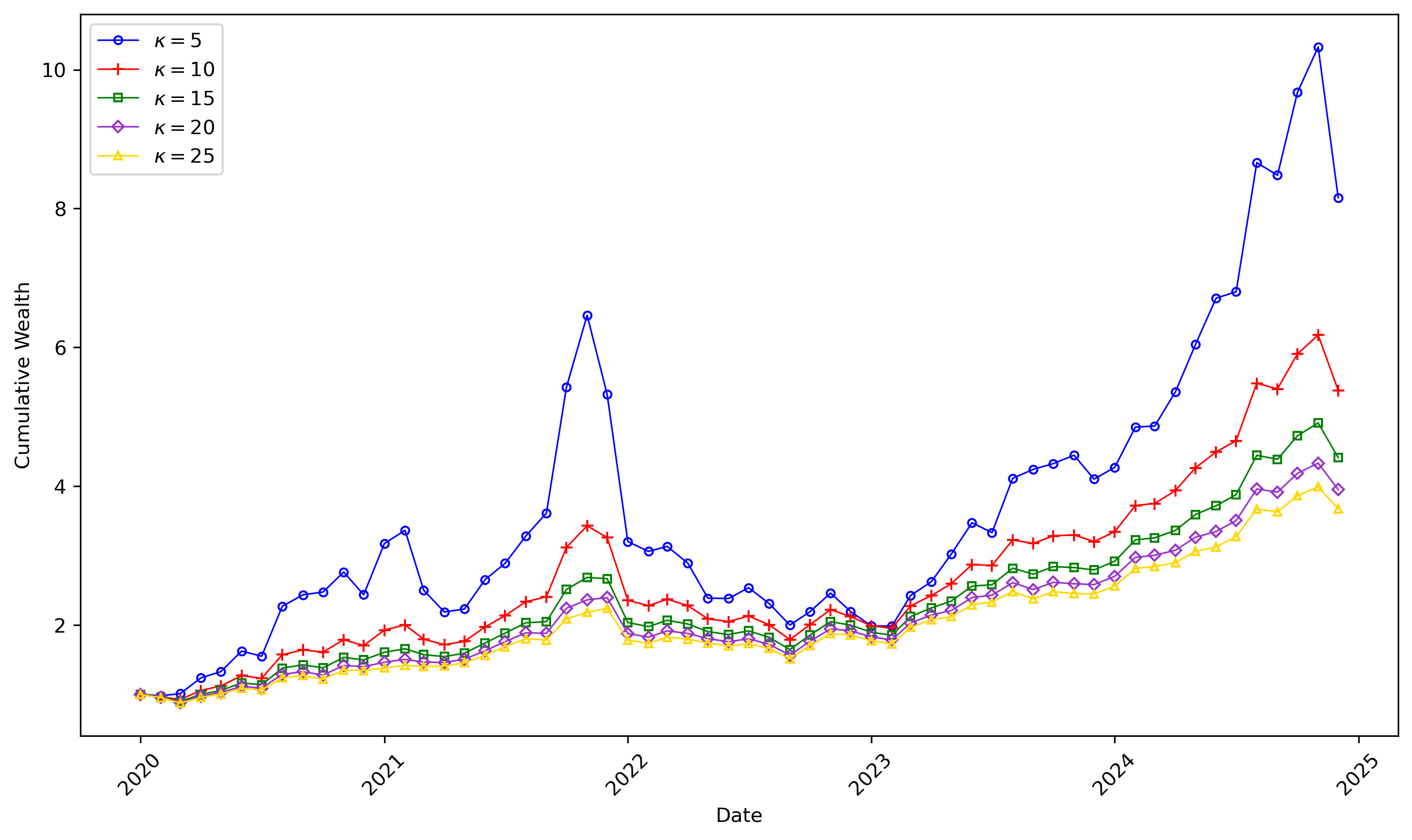}
        \caption{Cumulative wealth of the M-V model with short selling}
    \end{subfigure}
    \begin{subfigure}{0.45\textwidth}
        \centering
        \includegraphics[width=\linewidth]{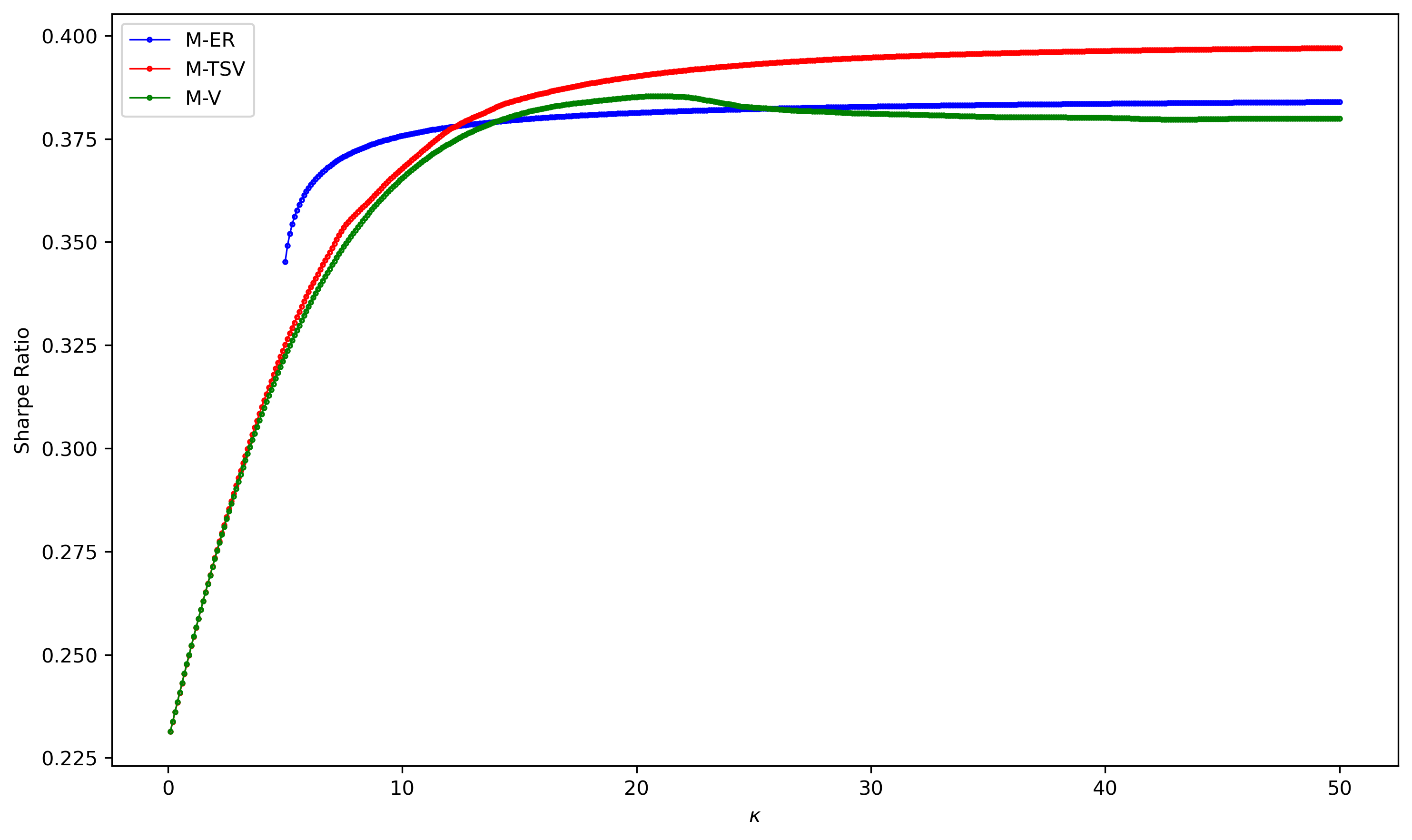}
        \caption{Sharpe ratios of the three models with short selling}
    \end{subfigure}
    
    \caption{Cumulative wealth and Sharpe ratios of portfolios based on the M-ER, M-TSV, and M-V models under 
     the uncertainty set ${\cal M}_{n+1}(\bm{\mu}, \bm{\Sigma})$,   with short selling, plotted against the weighting 
     coefficient $\kappa$, for $\alpha=-0.04$ and $r_0=0.02$.}   
        \label{fig: CW-Sharpe-diff-k-M-with-SS}
\end{figure}


\begin{figure}[H]
    \centering
    \begin{subfigure}{0.45\textwidth}
        \centering
        \includegraphics[width=\linewidth]{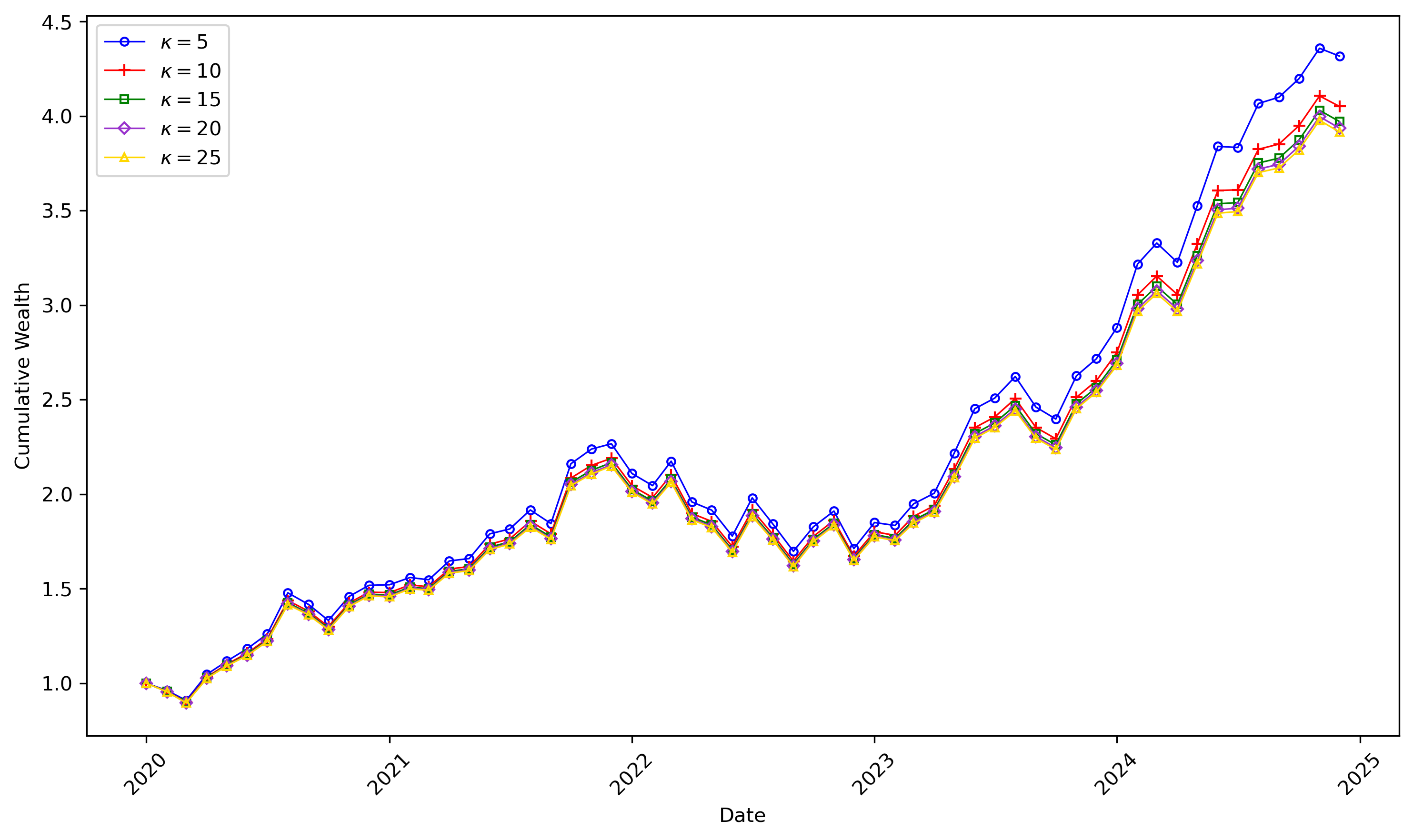}
        \caption{Cumulative wealth of the M-ER model without short selling}
    \end{subfigure}
    \begin{subfigure}{0.45\textwidth}
        \centering
        \includegraphics[width=\linewidth]{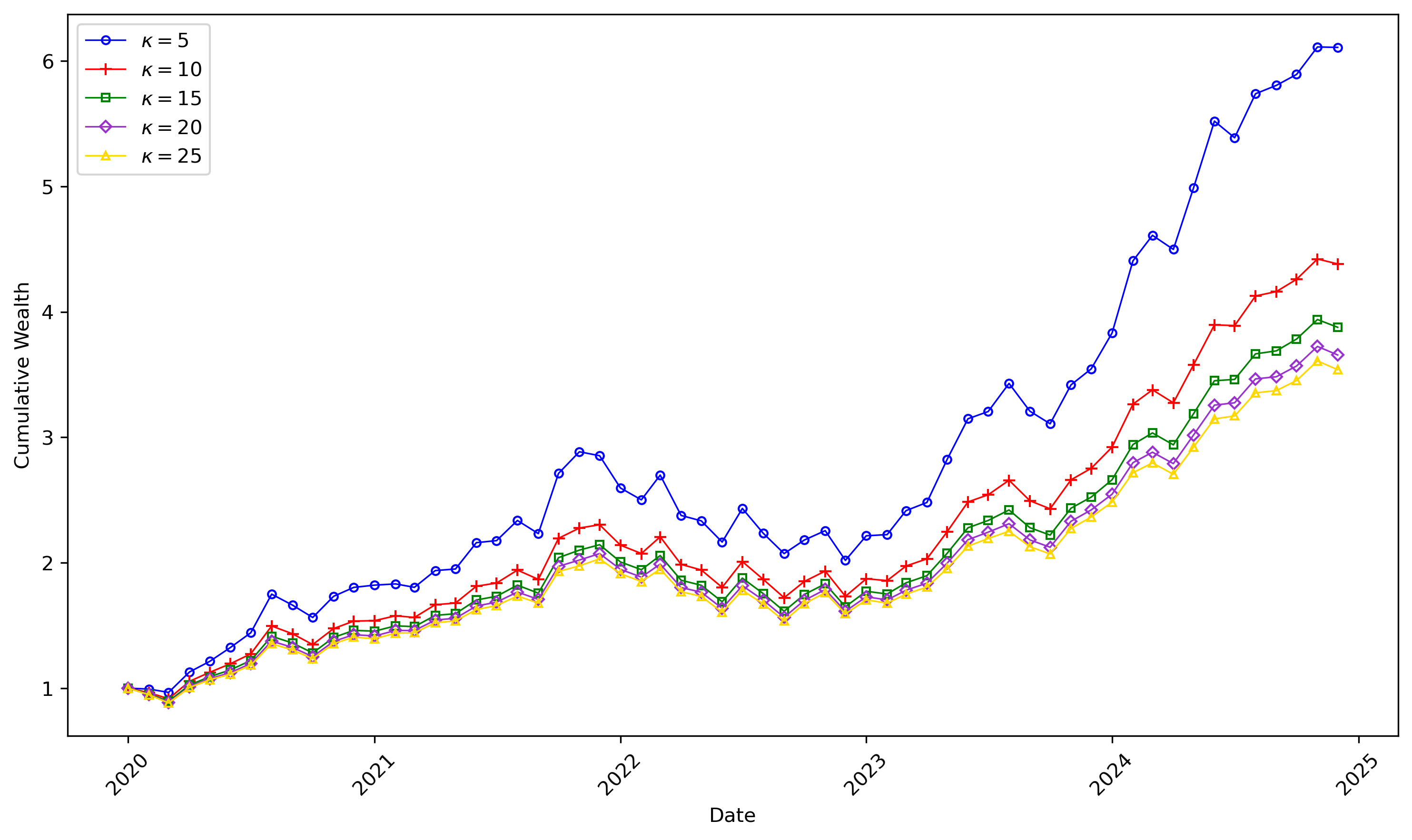}
        \caption{Cumulative wealth of the M-TSV model without short selling}
    \end{subfigure}
    
    \begin{subfigure}{0.45\textwidth}
        \centering
        \includegraphics[width=\linewidth]{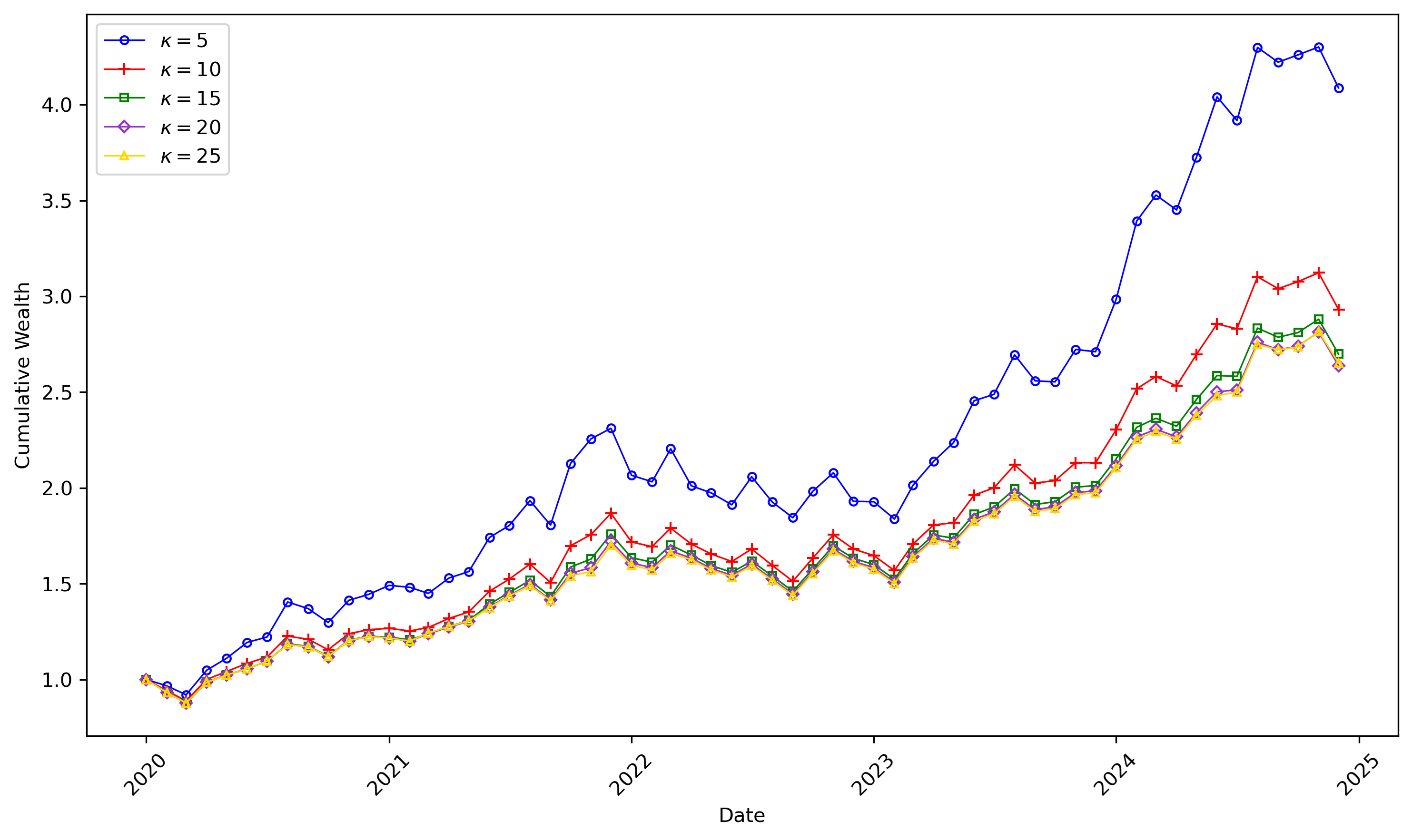}
        \caption{Cumulative wealth of the M-V model without short selling}
    \end{subfigure}
    \begin{subfigure}{0.45\textwidth}
        \centering
        \includegraphics[width=\linewidth]{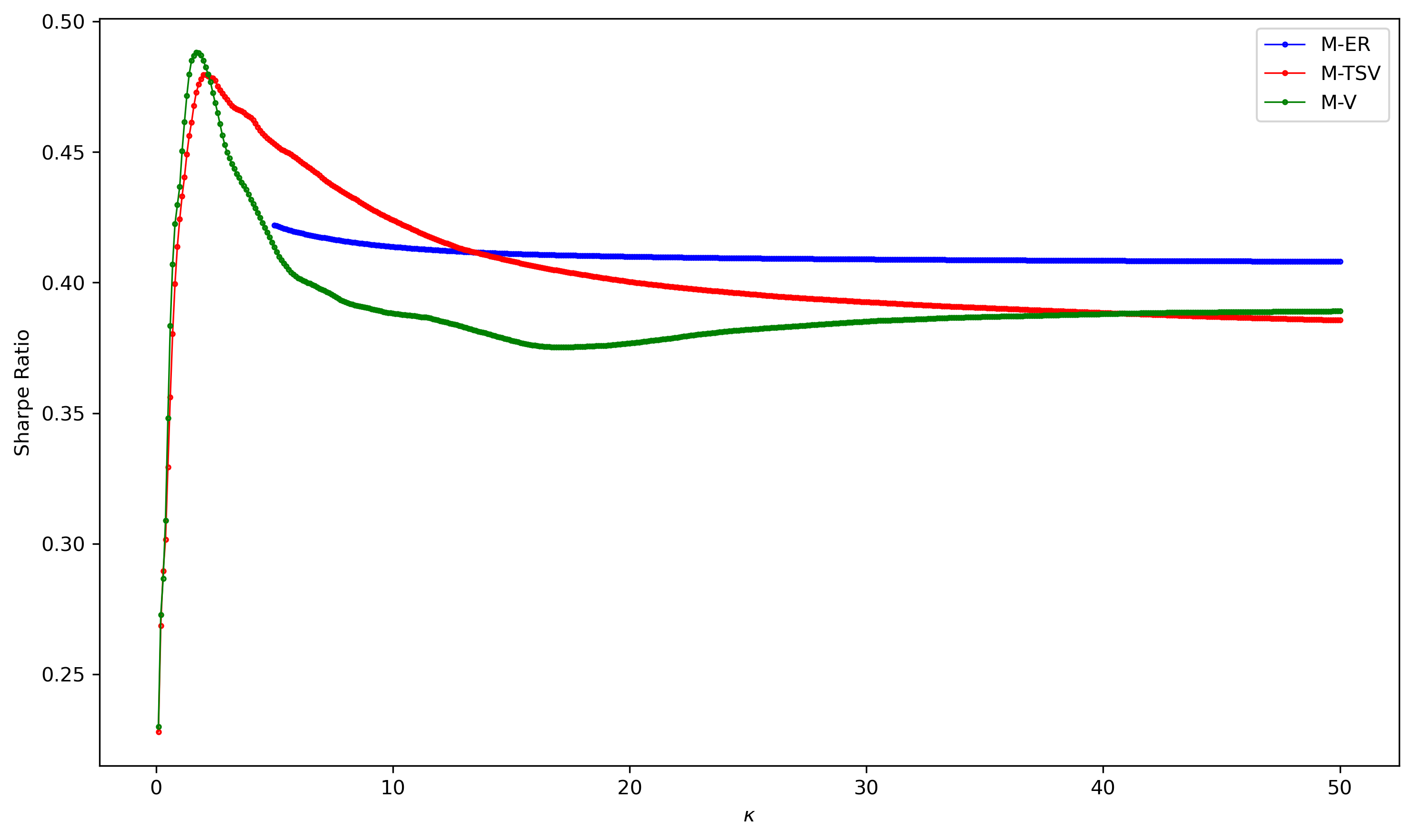}
        \caption{Sharpe ratio of the three models without short selling}
    \end{subfigure}
    
    \caption{Cumulative wealth and Sharpe ratios of portfolios based on the M-ER, M-TSV, and M-V models 
   under the uncertainty set ${\cal M}_{n+1}(\bm{\mu}, \bm{\Sigma})$,  
    without short selling, plotted against the weighting  coefficient $\kappa$, for  $\alpha=-0.04$ and $r_0=0.02$.       }
    \label{fig: CW-Sharpe-diff-k-M-no-SS}
\end{figure}

\begin{figure}[H]
    \centering
    \begin{subfigure}{0.45\textwidth}
        \centering
        \includegraphics[width=\linewidth]{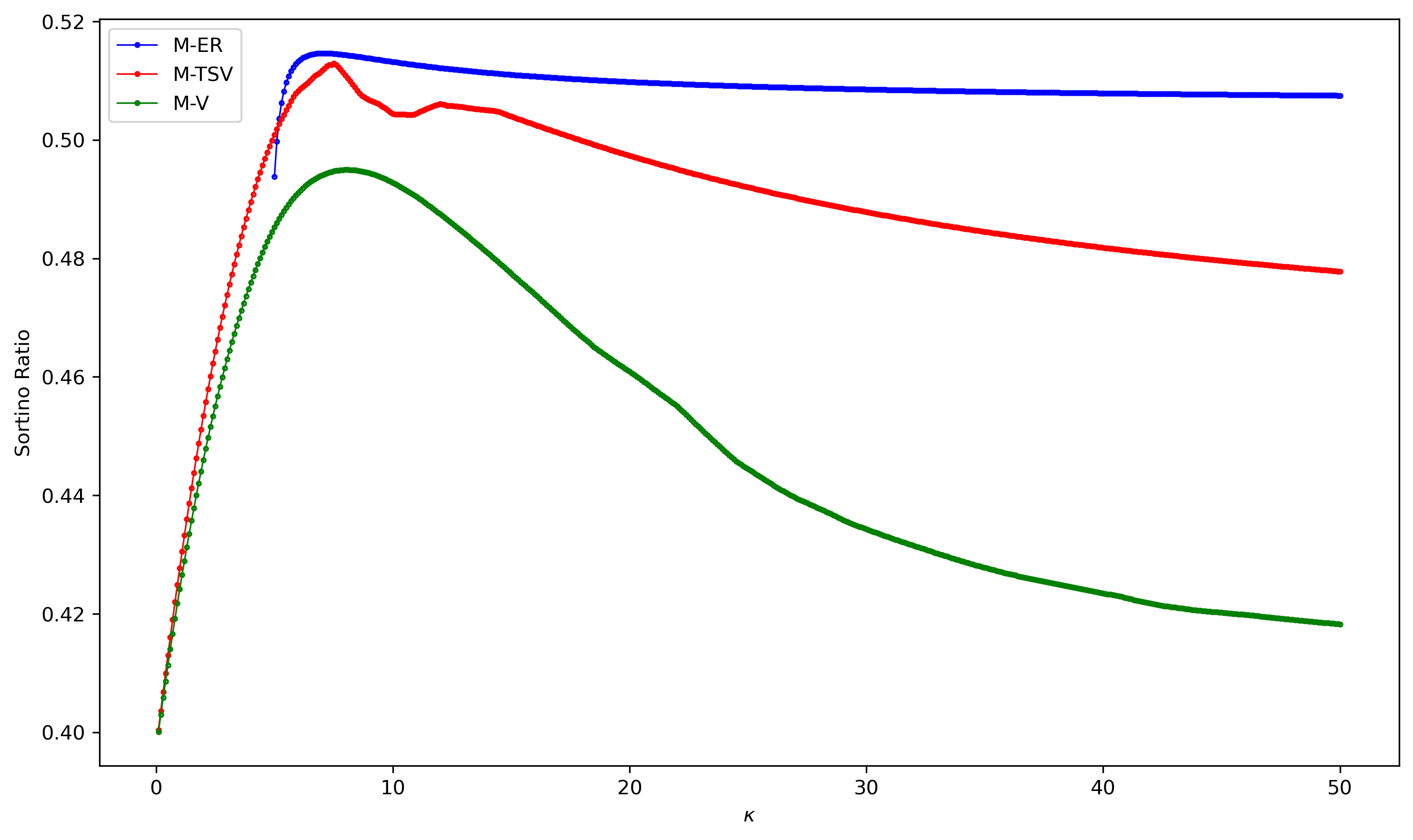}
        \caption{Sortino ratios of the three models with short selling}
    \end{subfigure}
    \begin{subfigure}{0.45\textwidth}
        \centering
        \includegraphics[width=\linewidth]{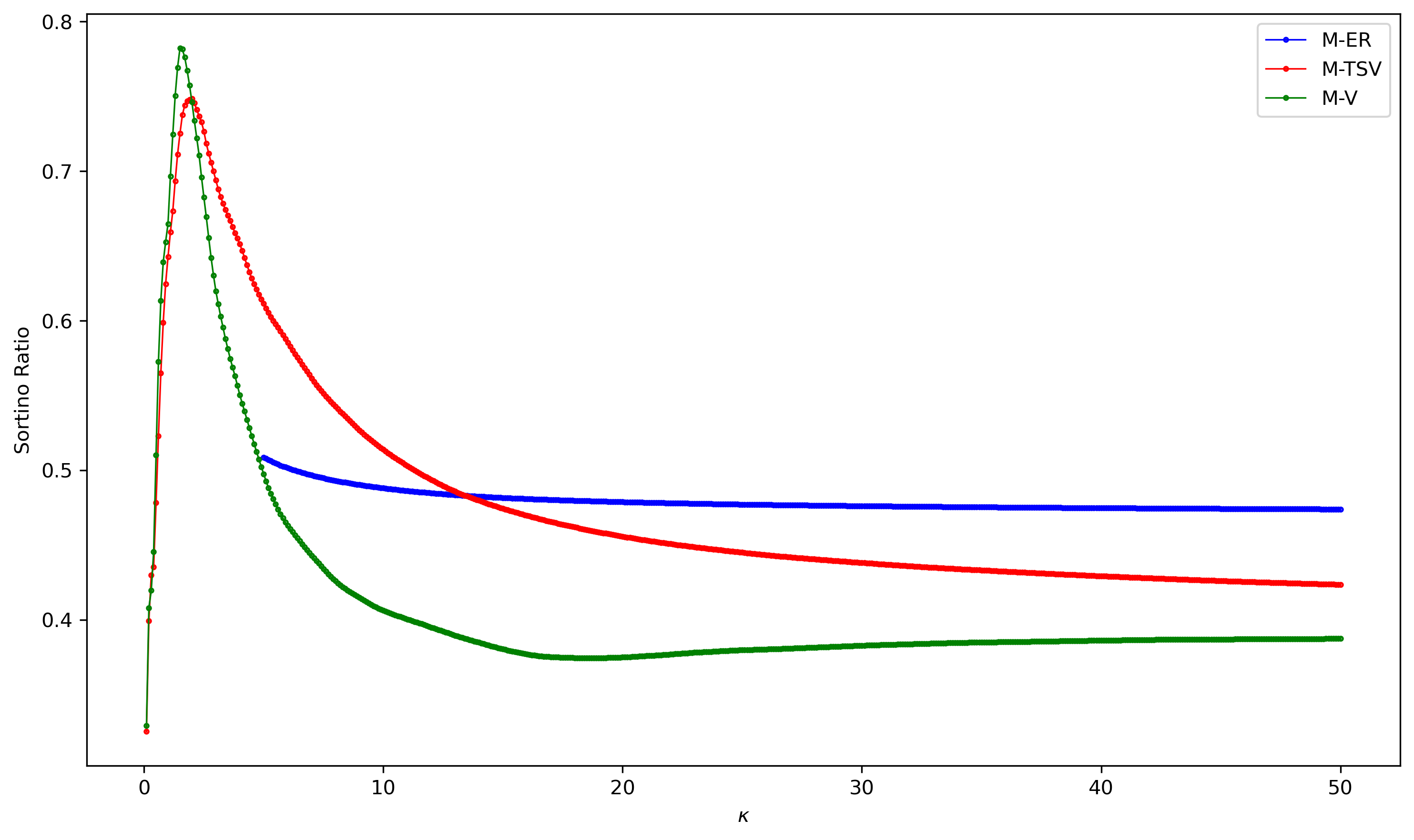}
        \caption{Sortino ratios of the three models without short selling}
    \end{subfigure}
    
    \begin{subfigure}{0.45\textwidth}
        \centering
        \includegraphics[width=\linewidth]{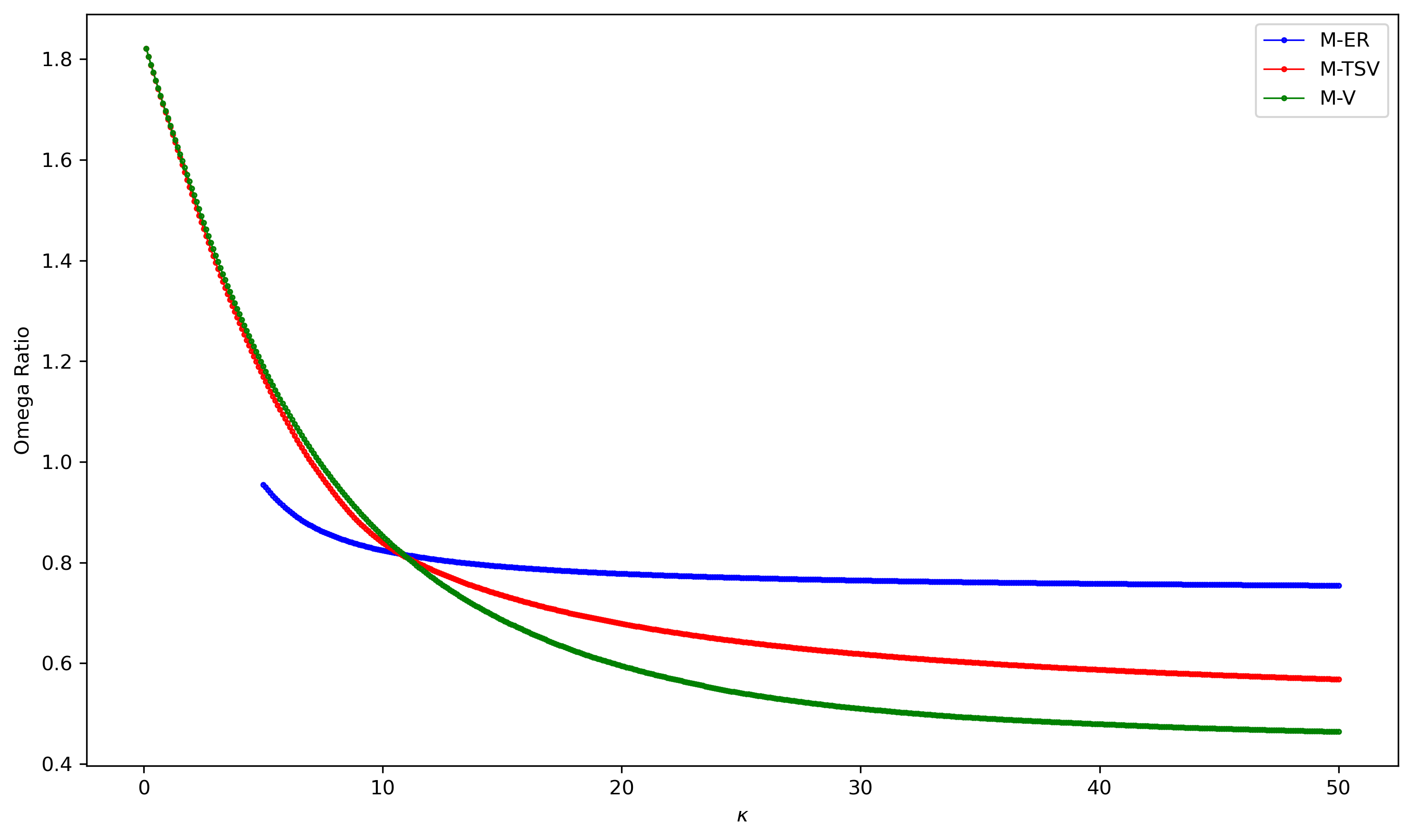}
        \caption{Omega ratios of the three models with short selling}
    \end{subfigure}
    \begin{subfigure}{0.45\textwidth}
        \centering
        \includegraphics[width=\linewidth]{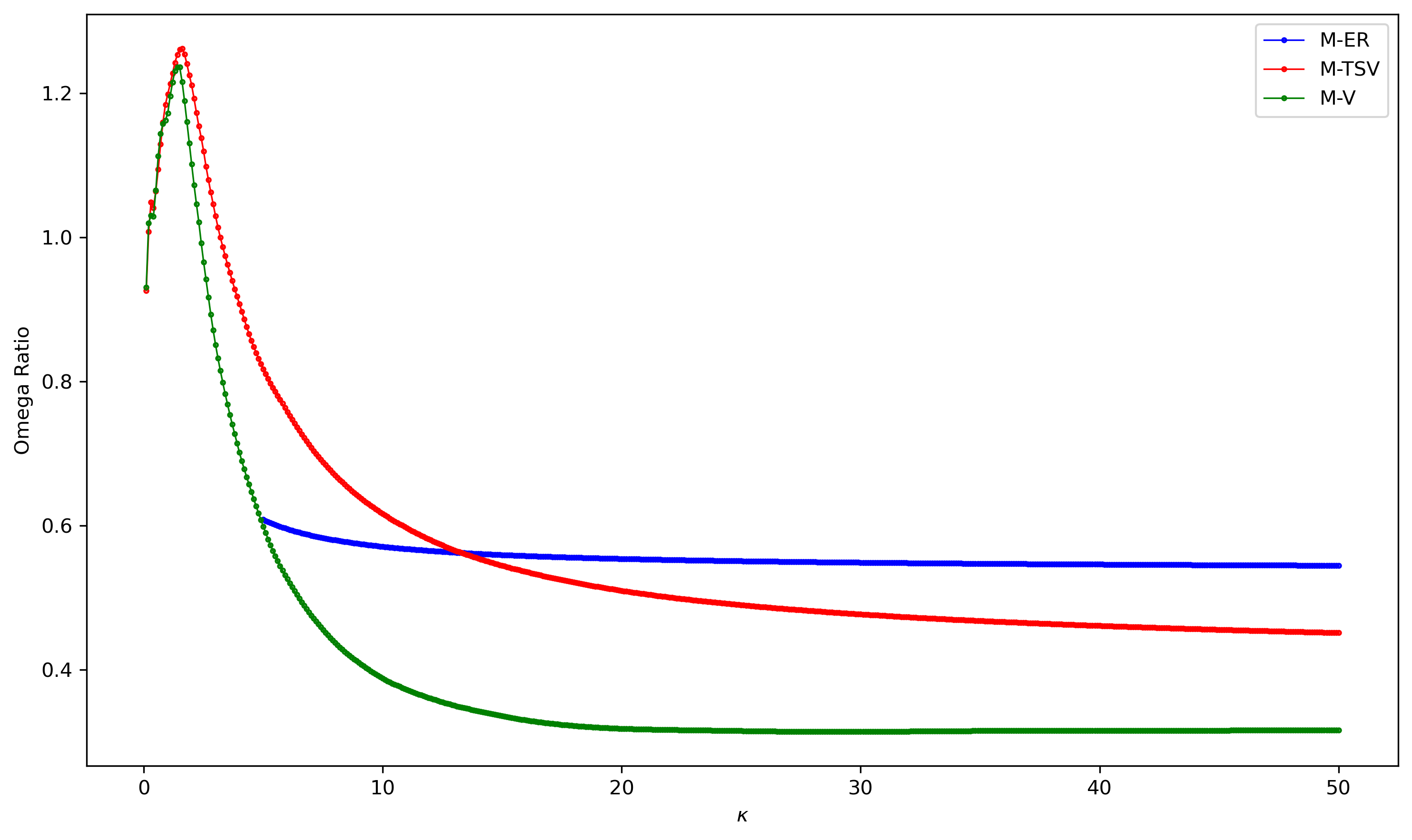}
        \caption{Omega ratios of M-TSV and M-V models without short selling}
    \end{subfigure}

    \captionsetup[figure]{font=normalsize}
\caption{Sortino and Omega ratios of portfolios based on the M-ER, M-TSV, and M-V models under the uncertainty set ${\cal M}_{n+1}(\bm{\mu}, \bm{\Sigma})$,  with and without short selling, plotted against the weighting  coefficient $\kappa$, for $\alpha=-0.04$ and $r_0=0.02$.   }    
     \label{fig: S-Q-diff-k-M}
\end{figure}

 \subsubsection{Optimal weight selection under a generalized uncertainty set} 
 \label{sec-optimal-kappa-D}

We now assume that the joint distribution of the stock and index losses belongs to the uncertainty set 
${\cal D}_{n+1}(\gamma_1, \gamma_2)$.  Under this uncertainty set,
 the out-of-sample cumulative wealth and Sharpe ratios of the M-ER, M-TSV, and M-V models are presented in 
 Figures \ref{fig: CW-Sharpe-diff-k-D-with-SS} and \ref{fig: CW-Sharpe-diff-k-D-no-SS}, while the out-of-sample Sortino ratios 
 and Omega ratios are reported in Figure \ref{fig: S-O-ratio-diff-k-D}.

From these figures, we observe that under the uncertainty set ${\cal D}_{n+1}(\gamma_1, \gamma_2)$, the 
out-of-sample cumulative wealth of the M-ER, M-TSV, and M-V models increase as $\kappa$ increases. 
However, for each model, the differences in cumulative wealth across different values of $\kappa$ are much smaller than those 
observed under the uncertainty set ${\cal M}_{n+1}(\bm{\mu}, \bm{\Sigma})$. In other words, 
the out-of-sample cumulative wealth under ${\cal D}_{n+1}(\gamma_1, \gamma_2)$ is more robust 
to the changes in weighting parameter $\kappa$ than under ${\cal M}_{n+1}(\bm{\mu}, \bm{\Sigma})$. 
In particular, for large values of $\kappa$, the out-of-sample cumulative wealth of each model stabilizes at an 
approximately constant level.

In addition, for each model, there exists an optimal value of the weighting parameter of $\kappa^*$ 
that maximizes the Sharpe ratio, both with and without short selling. Moreover,
 for the M-TSV model, there also exists an optimal  value of $\kappa^*$ that 
 maximizes the Sortino ratio, regardless of whether short selling is allowed.

Furthermore, for each of the M-ER, M-TSV, and M-V models, the Omega ratio decreases as $\kappa$
 increases. This indicates that small values of $\kappa$ are preferred to achieve both 
 higher cumulative wealth and a higher Omega ratio.

These results provide an empirical approach for selecting the optimal weighting  coefficient $\kappa^*$ under the uncertainty
 set ${\cal D}_{n+1}(\gamma_1, \gamma_2)$ for the M-ER, M-TSV, and M-V models.

\begin{figure}[H]
    \centering
    \begin{subfigure}{0.45\textwidth}
        \centering
        \includegraphics[width=\linewidth]{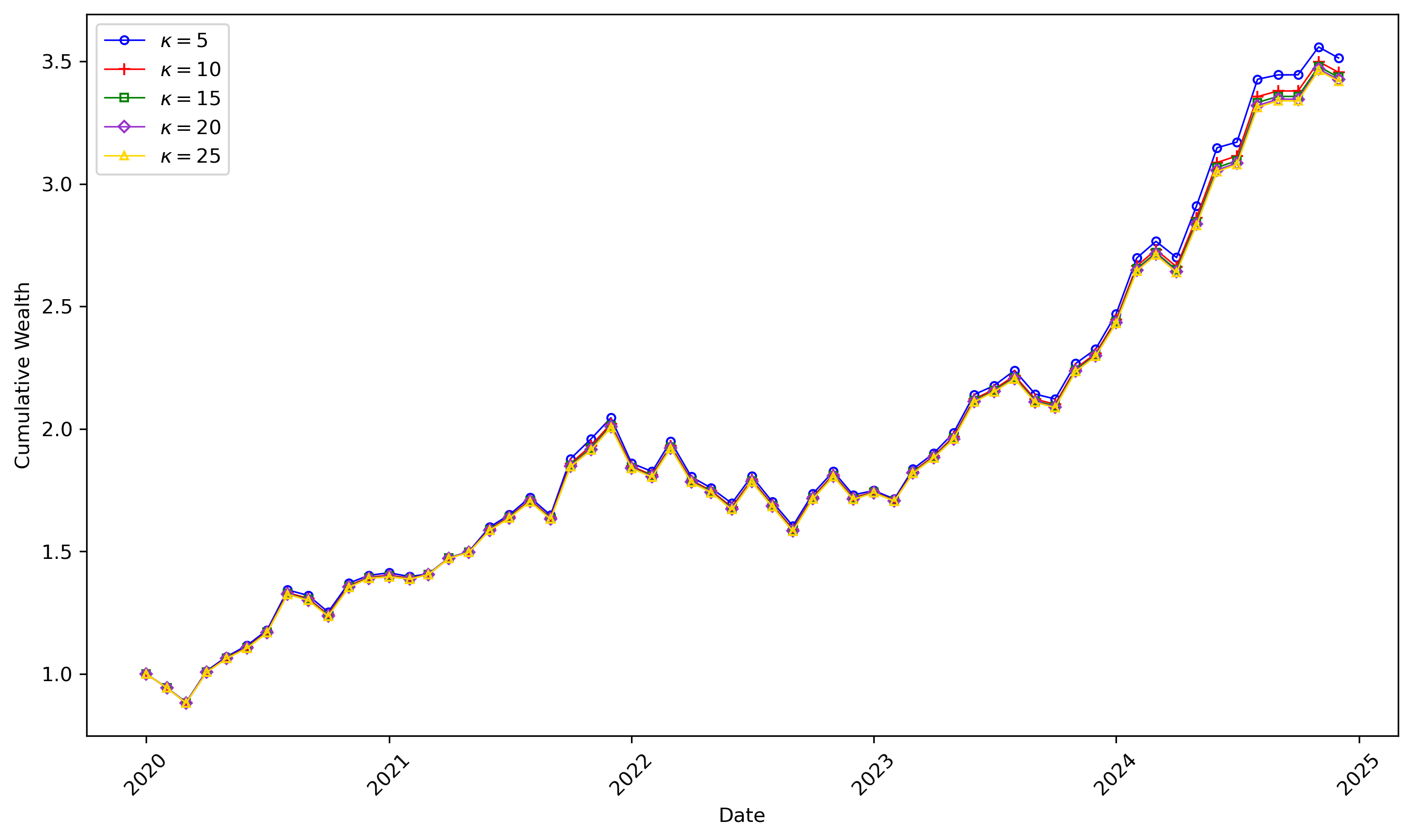}
        \caption{Cumulative wealth of the M-ER model with short selling} 
    \end{subfigure}
    \begin{subfigure}{0.45\textwidth}
        \centering
        \includegraphics[width=\linewidth]{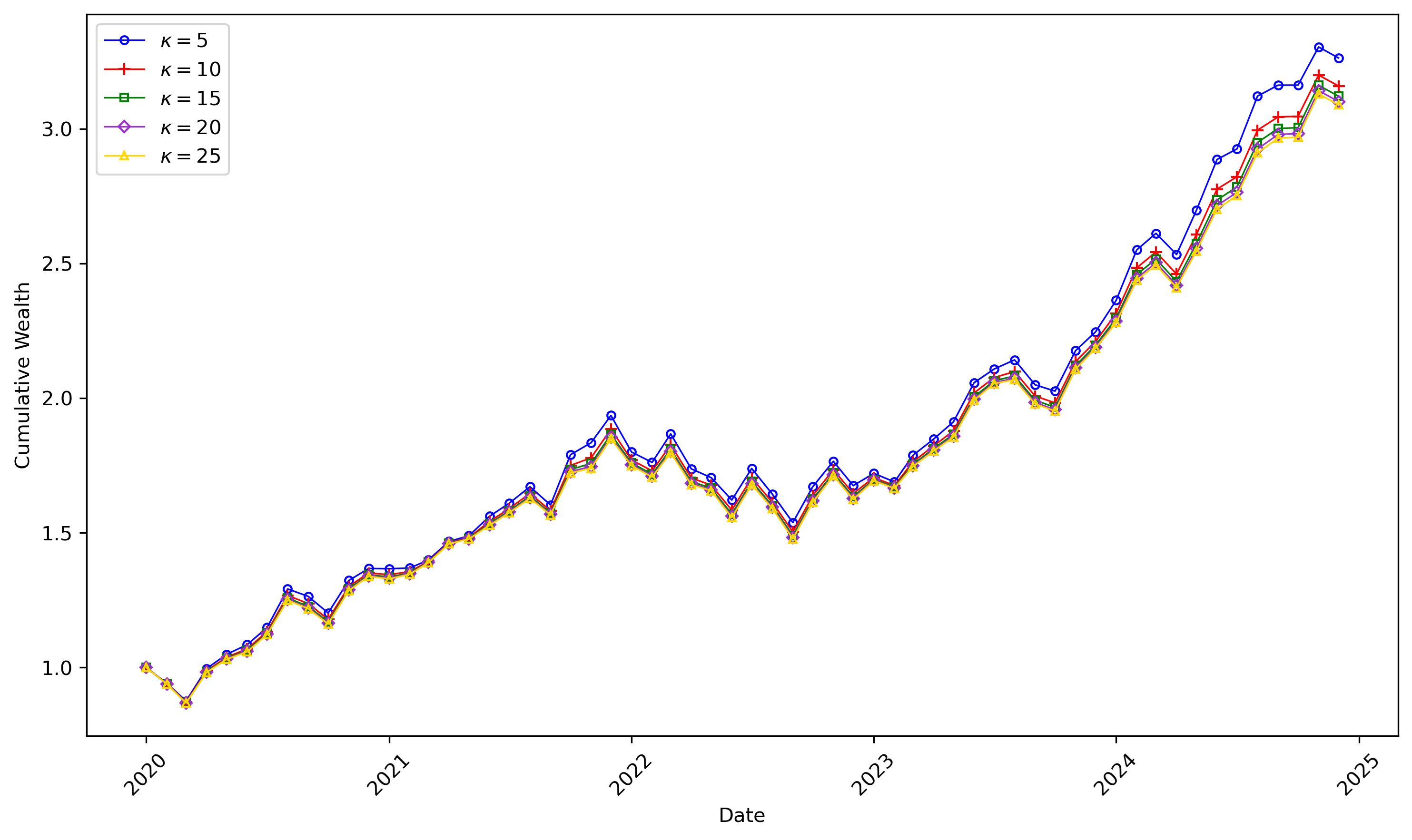}
        \caption{Cumulative wealth of the M-TSV model with short selling}
    \end{subfigure}
    
    \begin{subfigure}{0.45\textwidth}
        \centering
        \includegraphics[width=\linewidth]{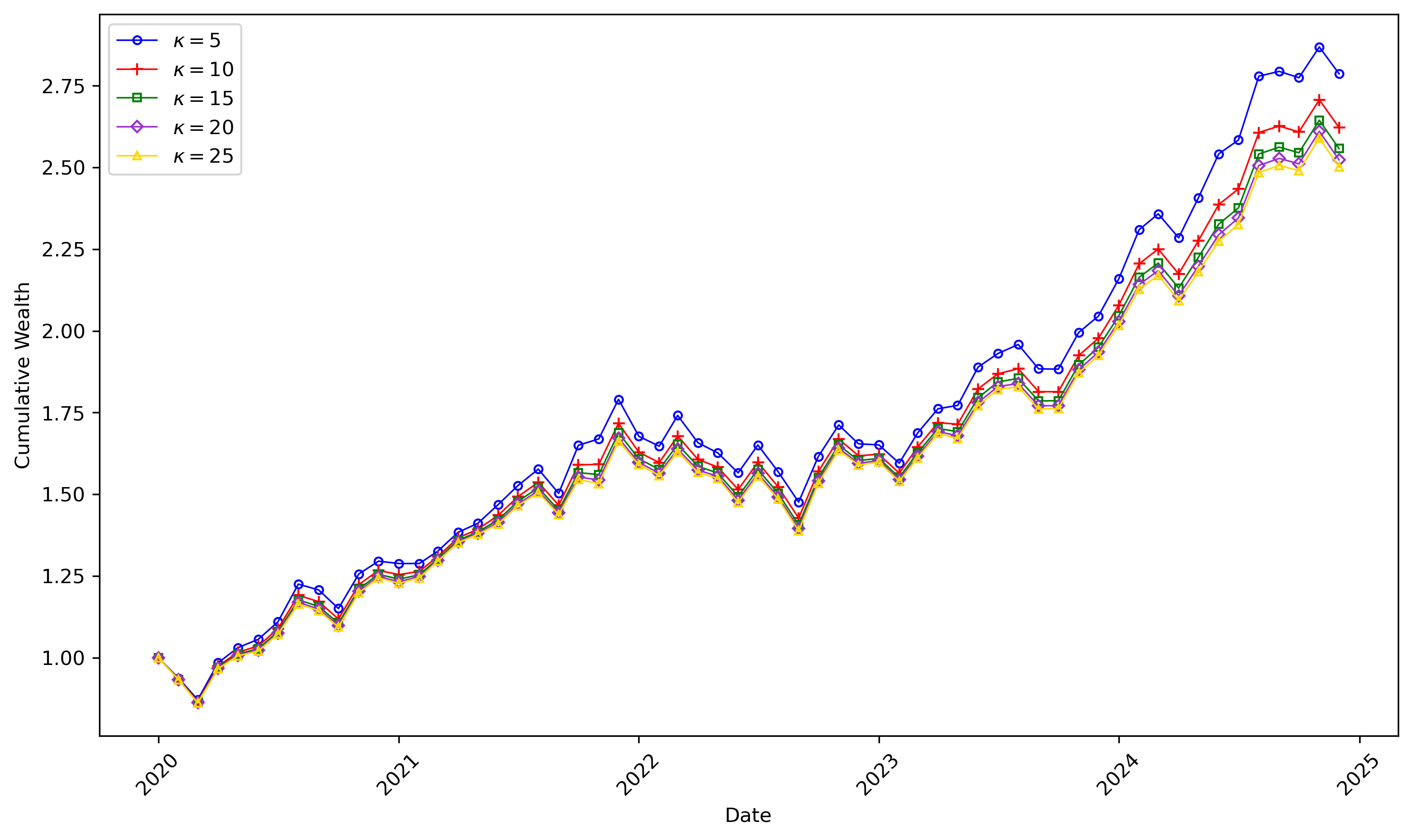}
        \caption{Cumulative wealth of the M-V model with short selling}
    \end{subfigure}
    \begin{subfigure}{0.45\textwidth}
        \centering
        \includegraphics[width=\linewidth]{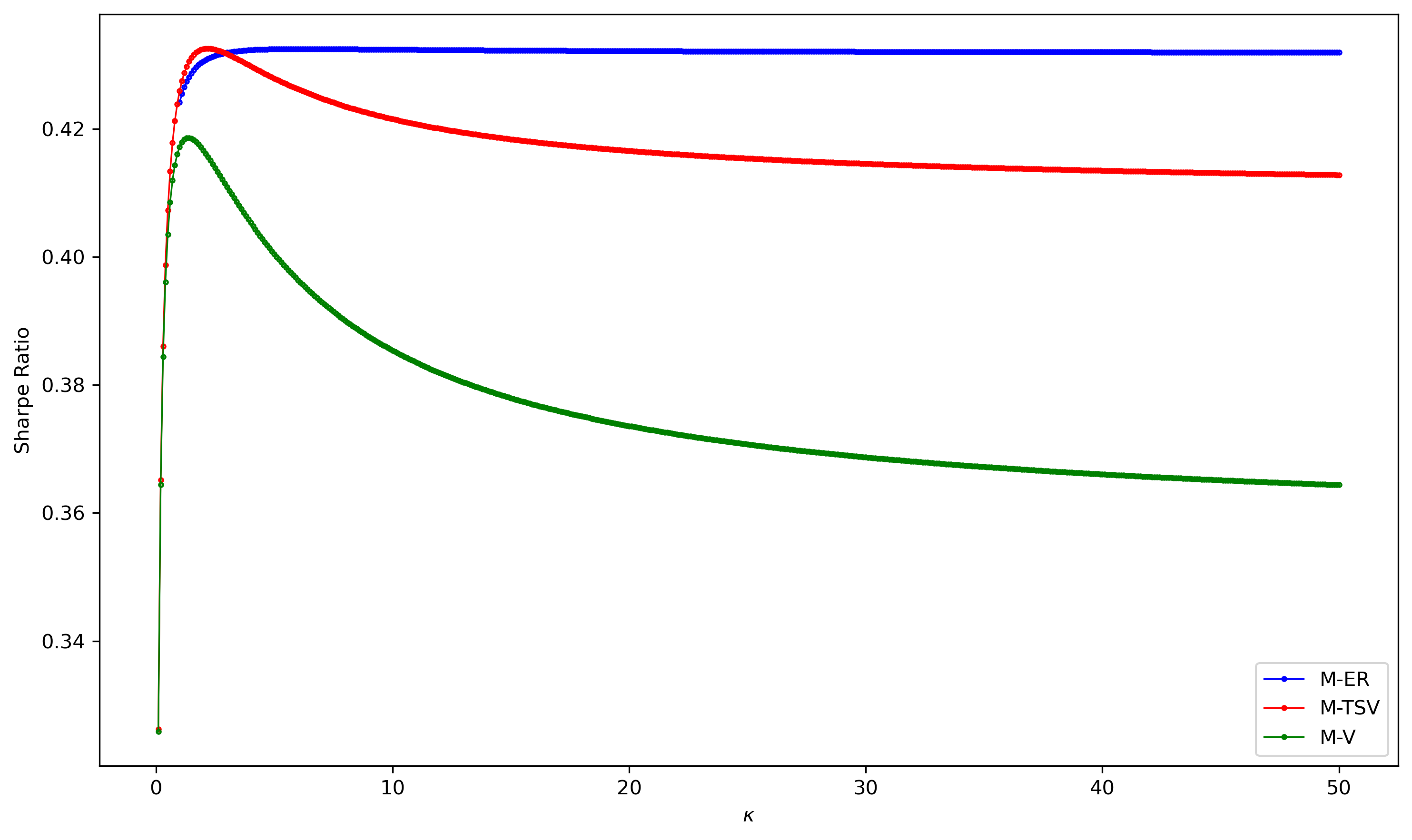}
        \caption{Sharpe ratios of the three models with short selling}
    \end{subfigure}
    
    \caption{The cumulative wealth  and Sharpe ratios 
    of portfolios based on the M-ER, M-TSV and M-V models under the uncertainty set ${\cal D}_{n+1}(\gamma_1, \gamma_2)$,   with short selling, plotted against the risk aversion coefficient $\kappa$, for
     $\alpha=-0.04$, $r=-0.01$,  $\gamma_1=0.21$, and  $\gamma_2=0.025$.
        }
    \label{fig: CW-Sharpe-diff-k-D-with-SS}
\end{figure} 

\begin{figure}[H]
    \centering
    \begin{subfigure}{0.45\textwidth}
        \centering
        \includegraphics[width=\linewidth]{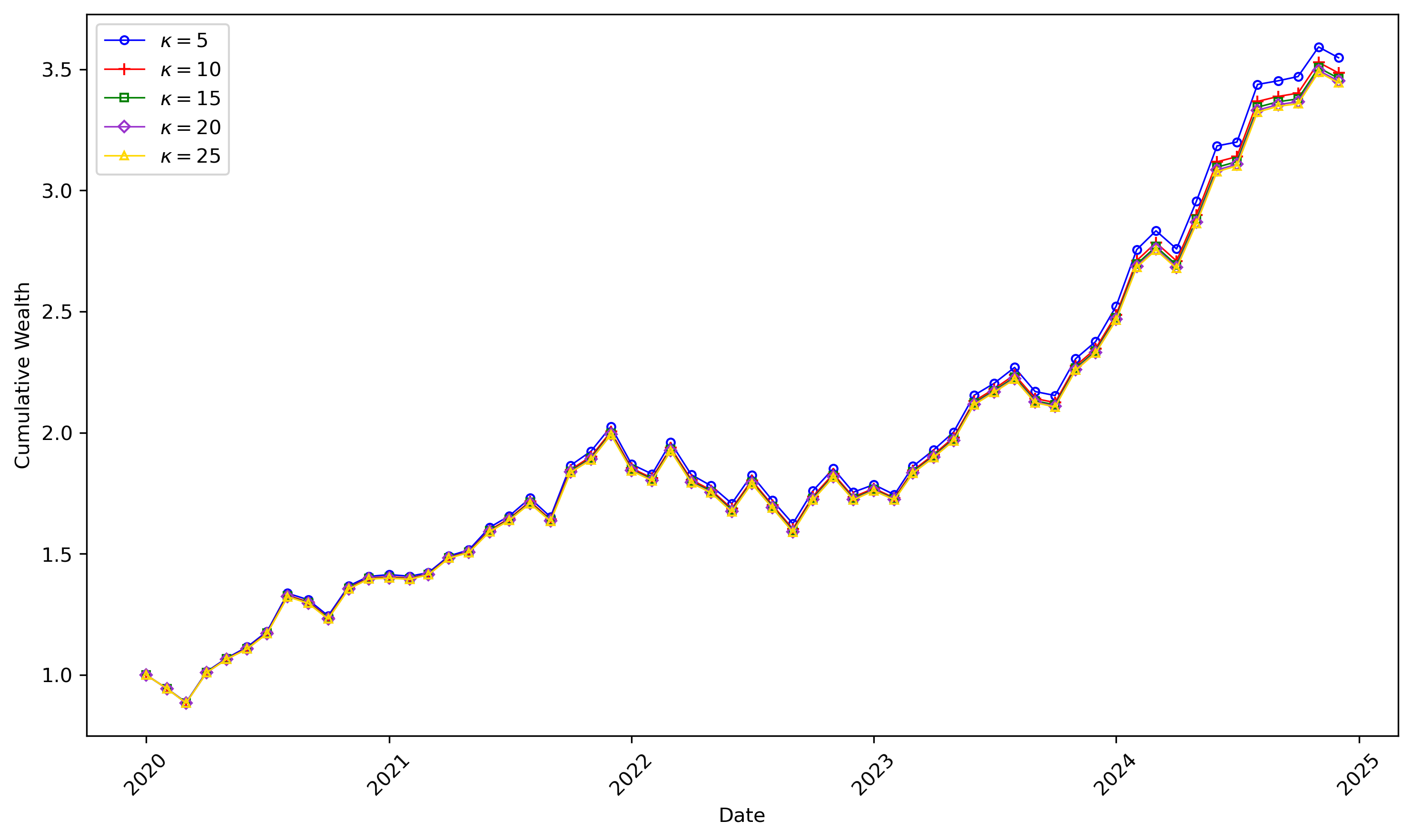}
        \caption{Cumulative wealth of the M-ER model without short selling}
    \end{subfigure}
    \begin{subfigure}{0.45\textwidth}
        \centering
        \includegraphics[width=\linewidth]{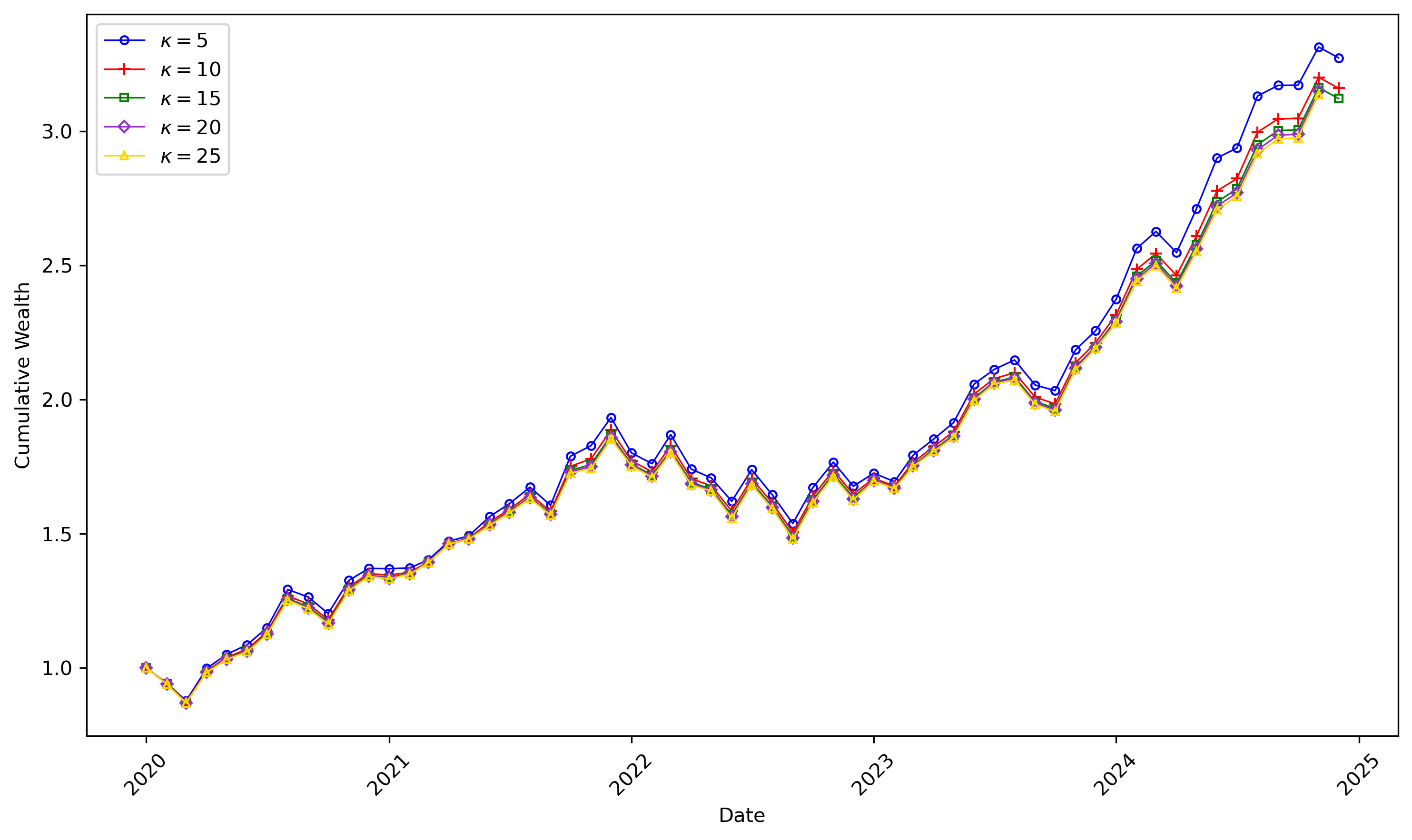}
        \caption{Cumulative wealth of the M-TSV model without short selling}
    \end{subfigure}
    
    \begin{subfigure}{0.45\textwidth}
        \centering
        \includegraphics[width=\linewidth]{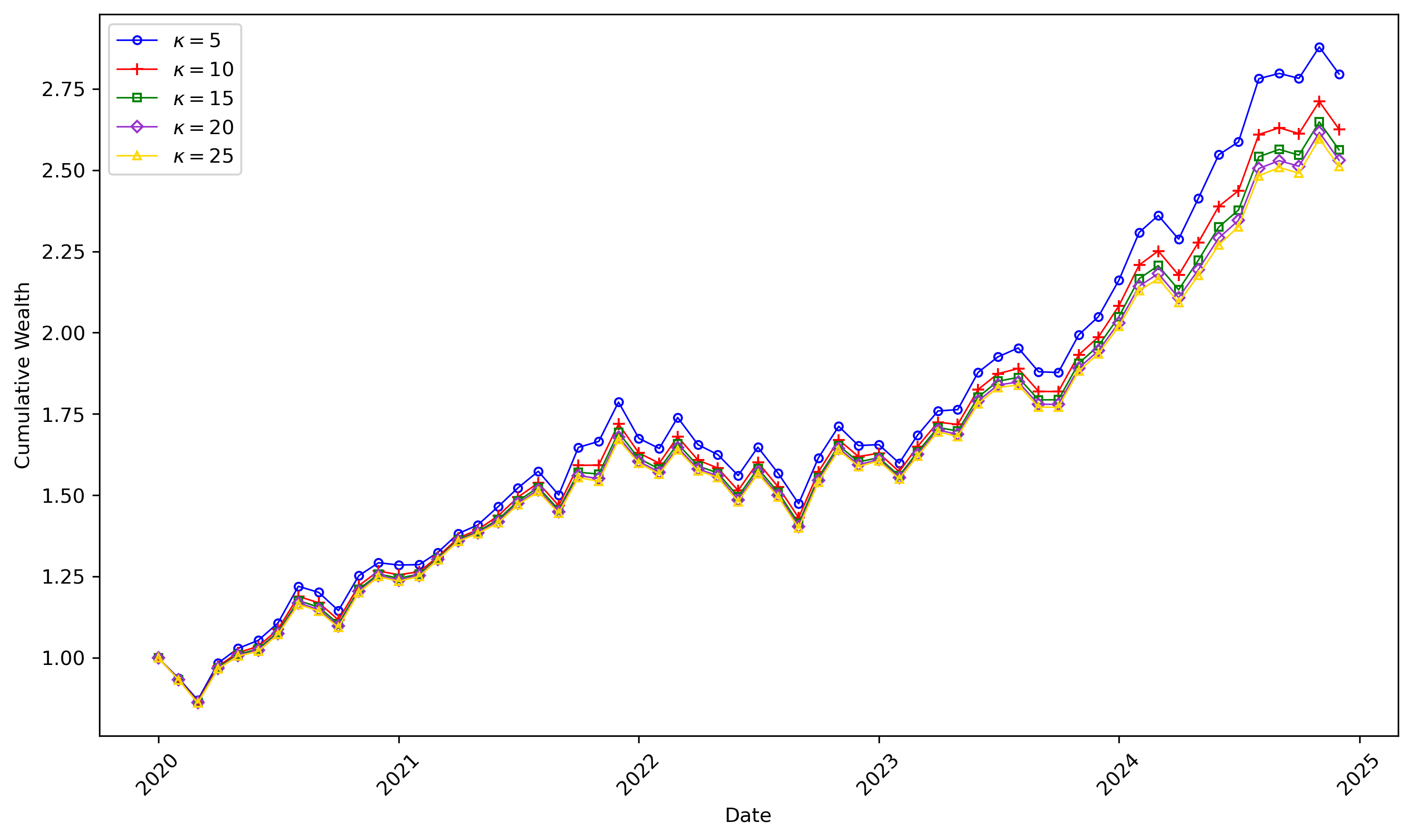}
        \caption{Cumulative wealth of the M-V model without short selling}
    \end{subfigure}
    \begin{subfigure}{0.45\textwidth}
        \centering
        \includegraphics[width=\linewidth]{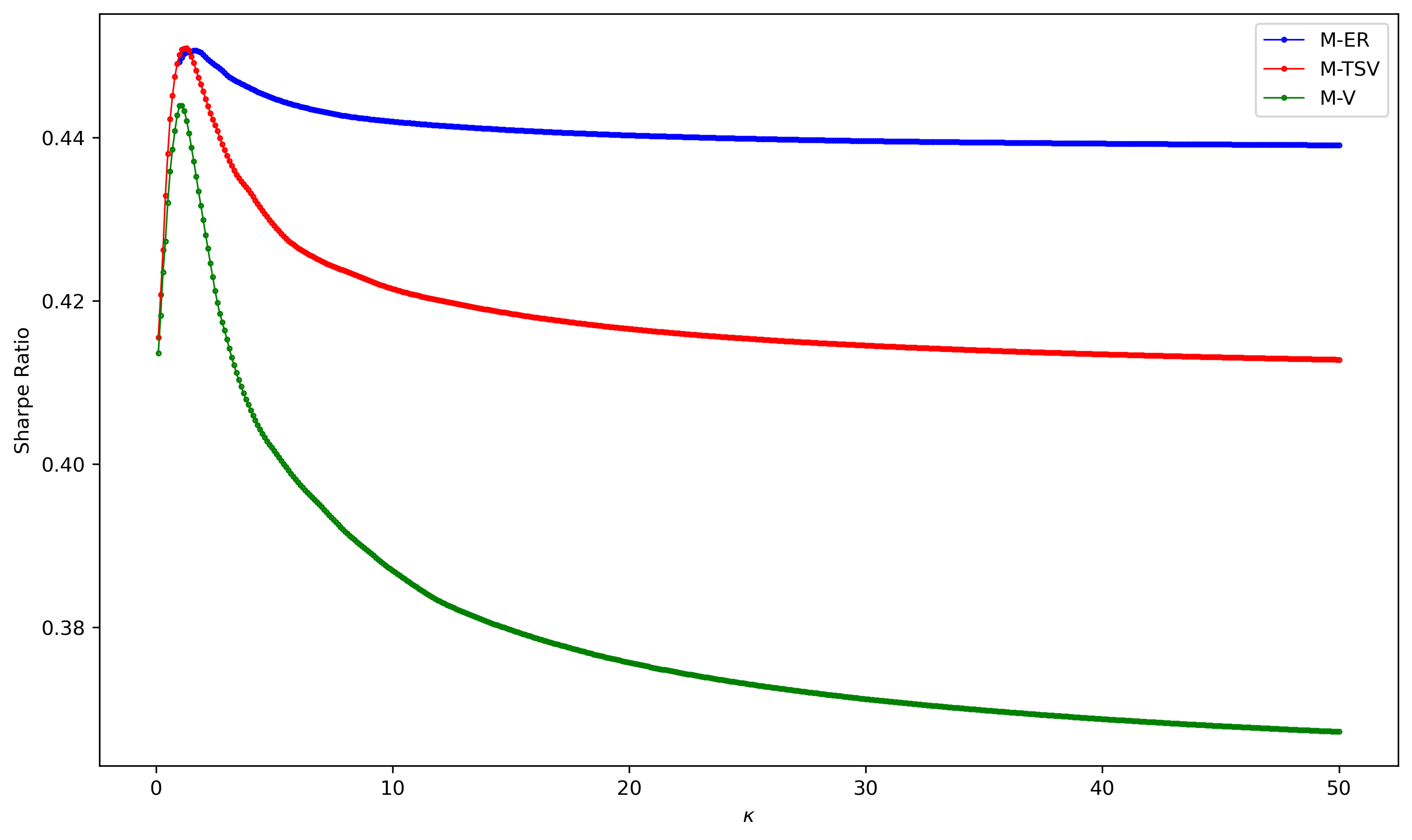}
        \caption{Sharpe Ratio of the three models without short selling}
    \end{subfigure}
    
    \caption{Cumulative wealth  and Sharpe ratios 
    of portfolios based on the M-ER, M-TSV and M-V models  under the uncertainty set ${\cal D}_{n+1}(\gamma_1, \gamma_2)$,   without short selling, plotted against the weighting  coefficient $\kappa$, for 
     $\alpha=-0.04$, $r=-0.01$,  $\gamma_1=0.21$, and  $\gamma_2=0.025$.
        }
    \label{fig: CW-Sharpe-diff-k-D-no-SS}
\end{figure} 

\begin{figure}[H]
    \centering
    \begin{subfigure}{0.45\textwidth}
        \centering
        \includegraphics[width=\linewidth]{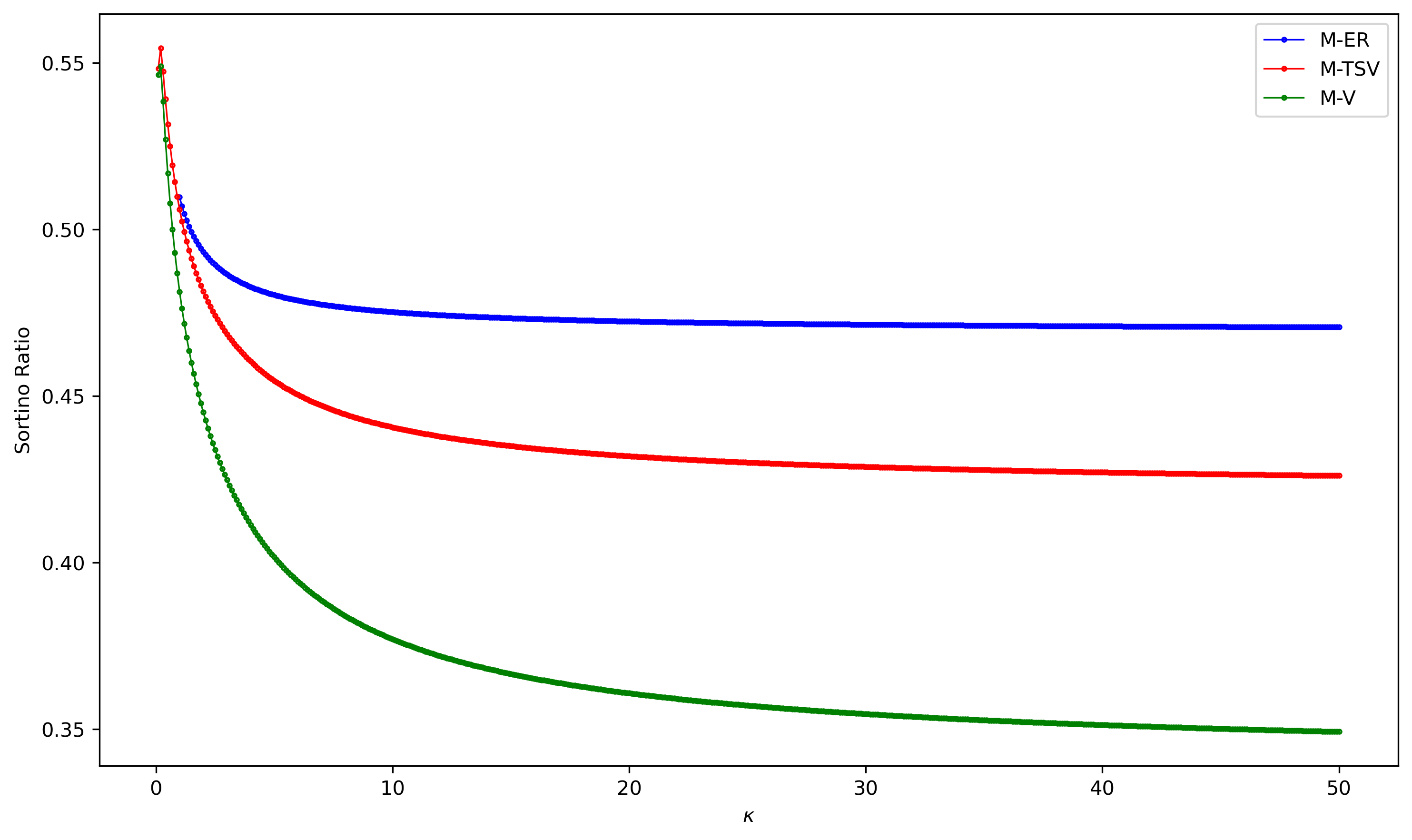}
        \caption{Sortino ratios of the three models with short selling }
    \end{subfigure}
    \begin{subfigure}{0.45\textwidth}
        \centering
        \includegraphics[width=\linewidth]{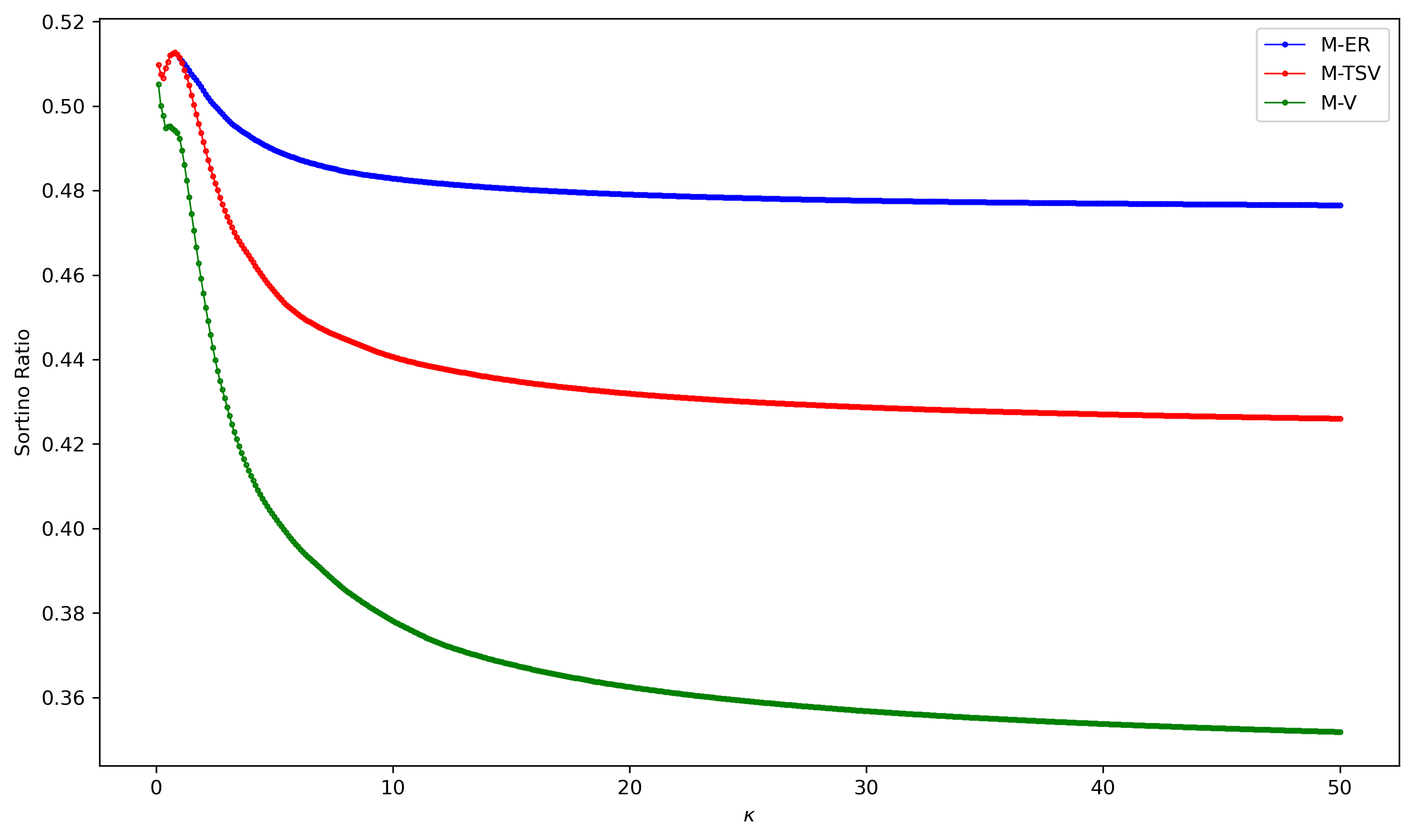}
        \caption{Sortino ratios of the three models without short selling}
    \end{subfigure}
    
    \begin{subfigure}{0.45\textwidth}
        \centering
        \includegraphics[width=\linewidth]{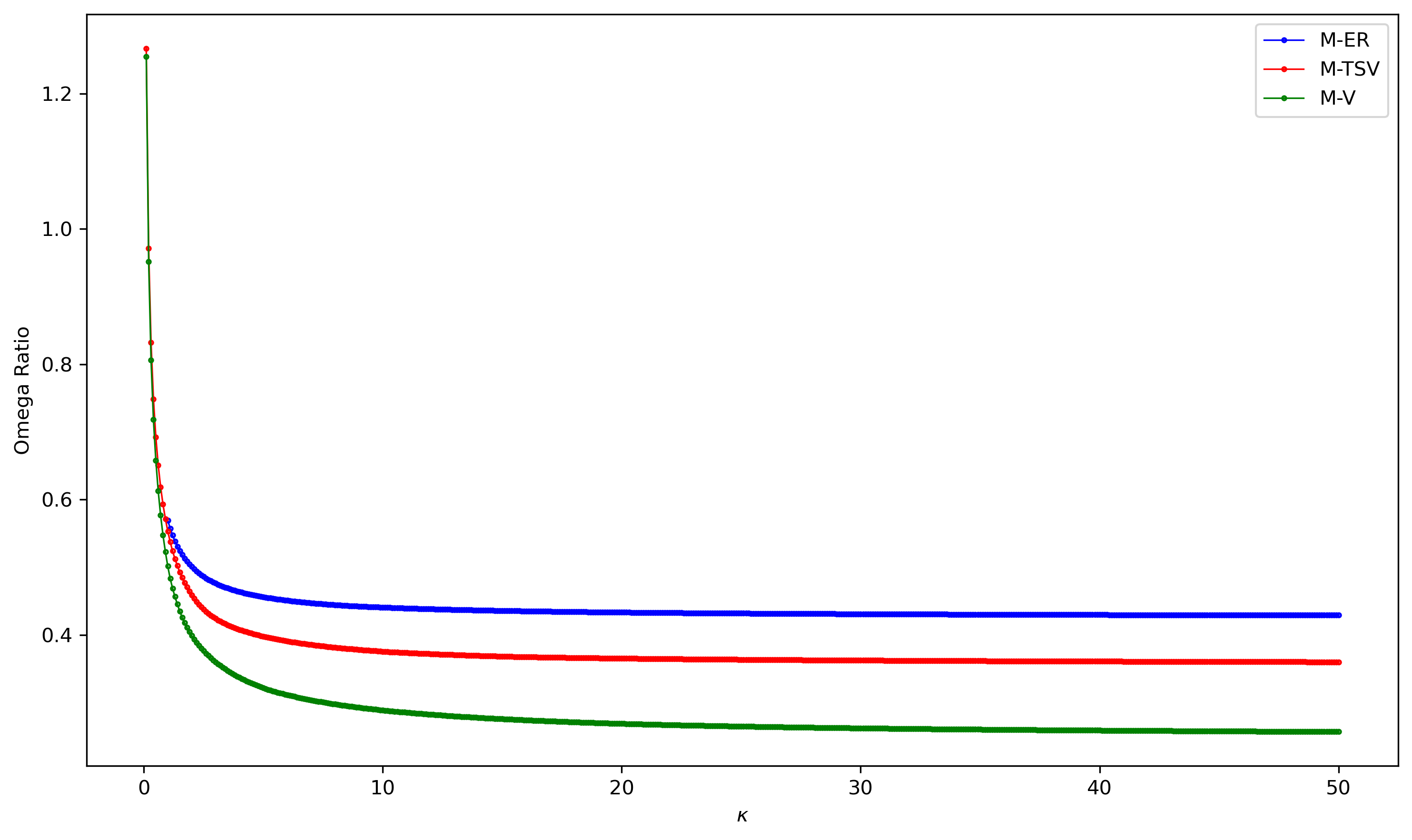}
        \caption{Omega ratios of the three models with short selling}
    \end{subfigure}
    \begin{subfigure}{0.45\textwidth}
        \centering
        \includegraphics[width=\linewidth]{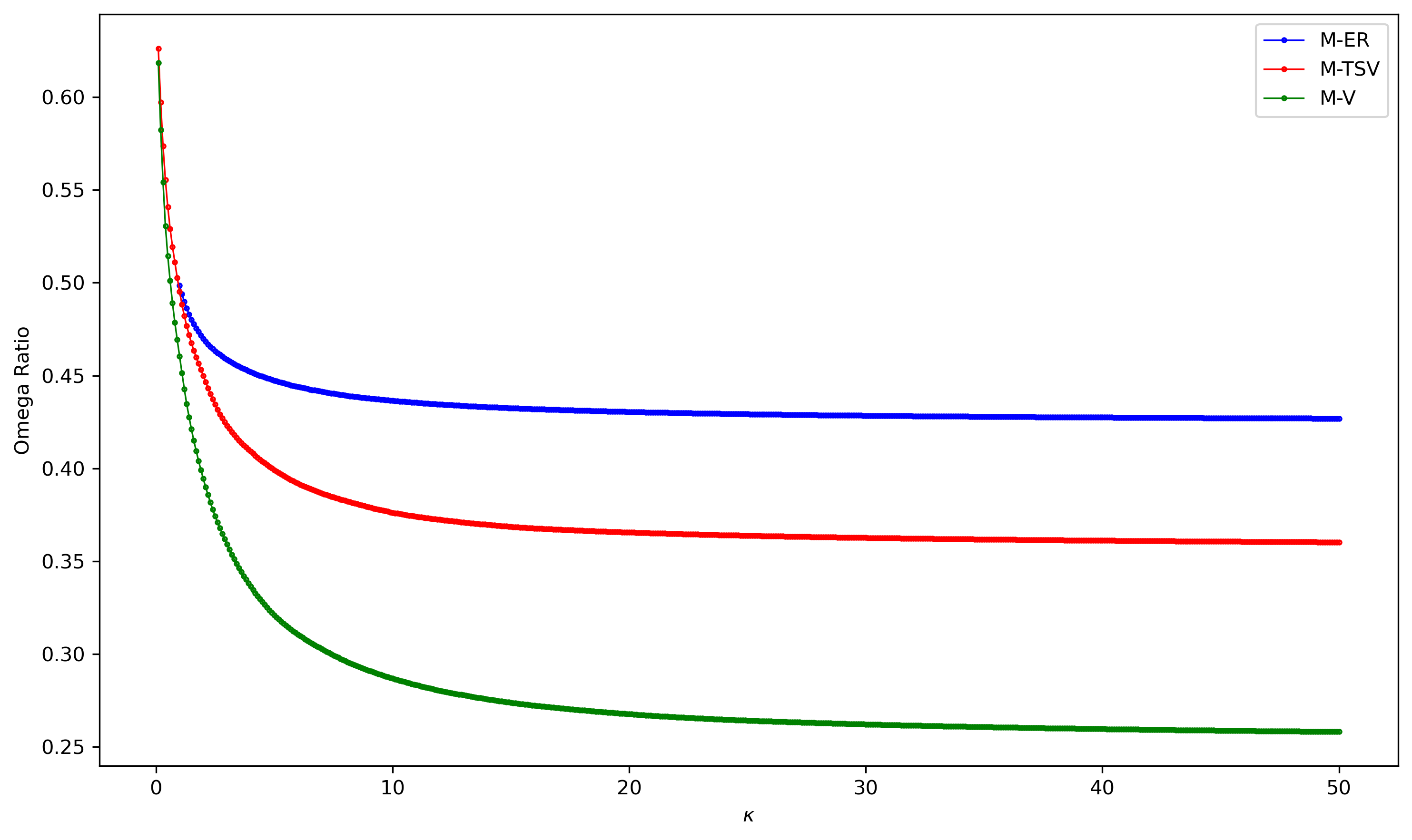}
        \caption{Omega ratios of the three models without short selling} 
    \end{subfigure}

    \caption{Sortino  and Omega ratios 
    of portfolios based on the M-ER, M-TSV and M-V models under the uncertainty set ${\cal D}_{n+1}(\gamma_1, \gamma_2)$, 
      with and  without short selling, plotted against the weighting  coefficient $\kappa$, for
     $\alpha=-0.04$, $r=-0.01$,  $\gamma_1=0.21$, and  $\gamma_2=0.025$.
     }
    \label{fig: S-O-ratio-diff-k-D}
\end{figure}

\section{Conclusion}
\label{sec-6}

 In this paper, we propose robust enhanced index tracking (EIT) portfolio selection models under distributional uncertainty 
in the joint distribution of asset and index losses. Our objective is to balance the trade-off between mean loss and 
downside risk (or variance) in portfolio selection. Based on Pareto optimality, we develop models that 
minimize a convex combination of worst-case mean loss and downside risk (or variance), subject to a 
constraint on worst-case mean return. The resulting portfolios are Pareto optimal, meaning that no 
alternative achieves both lower mean loss and lower downside risk (or variance).

Empirical results based on real market data show that the proposed models outperform the equally weighted portfolio, the robust mean-variance model, and the robust variance model in terms of out-of-sample cumulative wealth, as well as Sharpe, Sortino, and Omega ratios. These metrics are widely used to assess a portfolio's ability to balance mean return and downside risk (or variance), where downside risk is measured by the expected regret and target semi-variance. Specifically, portfolios from models that minimize a convex combination of worst-case mean loss and downside risk (or variance) consistently outperform those from models that minimize worst-case downside risk (or variance) alone.

The results further demonstrate the effectiveness of the proposed EIT models in balancing out-of-sample 
mean return and downside risk (or variance). In addition, we provide empirical guidance for 
selecting optimal weighting coefficients to maximize Sharpe, Sortino, and Omega ratios.

\appendix\label{A}
 \section{Appendix} \label{appendix}

\begin{proof}[\textbf{Proof of Proposition \ref{pro-Pareto}}]
Suppose that   $\bm{\omega}^*_\kappa$ is not  a Pareto-optimal portfolio. Then, there exits a portfolio $\bm{\omega}_\kappa \in \cal W$ such that 
       $ \rho(\boldsymbol{\omega}_\kappa, \bm{X}, X_0) \le \rho(\boldsymbol{\omega}^*_\kappa, \bm{X}, X_0)$ and 
          $ \varphi(\boldsymbol{\omega}_\kappa, \bm{X}, X_0) \le \varphi(\boldsymbol{\omega}^*_\kappa, \bm{X}, X_0),$
    with  at least one of the inequalities  being  strict.  Therefore, it follows  that  
    $   
      \rho(\boldsymbol{\omega}_\kappa, \bm{X}, X_0) +\kappa \,  \varphi(\boldsymbol{\omega}_\kappa, \bm{X}, X_0)   <  
       \rho(\boldsymbol{\omega}^*_\kappa, \bm{X}, X_0) +\kappa \,  \varphi(\boldsymbol{\omega}^*_\kappa, \bm{X}, X_0),   
        $
   which contradicts the fact that    $ \bm{\omega}^*_\kappa \in  \argmin_{\bm{\omega} \in {\cal W}} \ 
  \rho(\boldsymbol{\omega}, \bm{X}, X_0) +\kappa \,  \varphi(\boldsymbol{\omega}, \bm{X}, X_0)$.  Hence,   $\bm{\omega}^*_\kappa$ is a Pareto-optimal portfolio.           
\end{proof}

\begin{proof}[\textbf{Proof of Theorem \ref{theo-mean-upm}}]
For any  $(G,F)\in \mathcal{M}_{n+1}(\bm{\mu},\bm{\Sigma})$  
and any $\bm{\omega} \in \R^n$,  define vector $\bm{a}^{\top}=(\bm{\omega}, -1)$, it is easy to verify    that 
    \be
\mathbb{E}[\boldsymbol{\omega}^{\top}\bm{X}^G-X_0^F] &=&  \bm{a}^{\top}\bm{\mu}, \nonumber\\
{\rm Var}(\boldsymbol{\omega}^{\top}\bm{X}^G-X_0^F) &=& \bm{a}^{\top}\bm{\Sigma}\bm{a} \ = \  \bm{\omega}^{\top}\bm{\Sigma}_n\bm{\omega}-2\bm{b}^{\top}\bm{\omega}+\sigma_0^2  \, >  \, 0. \label{Var-TE}
\ee
Denote $t^2(s)$ and $t(s)$  by  the variance and standard deviation  of $\boldsymbol{\omega}_s^{\top}\bm{X}^G-X^F_0$, 
where 
$\boldsymbol{\omega}_s$ is given in \eqref{w_s}.  By \eqref{Var-TE}, we obtain that for any $s \in \R$,
 \be
 \label{ts}
 t(s)=\sqrt{b_0s^2-2(c_3b_0-c_4b_1+b_1)s+b_0c_3^2-2b_1c_3(c_4-1)+b_2(c_4-1)^2-c_5+\sigma_0^2}.
\ee

Thus, by the general projection property (see, e.g.,  Theorem 1 of \citet{07PI} or Lemma 2.4 of \citet{chen2011tight}), we have that 
\be
\label{M=L}
\sup_{(G,F)\in \mathcal{M}_{n+1}(\bm{\mu},\bm{\Sigma})} \, \mathbb{E}\big[(\boldsymbol{\omega}^{\top}\bm{X}^G-X_0^F-\alpha)_+^m\big]=\sup_{F\in \mathcal{L}(\bm{a}^{\top}\bm{\mu},\ \sqrt{\bm{a}^{\top}\bm{\Sigma}\bm{a}})} \, \mathbb{E}\big[(X^F-\alpha)_+^m\big].
 \ee
Further, $\bm{\omega}^{\top}\bm{\mu}_n = \bm{a}^{\top}\bm{\mu}+\mu_0$. Thus,  
 problem  \eqref{mean-upm-M} reduces to the following problem:
\begin{equation}
\label{min-mean-f_m}
\begin{aligned} 
  &  \min_{ \bm{\omega} \in \R^n} \bm{a}^{\top}\bm{\mu}+\mu_0+\kappa \, f_{m} \Big (\bm{a}^{\top}\bm{\mu},\sqrt{\bm{a}^{\top}\bm{\Sigma}\bm{a}} \Big ) \\
& \ \text{s.t.}  \quad  \ -\bm{\omega}^{\top}\bm{\mu}  \ge r_0, \ \ \ \ \boldsymbol{\omega}^{\top}\boldsymbol{e}=1,   
      \end{aligned} 
        \end{equation}
which is equivalent to the following problem: 
\begin{equation}
\label{min-h_m}
    \begin{aligned}
        &\min_{\boldsymbol{\omega}\in\mathbb{R}^n}\left\{h_{m, \kappa} \Big(\boldsymbol{a}^{\top}\boldsymbol{\mu},\sqrt{\boldsymbol{a}^{\top}\boldsymbol{\Sigma}\boldsymbol{a}}\Big): \  -\bm{\omega}^{\top}\bm{\mu}_n  \ge r_0, \  \  \boldsymbol{\omega}^{\top}\boldsymbol{e}=1\right\}  \\
       \end{aligned}
\end{equation}
as $\mu_0$ is a given constant in $ \mathcal{M}_{n+1}(\bm{\mu},\bm{\Sigma})$.

Note that  $ \bm{a}^{\top}\bm{\mu}=\bm{\omega}^{\top}\bm{\mu}_n - \mu_0$ and that $h_{m, \kappa} \big(\boldsymbol{a}^{\top}\boldsymbol{\mu},\sqrt{\boldsymbol{a}^{\top}\boldsymbol{\Sigma}\boldsymbol{a}}\big) =h_{m, \kappa} \big(\bm{\omega}^{\top}\bm{\mu}_n - \mu_0, \sqrt{\boldsymbol{a}^{\top}\boldsymbol{\Sigma}\boldsymbol{a}}\big)  $ is increasing and continuous in $\bm{\omega}^{\top}\bm{\mu}_n$ by Proposition \ref{pro-f-h}.
  Thus,   by introducing $s=\bm{\omega}^{\top}\bm{\mu}_n$, problem \eqref{min-h_m}
reduces to the following problem: 
\begin{equation} 
\label{min-min-s} 
           \min_{ s \le -r_0} \  \min_{\bm{\omega}^{\top}\bm{\mu}_n=s,\ \bm{\omega}^{\top}\bm{e}=1} h_{m, \kappa} \Big (s-\mu_0,\sqrt{\bm{a}^{\top}\bm{\Sigma}\bm{a}} \Big ).
         \end{equation}

Further, by Proposition \ref{pro-f-h},  for any fixed  $x \in\mathbb{R}$, function $h_m(x,y)$ is increasing and continuous  in $y \in \R^+$. 
Thus, 
\begin{equation*}
   \min_{\bm{\omega}^{\top}\bm{\mu}_n=s,\ \bm{\omega}^{\top}\bm{e}=1} 
   h_{m,\kappa}\Big (s-\mu_0,\sqrt{ \bm{a}^{\top}\bm{\Sigma}\bm{a}} \Big )
   =h_{m,\kappa}\Big(s-\mu_0,\ \sqrt{ \min_{\bm{\omega}^{\top}\bm{\mu}_n=s,\ \bm{\omega}^{\top}\bm{e}=1}\bm{a}^{\top}\bm{\Sigma}\bm{a}}\Big).
       \end{equation*}
    By Lemma  \ref{theo-w*} and \eqref{Var-TE} we obtain
$
\min_{\bm{\omega}^{\top}\bm{\mu}_n=s,\ \bm{\omega}^{\top}\bm{e}=1}\bm{a}^{\top}\bm{\Sigma}\bm{a}
 =
\bm{\omega}_s^{\top}\bm{\Sigma}_n\bm{\omega}_s-2\bm{b}^{\top}\bm{\omega}_s+\sigma_0^2, 
 $
where $\bm{\omega}_s$ is given in \eqref{w_s}. Thus, problem \eqref{min-min-s} reduces to  
\be
\label{min-s}
  \min_{ s \le -r_0} \  h_{m, \kappa} \Big (s-\mu_0,\sqrt{\bm{\omega}_s^{\top}\bm{\Sigma}_n\bm{\omega}_s-2\bm{b}^{\top}\bm{\omega}_s+\sigma_0^2 } \Big )
\ee
and  the optimal solution $\bm{\omega}^*$ to  problem  \eqref{mean-upm-M} is equal to $\bm{\omega}_{s^*}$, where $s^*$ is the solution to 
 problem \eqref{min-s}. In other words, 
 \be
 \label{s*<-r0-gmk} 
   s^*=\argmin_{s \le -r_0} \ g_{m, \kappa} (s).
 \ee
 where the function $g_{m,\kappa}(s)$ is defined as 
\be
g_{m,\kappa}(s) = h_{m, \kappa} \Big (s-\mu_0,\sqrt{\bm{\omega}_s^{\top}\bm{\Sigma}_n\bm{\omega}_s-2\bm{b}^{\top}\bm{\omega}_s+\sigma_0^2 } \Big )
=h_{m, \kappa} \big (s-\mu_0, \, t(s) \big ).
\ee

 \begin{enumerate}[(i)]
   \item Consider problem \eqref{mean-upm-M} for $m=1$. By Lemma \ref{lemma-jan-1977} and the definition of   $h_{m,\kappa}(s,t)$, we have 
   \by 
   g_{1, \kappa}(s)=s-\mu_0+\frac{\kappa}{2}\Big (s-\alpha^*+\sqrt{(s-\alpha^*)^2+t^2(s)} \Big ).
   \ey
          Note  that  $(s-\alpha^*)^2+t^2(s) = (b_0+1)s^2-2(b_0c_3-b_1c_4+b_1+\alpha^*)s+C,$
where 
$$C=b_0c_3^2-2b_1c_3(c_4-1)+b_2(c_4-1)^2-c_5+\sigma_0^2+\alpha^{*2}.$$
  By differentiating $g_{1, \kappa}(s)$, we obtain 
\begin{equation}\label{h'}
    g_{1,\kappa}'(s)=1+\frac{\kappa}{2}\Big(1+\frac{(b_0+1)s-(b_0c_3-b_1c_4+b_1+\alpha^*)}{\sqrt{(s-\alpha^*)^2+t^2(s)  }}\Big)
\end{equation}

\begin{enumerate}[(a)]
    \item If  $\kappa> \kappa_0$,      it is easy to verify that  $g_{1,\kappa}''(s)>0$ for all $s\in\mathbb{R}$. Moreover,
    the following  equation 
      \begin{equation}\label{eq:h'}
        g_{1,\kappa}'(s)=1+\frac{\kappa}{2}\Big(1+\frac{(b_0+1)s-(b_0c_3-b_1c_4+b_1+\alpha^*)}{\sqrt{(s-\alpha^*)^2+t^2(s)  }}\Big)=0
\end{equation}
has a unique solution \begin{equation}\label{sol-m1-s}
    s_{1,\kappa} =\frac{c_3b_0-b_1(c_4-1)+\alpha^*}{b_0+1}-\frac{\kappa+2}{b_0+1}\sqrt{\frac{q}{b_0\kappa^2-4(\kappa+1)}},
\end{equation}
where $q$ is given below equation  \eqref{w1*}.
 Hence, $s_{1,\kappa}$ is the global minimizer of $g_{1,\kappa}(s)$ on $\R$. Therefore, $s^* = \min\{-r_0, s_{1, \kappa}\}$ is 
 the unique global optimal solution of problem \eqref{s*<-r0-gmk} since $g_{1,\kappa}$ is strictly decreasing on $(-\infty, s_{1,\kappa})$ and strictly increasing on $(s_{1,\kappa}, \ \infty)$. Thus, we obtain \eqref{w1*}. 
    
        \item 
If  $\kappa\leq \kappa_0$, we  verify that  $g_{1,\kappa}'(s)>0$ for all $s\in\mathbb{R}$; 
 in other words, $g_{1,\kappa}(s)$ is strictly increasing in $s\in\mathbb{R}$. Therefore,   problem \eqref{s*<-r0-gmk} has  no  solution.

\end{enumerate}
Combining the above cases, we complete the proof for  $m=1$.

\item Consider problem \eqref{mean-upm-M} for  $m=2$. By Lemma \ref{lemma-jan-1977} and the definition of   the function $h_{m,\kappa}(s)$, 
it holds  that 
$
g_{2,\kappa}(s)=s-\mu_0+\kappa \Big ((s-\alpha^*)_+^2+t^2(s)\Big).
$
 Since the leading coefficient $b_0$ of  the  quadratic function $t^2(s)$ is positive,  $t^2(s)$ is strictly convex for all $s \in \R$.
 
 In addition,  $(s-\alpha^*)_+^2$ is convex in $s \in \R$, and the function $s$ is linear and thus convex. Therefor,  $g_{2,\kappa}(s)$  is strictly convex in 
 $s \in \R$. 
            Furthermore, for any $s \in \R$, 
       $g_{2, \kappa}(s) \geq s-\mu_0 +\kappa \ t^2(s).$
   Since   the   leading coefficient $\kappa b_0$ of  the  quadratic function $s-\mu_0+ \kappa \ t^2(s)$ is positive, we have that  
   $s-\mu_0+ \kappa \ t^2(s)$ is coercive 
     on $s \in \R$.    Consequently,  $g_{2,\kappa}(s)$ is also coercive on $s\in\mathbb{R}$. Therefore,  $g_{2,\kappa}(s)$ has a unique 
      global minimizer $s_{2, \kappa}$ on  $\mathbb{R}$ as it is coercive and strictly convex on $\R$.
      
      Note that 
    \begin{equation}
    \begin{aligned}
        g_{2,\kappa}(s)=\begin{cases}
       u(s) \triangleq  s+\kappa \ t^2(s), \ &s\leq\alpha^*,\\
       v(s) \triangleq  s+\kappa \ \big((s-\alpha^*)^2+t^2(s)\big), \ &s>\alpha^*,
    \end{cases}
    \end{aligned}
\end{equation}

 If $s_{2,\kappa} \le \alpha^*$, then $s_{2,\kappa} = \alpha^*$  or   $s_1$, where $s_1$ is the unique solution  to the equation 
  $
  u'(s)=1+2\kappa\big(b_0s-(b_0c_3-b_1c_4+b_1)\big)=0, 
  $
  which gives 
$$s_1 = \frac{2b_0c_3\kappa-2(c_4-1)b_1\kappa-1}{2b_0\kappa}.$$

 If $s_{2,\kappa} > \alpha^*$, then $s_{2,\kappa} = s_2$ is  the unique solution  to equation 
  $
v'(s)=1+2\kappa\big((b_0+1)s-(b_0c_3-b_1c_4+b_1+\alpha^*)\big)=0,
$
which yields 
 $$s_2 = \frac{2b_0c_3\kappa-2(c_4-1)b_1\kappa+2\kappa\alpha^*-1}{2\kappa(b_0+1)}.$$

Note that if $s_1> \alpha^*$, then $s_{2,\kappa}=s_2$ and
\by
s_1-\frac{(s_1-\alpha^*)_+}{b_0+1} =s_1-\frac{s_1-\alpha^*}{b_0+1} =s_2.
\ey
Therefore, the unique 
      global minimizer $s_{2,\kappa}$ of function $g_{2,\kappa}(s)$ over $\mathbb{R}$ can be expressed as 
\by
s_{2,\kappa}
=s_1-\frac{(s_1-\alpha^*)_+}{b_0+1}.
\ey
 Consequently, $s^* = \min\{-r_0, s_{2, \kappa}\}$ is 
 the unique  solution of problem \eqref{s*<-r0-gmk},  since $g_{2,\kappa}$ is strictly decreasing on $(-\infty, s_{2,\kappa})$ and strictly increasing on $(s_{2,\kappa}, \ \infty)$. Thus, we obtain \eqref{w2*} by  denoting  $s_1$ as $s_\kappa$ to indicate its  
 dependence on $\kappa$.  
       \end{enumerate}
\end{proof}

\begin{proof}[\textbf{Proof of Theorem \ref{pro-upm-M-sol}}]
The proof of Proposition \ref{pro-upm-M-sol} follows from the same arguments as those used 
for Theorem \ref{theo-mean-upm}, so the details are omitted. In fact, Proposition \ref{pro-upm-M-sol}(i) 
and (ii) correspond to the limits of Theorem \ref{theo-mean-upm}(i) and (ii), respectively, as $\kappa \to \infty$.
\end{proof}

\begin{proof}[\textbf{Proof of Theorem \ref{th-EIT-upm-cop}}]
First,  by  the definitions of  sets  ${\cal D}_{n+1}(\gamma_1, \gamma_2)$,  
 $ \mathcal{M}_{n+1}(\bm{\mu},\bm{\Sigma})$, and   $ \mathcal{U}_{(\hat{\bm{\mu}},\hat{\bm{\Sigma}})}$, 
 it is easy to see  that 
    \be
    \label{fun-D}
        && \sup_{(G,F)\in {\cal D}_{n+1}(\gamma_1, \gamma_2)}\mathbb{E}[\bm{\omega}^{\top}\bm{X}^G]+\kappa\  \sup_{(G,F)\in {\cal D}_{n+1}(\gamma_1, \gamma_2)}  \mathbb{E}\big[(\bm{\omega}^{\top}\bm{X}^G-X_0^F-\alpha)_{+}^m\big] \nonumber \\
        &=&\sup_{(\bm{\mu},\bm{\Sigma})\in \mathcal{U}_{(\hat{\bm{\mu}},\hat{\bm{\Sigma}})}} \sup_{(G,F) \in \mathcal{M}_{n+1}(\bm{\mu},\bm{\Sigma})} \mathbb{E}[\bm{\omega}^{\top}\bm{X}^G]+\kappa \ \sup_{(\bm{\mu},\bm{\Sigma})\in \mathcal{U}_{(\hat{\bm{\mu}},\hat{\bm{\Sigma}})}} \sup_{(G,F) \in \mathcal{M}_{n+1}(\bm{\mu},\bm{\Sigma})}   \mathbb{E}\big[(\bm{\omega}^{\top}\bm{X}^G-X_0^F-\alpha)_{+}^m\big] \nonumber \\
      &=&\sup_{(\bm{\mu},\bm{\Sigma})\in \mathcal{U}_{(\hat{\bm{\mu}},\hat{\bm{\Sigma}})}} \bm{\omega}^{\top}\bm{\mu}_n+\kappa  \sup_{(\bm{\mu},\bm{\Sigma})\in \mathcal{U}_{(\hat{\bm{\mu}},\hat{\bm{\Sigma}})}}  \ f_m \Big (\bm{a}^{\top}\bm{\mu},\ \sqrt{\bm{a}^{\top} {\bm{\Sigma}}\bm{a}} \Big )  \nonumber\\
       & =& \bm{\omega}^{\top}\hat{\bm{\mu}}_n+\sqrt{\gamma_1}\sqrt{\bm{\omega}^{\top}\hat{\bm{\Sigma}}_n\bm{\omega}}
+  \kappa \,  \sup_{\bm{\mu}\in \mathcal{U}_{\hat{\bm{\mu}}}} \sup_{\bm{\Sigma}\in \mathcal{U}_{\hat{\bm{\Sigma}}}} f_m \Big (\bm{a}^{\top}\bm{\mu},\ \sqrt{\bm{a}^{\top}{\bm{\Sigma}}\bm{a}} \Big ),
\ee
where the second equality follows from the definition of function $f_m(\mu, \sigma)$ defined in \eqref{f-mu-sigma} and the general projection property, and the last equality follows from equation (7) in \citet{kang2022robust}. 
  
For any given $\bm{\mu} \in\mathcal{U}_{\hat{\bm{\mu}}}$,  
the function $f_m \Big (\bm{a}^{\top}\bm{\mu},\ \sqrt{\bm{a}^{\top}\bm{\Sigma}\bm{a}} \Big )$ is 
increasing and continuous in  $\bm{a}^{\top}\bm{\Sigma}\bm{a}$. Hence, 
\begin{equation*}
    \sup_{\bm{\Sigma}\in \mathcal{U}_{\hat{\bm{\Sigma}}}} 
        f_{m} \Big (\bm{a}^{\top}\bm{\mu}, \ 
       \sqrt{\bm{a}^{\top}\bm{\Sigma}\bm{a}} \Big ) =
        f_{m} \Big (\boldsymbol{a}^{\top}\boldsymbol{\mu}, \ 
        \sqrt{\sup_{\bm{\Sigma}\in \mathcal{U}_{\hat{\bm{\Sigma}}}} \, \bm{a}^{\top}\bm{\Sigma}
        \boldsymbol{a}} \Big )
        =
         f_{m}\Big(\bm{a}^{\top}\bm{\mu}, \ 
        \sqrt{\bm{a}^{\top}(\hat{\bm{\Sigma}}+\gamma_2\bm{I}_{n+1})
        \bm{a}} \Big),
\end{equation*}
where  the last equality follows from equation (8) in \citet{kang2022robust}.  Thus, equation  \eqref{fun-D} reduces  to 
\begin{equation}
 \bm{\omega}^{\top}\hat{\bm{\mu}}_n+\sqrt{\gamma_1}\sqrt{\bm{\omega}^{\top}\hat{\bm{\Sigma}}_n\bm{\omega}}
+  \kappa \,   \sup_{\bm{\mu}\in \mathcal{U}_{\hat{\bm{\mu}}}} f_{m}\Big(\bm{a}^{\top}\bm{\mu}, \ 
        \sqrt{\bm{a}^{\top}(\hat{\bm{\Sigma}}+\gamma_2\bm{I}_{n+1})
        \bm{a}} \Big ).
\end{equation}
Moreover,  function $f_{m}\Big(\bm{a}^{\top}\bm{\mu}, \ 
        \sqrt{\bm{a}^{\top}(\hat{\bm{\Sigma}}+\gamma_2\bm{I}_{n+1})
        \bm{a}} \Big )$
        is increasing and continuous in  $\bm{a}^{\top}\bm{\mu}$, hence, 
\by
    \sup_{\boldsymbol{\mu}\in \mathcal{U}_{\hat{\boldsymbol{\mu}}}} 
        f_{m}\Big (\boldsymbol{a}^{\top}\boldsymbol{\mu}, \ 
       \sqrt{\bm{a}^{\top}(\hat{\bm{\Sigma}}+\gamma_2\bm{I}_{n+1})
        \bm{a}} \Big ) &=&
        f_{m} \Big (\sup_{\bm{\mu}\in \mathcal{U}_{\hat{\bm{\mu}}}}\bm{a}^{\top}\bm{\mu}, \ 
        \sqrt{\bm{a}^{\top}(\hat{\bm{\Sigma}}+\gamma_2\bm{I}_{n+1})
        \bm{a}} \Big )\\
      &=&f_{m} \Big (\bm{\omega}^{\top}\hat{\bm{\mu}}_n+\sqrt{\gamma_1}\sqrt{\bm{\omega}^{\top}\hat{\bm{\Sigma}}_n\bm{\omega}}-\hat{\mu}_0, \ 
        \sqrt{\bm{a}^{\top}(\hat{\bm{\Sigma}}+\gamma_2\bm{I}_{n+1})
        \bm{a}} \Big ),
        \ey
 where the second equation comes from $\bm{a}^{\top}\bm{\mu} = \bm{\omega}^{\top}\bm{\mu}_n-\hat{\mu}_0$, 
 together with equation (7) of   \citet{kang2022robust}.   
    
  For the constraint on the worst-case portfolio return in ${\cal W}_r$ for problem \eqref{mean-rho-D},
for any given $\bm{\mu} \in \mathcal{U}_{\hat{\bm{\mu}}}$, we have  
\by
 \min_{(G, F) \in {\cal D}_{n+1}(\gamma_1, \gamma_2)}\mathbb{E}[-\boldsymbol{\omega}^{\top}\boldsymbol{X}^G] 
 &=& -   \max_{(G, F) \in {\cal D}_{n+1}(\gamma_1, \gamma_2)}\mathbb{E}[\boldsymbol{\omega}^{\top}\boldsymbol{X}^G] \\
&=& -  \sup_{(\bm{\mu},\bm{\Sigma})\in \mathcal{U}_{(\hat{\bm{\mu}},\hat{\bm{\Sigma}})}} \sup_{(G,F) \in \mathcal{M}_{n+1}(\bm{\mu},\bm{\Sigma})} 
     \mathbb{E}[\boldsymbol{\omega}^{\top}\boldsymbol{X}^G] \\
     &=& -\sup_{(\bm{\mu},\bm{\Sigma})\in \mathcal{U}_{(\hat{\bm{\mu}},\hat{\bm{\Sigma}})}} \bm{\omega}^{\top}\bm{\mu}_n
      =   - \bm{\omega}^{\top}\hat{\bm{\mu}}_n-\sqrt{\gamma_1}\sqrt{\bm{\omega}^{\top}\hat{\bm{\Sigma}}_n\bm{\omega}}. 
                \ey

Therefore,  problem \eqref{mean-rho-D} is equivalent to the following problem:
\begin{equation}
\label{min-mean-f_m} 
\begin{aligned}       
    \min_{\bm{\omega} \in \cal W}  \, &  \bm{\omega}^{\top}\hat{\bm{\mu}}_n+\sqrt{\gamma_1}\sqrt{\bm{\omega}^{\top}\hat{\bm{\Sigma}}_n\bm{\omega}}
+  \kappa \, f_{m}\Big (\bm{\omega}^{\top}\hat{\bm{\mu}}_n+\sqrt{\gamma_1}\sqrt{\bm{\omega}^{\top}\hat{\bm{\Sigma}}_n\bm{\omega}}-\hat{\mu}_0, \ 
        \sqrt{\bm{a}^{\top}(\hat{\bm{\Sigma}}+\gamma_2\bm{I}_{n+1})
        \bm{a}} \Big ) \\
    \textrm{s.t.} \ 
  &- \bm{\omega}^{\top}\hat{\bm{\mu}}_n-\sqrt{\gamma_1}\sqrt{\bm{\omega}^{\top}\hat{\bm{\Sigma}}_n\bm{\omega}} \geq r,\ \ \  \bm{\omega}^{\top}\bm{e}=1.
\end{aligned}
\end{equation}

By the definition of  function $h_m(\mu, \sigma)$ given in \eqref{h_m_k} and noting that $\hat{\mu}_0$ is constant, 
we have that  problem \eqref{min-mean-f_m} is equivalent 
 to the following problem:
       \begin{equation}
       \label{min-h_m_k-D}
\begin{aligned}       
   & \min_{\bm{\omega} \in {\cal W}}  \, 
 h_{m, \kappa}\Big (\bm{\omega}^{\top}\hat{\bm{\mu}}_n+\sqrt{\gamma_1}\sqrt{\bm{\omega}^{\top}\hat{\bm{\Sigma}}_n\bm{\omega}}-\hat{\mu}_0, \ 
        \sqrt{\bm{a}^{\top}(\hat{\bm{\Sigma}}+\gamma_2\bm{I}_{n+1})
        \bm{a}} \Big ) \\
  &   \textrm{s.t.} \ - \bm{\omega}^{\top}\hat{\bm{\mu}}_n-\sqrt{\gamma_1}\sqrt{\bm{\omega}^{\top}\hat{\bm{\Sigma}}_n\bm{\omega}} \geq r,\ \  \bm{\omega}^{\top}\bm{e}=1.
\end{aligned}
\end{equation}

By Proposition \ref{pro-f-h},  the function $h_{m,\kappa}(\mu, \sigma) $ is  increasing and continuous  in each argument.
Thus, by introducing auxiliary variables $s, t \in\mathbb{R}^+$ and noting that $\alpha^*=\alpha+\hat{\mu}_0$,  problem 
\eqref{min-h_m_k-D}  is equivalent to  
  \begin{equation}
       \label{min-h_m_k-D-s-t}
\begin{aligned}       
    \min_{\bm{\omega} \in {\cal W}, \, s, \, t \in \R^+} \, 
 & \ h_{m, \kappa}\Big (\bm{\omega}^{\top}\hat{\bm{\mu}}_n+\sqrt{\gamma_1} \, t -\hat{\mu}_0, \ 
        s \Big ) \\
      \textrm{s.t.}  & \ \sqrt{\bm{a}^{\top}(\hat{\bm{\Sigma}}+\gamma_2 \bm{I}_{n+1})\bm{a}} \leq s,\\
        & \ \sqrt{\bm{\omega}^{\top}\hat{\bm{\Sigma}}_n\bm{\omega}}\leq t,\\
        & \  - \bm{\omega}^{\top}\hat{\bm{\mu}}_n-\sqrt{\gamma_1}\sqrt{\bm{\omega}^{\top}\hat{\bm{\Sigma}}_n\bm{\omega}} \geq r,\  \ \bm{\omega}^{\top}\bm{e}=1.      
        \end{aligned}
\end{equation}

By Proposition \ref{pro-f-h},  the function $h_{m,\kappa}(\mu, \sigma) $ is convex in $(\mu,\sigma) \in \R \times \R^+$. 
Moreover,  
convexity is preserved under  affine mappings. Therefore, the objective  function in problem \eqref{min-h_m_k-D-s-t}, namely, 
$h_{m, \kappa}\big (\bm{\omega}^{\top}\hat{\bm{\mu}}_n+\sqrt{\gamma_1} \, t -\hat{\mu}_0, \ 
        s \big ) $, is convex in $(\bm{\omega}, s, t)$.
    
        In addition,            for $m=1$, we have 
\begin{equation}
\label{min-h1k}
    \begin{aligned}
        &h_{1, \kappa} (\bm{\omega}^{\top}\hat{\bm{\mu}}_n+\sqrt{\gamma_1} \, t-\hat{\mu}_0, s)\\
        =& \  \bm{\omega}^{\top}\hat{\bm{\mu}}_n+\sqrt{\gamma_1} \,  t + \frac{\kappa}{2}(\bm{\omega}^{\top}\hat{\bm{\mu}}_n+\sqrt{\gamma_1} \, t-\hat{{\mu}}_0-\alpha)+\frac{\kappa}{2}\sqrt{(\bm{\omega}^{\top}\hat{\bm{\mu}}_n+\sqrt{\gamma_1} \, t-\hat{\mu}_0-\alpha)^2+s^2}\\
        =& \  (1+\frac{\kappa}{2})(\bm{\omega}^{\top}\hat{\bm{\mu}}_n+\sqrt{\gamma_1} \, t) - \frac{\kappa}{2}\alpha^*+\frac{\kappa}{2}\sqrt{(\bm{\omega}^{\top}\hat{\bm{\mu}}_n+\sqrt{\gamma_1} \, t-\alpha^*)^2+s^2}.
    \end{aligned}
\end{equation}
Note that  $\alpha^*$ and $\kappa >0$ are  constants, minimizing the objective function in the second equation of \eqref{min-h1k} 
is therefore equivalent to minimizing  
$ \bm{\omega}^{\top}\hat{\bm{\mu}}_n+\sqrt{\gamma_1} \, t +\frac{\kappa}{2+\kappa}\sqrt{(\bm{\omega}^{\top}\hat{\bm{\mu}}_n+\sqrt{\gamma_1} \, t-\alpha^*)^2+s^2},$ since minimizing a function is equivalent to minimizing any of 
its strictly increasing transformations. Together with \eqref{min-h_m_k-D-s-t}, this  yields \eqref{ob-m=1}. 

For  $m=2$, we have 
   $      h_{2, \kappa} (\bm{\omega}^{\top}\hat{\bm{\mu}}_n+\sqrt{\gamma_1} \,  t-\hat{\mu}_0, s) 
        = \ \bm{\omega}^{\top}\hat{\bm{\mu}}_n+\sqrt{\gamma_1} \, t + \kappa \big((\bm{\omega}^{\top}\hat{\bm{\mu}}_n+\sqrt{\gamma_1} \, t-\alpha^*)_+^2+s^2\big),
        $
which, together with \eqref{min-h_m_k-D-s-t}, yields \eqref{ob-m=2}. This completes 
 the proof of Theorem \ref{th-EIT-upm-cop}.
\end{proof}

\begin{proof}[\textbf{Proof of Proposition \ref{th-EIT-upm-cop-only}}]
The proof of Proposition \ref{th-EIT-upm-cop-only}    follows from the same arguments as those used for 
Theorem \ref{th-EIT-upm-cop}, so the details are omitted. In fact, Proposition \ref{th-EIT-upm-cop-only}(i) and (ii) correspond to the limits of Theorem  \ref{th-EIT-upm-cop}
 (i) and (ii), respectively, as $\kappa \to \infty$.
\end{proof}

\begin{proof}[\textbf{Proof of Proposition \ref{pro-ob-MV-model-D-g1-g2}}]
(i)  First, by  the definitions of  ${\cal D}_{n+1}(\gamma_1, \gamma_2)$,  
 $ \mathcal{M}_{n+1}(\bm{\mu},\bm{\Sigma})$, and   $ \mathcal{U}_{(\hat{\bm{\mu}},\hat{\bm{\Sigma}})}$, for any given $\bm{\omega} \in \cal W$, 
 it holds that 
    \by
        && \sup_{(G,F)\in {\cal D}_{n+1}(\gamma_1, \gamma_2)} 
        \mathbb{E}[\bm{\omega}^{\top}\bm{X}^G] 
        + \kappa\  \sup_{(G,F)\in {\cal D}_{n+1}(\gamma_1, \gamma_2)} 
        Var(\bm{\omega}^{\top}\bm{X}^G) \nonumber \\
        &=&\sup_{(\bm{\mu},\bm{\Sigma})\in \mathcal{U}_{(\hat{\bm{\mu}},\hat{\bm{\Sigma}})}} \sup_{(G,F) \in \mathcal{M}_{n+1}(\bm{\mu},\bm{\Sigma})} \mathbb{E}[\bm{\omega}^{\top}\bm{X}^G] 
        +\kappa \ \sup_{(\bm{\mu},\bm{\Sigma})\in \mathcal{U}_{(\hat{\bm{\mu}},\hat{\bm{\Sigma}})}} \sup_{(G,F) \in \mathcal{M}_{n+1}(\bm{\mu},\bm{\Sigma})}   
        Var(\bm{\omega}^{\top}\bm{X}^G) \nonumber \\
      &=&\sup_{(\bm{\mu},\bm{\Sigma})\in \mathcal{U}_{(\hat{\bm{\mu}},\hat{\bm{\Sigma}})}} \bm{\omega}^{\top}\bm{\mu}_n+\kappa  \sup_{(\bm{\mu},\bm{\Sigma})\in \mathcal{U}_{(\hat{\bm{\mu}},\hat{\bm{\Sigma}})}} 
      \   \bm{\omega}^{\top}\bm{\Sigma}_n\bm{\omega}  \nonumber \\
             & =& \bm{\omega}^{\top}\hat{\bm{\mu}}_n+\sqrt{\gamma_1}\sqrt{\bm{\omega}^{\top}\hat{\bm{\Sigma}}_n\bm{\omega}}
+  \kappa \, 
 \bm{\omega}^{\top}\big (\hat{\bm{\Sigma}}_n
+\gamma_2\bm{I}_{n} \big )\bm{\omega}  = f_\kappa \Big (\bm{\omega}^{\top} \hat{\bm{\mu}}_n+\sqrt{\gamma_1}\sqrt{\bm{\omega}^{\top}\hat{\bm{\Sigma}}_n\bm{\omega}}, \ 
        \sqrt{\bm{\omega}^{\top}(\hat{\bm{\Sigma}}_n+\gamma_2\bm{I}_{n})
        \bm{\omega}} \Big ),
        \ey
where the   third  equality follows from (7) in \citet{kang2022robust}, and function $f_\kappa(x, y)$ is defined as $f_\kappa(x, y) = x+\kappa y^2$ for $(x, y) \in \R \times \R^+$.   Therefore,  problem \eqref{MV-model-D-g1-g2} is equivalent to the following problem: 
  \begin{equation}
\label{min-f_k-D} 
\begin{aligned}       
    \min_{\bm{\omega} \in {\cal W}}  \, &  \, f_{\kappa}\Big (\bm{\omega}^{\top}\hat{\bm{\mu}}_n+\sqrt{\gamma_1}\sqrt{\bm{\omega}^{\top}\hat{\bm{\Sigma}}_n\bm{\omega}}, \ 
        \sqrt{\bm{\omega}^{\top}(\hat{\bm{\Sigma}}_n+\gamma_2\bm{I}_{n})
        \bm{\omega}} \Big )\\
    \textrm{s.t.} \ 
  &- \bm{\omega}^{\top}\hat{\bm{\mu}}_n-\sqrt{\gamma_1}\sqrt{\bm{\omega}^{\top}\hat{\bm{\Sigma}}_n\bm{\omega}} \geq r,\ \ \  \bm{\omega}^{\top}\bm{e}=1.
\end{aligned}
\end{equation}
Note that  function $f_\kappa(x, y)$ is  increasing and continuous  in each of its arguments on  $(x, y) \in \R \times \R^+$. 
Thus, by introducing auxiliary variables $s, t \in\mathbb{R}^+$,   problem 
\eqref{min-f_k-D} is reformulated as 
  \begin{equation}
       \label{min-f_k-MV-D} 
\begin{aligned}       
    \min_{\bm{\omega} \in {\cal W}, \, s, \, t \in \R^+}  \, 
 & \ f_{\kappa}\Big (\bm{\omega}^{\top}\hat{\bm{\mu}}_n+\sqrt{\gamma_1} \, t, \ 
        s \Big ) \\
      \textrm{s.t.}  & \ \sqrt{\boldsymbol{\omega}^{\top}(\hat{\boldsymbol{\Sigma}}_n+\gamma_2 \boldsymbol{I}_{n})\boldsymbol{\omega}} \leq s,\\
        & \ \sqrt{\bm{\omega}^{\top}\hat{\bm{\Sigma}}_n\bm{\omega}}\leq t,\\
        & \  - \bm{\omega}^{\top}\hat{\bm{\mu}}_n-\sqrt{\gamma_1}\sqrt{\bm{\omega}^{\top}\hat{\bm{\Sigma}}_n\bm{\omega}} \geq r, \ \ \boldsymbol{\omega}^{\top}\boldsymbol{e}=1.      
        \end{aligned}
\end{equation}

Obviously, function $f_{\kappa}(\mu, \sigma) $ is convex in $(\mu,\sigma)$.  Moreover,  
convexity is preserved under  affine mappings. Therefore, the objective  function 
$f_{\kappa}\big (\bm{\omega}^{\top}\hat{\bm{\mu}}_n+\sqrt{\gamma_1} \, t, \ 
        s \big ) $ in problem \eqref{min-f_k-MV-D}  is convex in $(\bm{\omega}, s, t)$.
       Note that  
      $
        f_{\kappa} (\bm{\omega}^{\top}\hat{\bm{\mu}}_n+\sqrt{\gamma_1} \, t, s)
        = \  \bm{\omega}^{\top}\hat{\bm{\mu}}_n+\sqrt{\gamma_1} \, t + \kappa s^2.$ 
          This completes the proof of Proposition \ref{pro-ob-MV-model-D-g1-g2}(i).   
  
   (ii) The proof of Proposition  \ref{pro-ob-MV-model-D-g1-g2}(ii)   
       follows from the same arguments as those used for  Proposition  \ref{pro-ob-MV-model-D-g1-g2}(i), 
       so the details are omitted. In fact,  Proposition  \ref{pro-ob-MV-model-D-g1-g2}(ii)     
       corresponds to the limit of   Proposition  \ref{pro-ob-MV-model-D-g1-g2}(i)   as $\kappa \to \infty$.   
\end{proof}

\end{document}